\numberwithin{equation}{section}
\newtheorem{thm}{Theorem}[section]
\newtheorem{prop}[thm]{Proposition}
\newtheorem{lemma}[thm]{Lemma}
\newtheorem{oss}[thm]{Remark}
\def\s{\sigma}
\def\C{{\mathbb C}}
\def\eqref#1{ (\ref{#1})}
\def\&{&\hspace{-20pt}}
\def\R{{\mathbb R}}
\def\d{{\rm d}}
\def\1{\operatorname{Id}}
\def\wt{\widetilde}
\def\bea{\begin{eqnarray}}
\def\eea{\end{eqnarray}}
\def\Id{\mathrm{Id}}
\def\Ai{\mathrm{Ai}}
\def\0{{\bf 0}}
\def\Ai{\mathrm{Ai}}
\def\le{\left}
\def\ri{\right}
\begin{document}
\title{Asymptotics of the Tacnode process: a transition between the gap probabilities from the Tacnode to the Airy process}
\author{Manuela Girotti
\thanks {email: \texttt{mgirotti@mathstat.concordia.ca}}}
\affil{\small Department of Mathematics and Statistics, Concordia University \\
1455 de Maisonneuve Ouest, Montr\'eal, Qu\'ebec, Canada, H3G 1M8}
\date{}

\maketitle

\tableofcontents

\begin{abstract}
We study the gap probabilities of the single-time Tacnode process. Through steepest descent analysis of a suitable Riemann-Hilbert problem, we show that under appropriate scaling regimes the gap probability of the Tacnode process degenerates into a product of two independent gap probabilities of the Airy processes.
\end{abstract}

\section{Introduction}

Determinantal point processes (\cite{Soshnikov}) have played a central role in recent developments of many random models such as random matrix theory, random growth and tiling problems (see for example \cite{DeiftJohansson}, \cite{Johansson2}, \cite{Johansson}).

One of the most well-known model for determinantal point process is the case of $n$ Brownian particles moving on the real line, conditioned never to intersect, with given starting and ending configuration. Let's assume that all the particles start at two given fixed points and end at two other points (which may coincide with the starting points). As time runs in an interval $t \in [0,1]$ ($1$ being the end time where the particles collapse in the two final points), the particles remain confined in a specific region and for every time $t\in [0,1]$ the positions of the Brownian paths form a determinantal process.
Moreover, as the number of particles tends to infinity, such region takes on an explicit shape which depends on the relative position of the starting and ending points and on a parameter $\sigma$ which controls the strength of interaction between the left-most particles and the right-most ones ($\sigma$ can be thought as a pressure or temperature parameter).

There are two possible scenarios: two independent connected components similar to ellipses, one connected component similar to two ``merged" ellipses (see \figurename \ \ref{3cases}, case $(a)$ and $(b)$). 
It is well-known that the microscopic behaviour of such infinite particle system is regulated by the Sine process in the bulk of the particle bundles (\cite{Mehta}), by the Airy process along the soft edges (\cite{Johansson}, \cite{JohanssonAiry}, \cite{TWAiry}) and by the Pearcey process in the cusp singularity (\cite{TWPearcey}), when it occurs. 
\begin{figure}[!h]
\centering
\includegraphics[width=.9\textwidth]{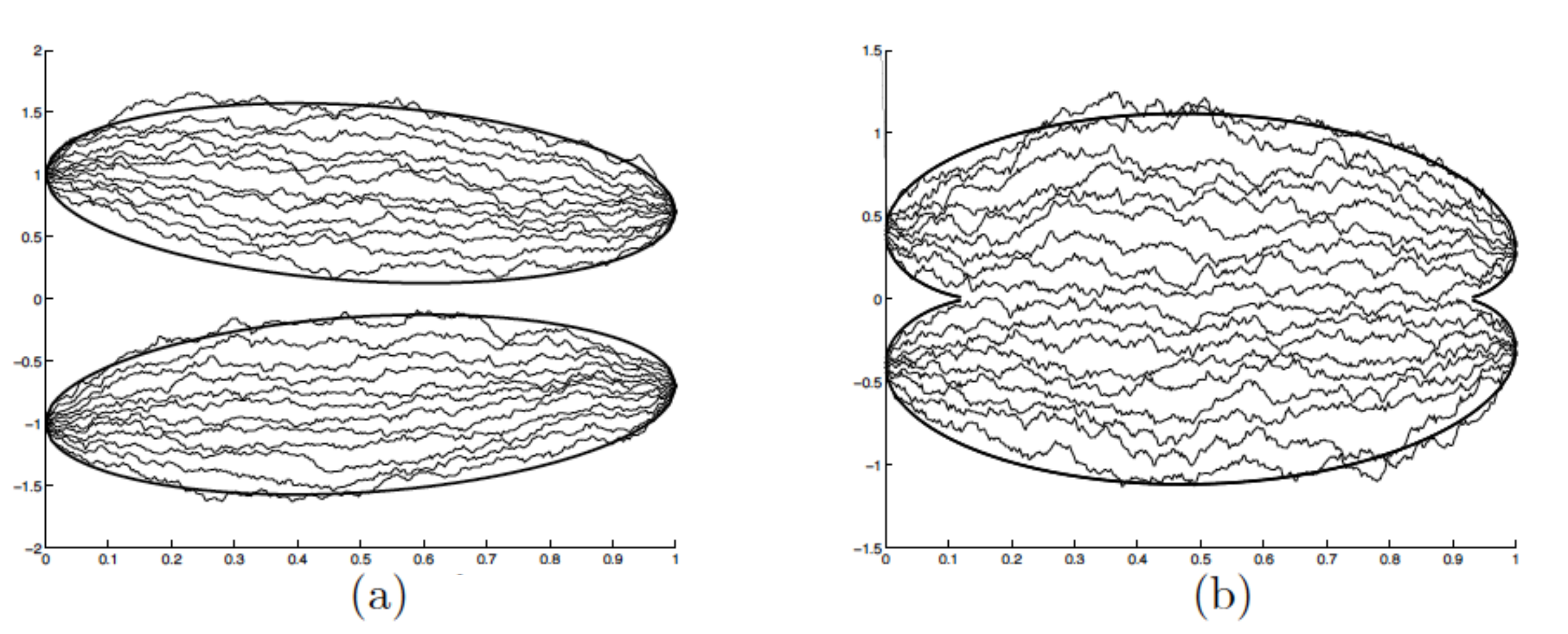}
\begin{center}
\includegraphics[width=.4\textwidth]{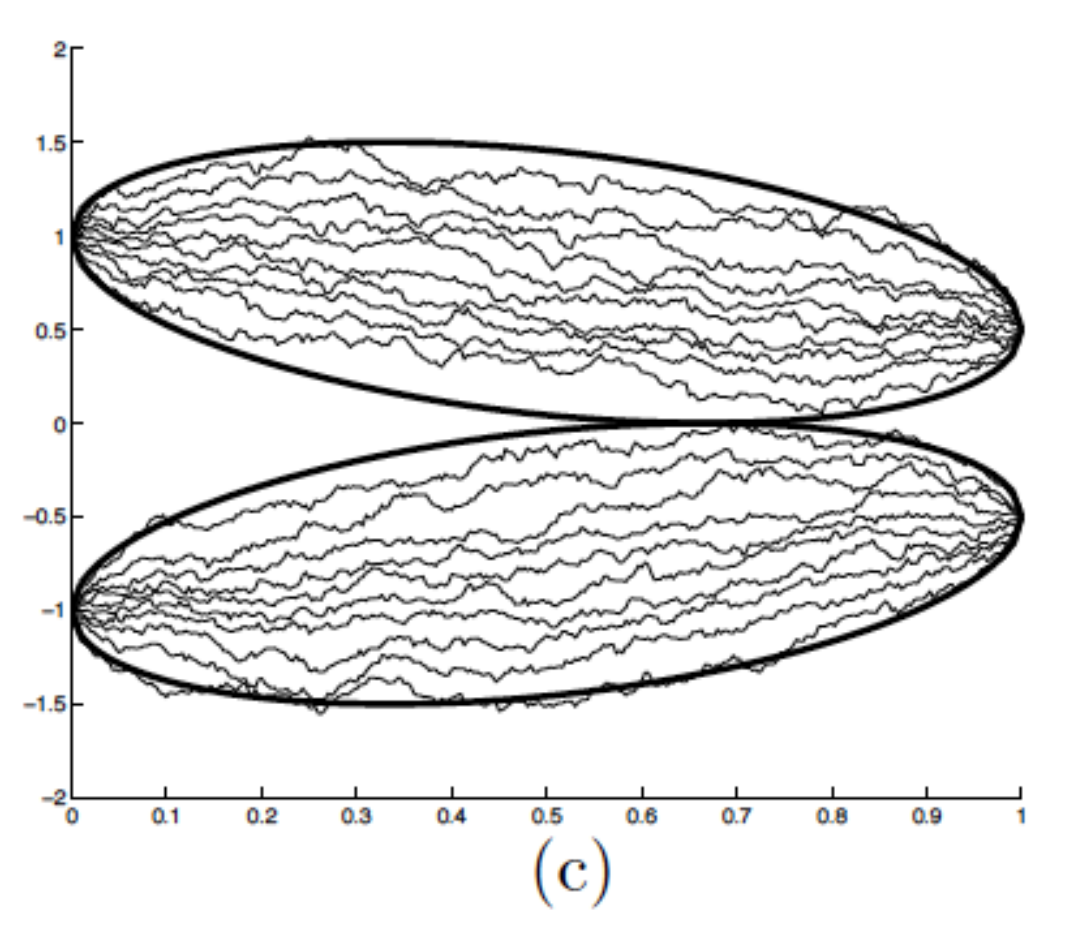}
\end{center}
\caption{Non-intersecting Brownian motions with two starting points $\pm \alpha$ and two ending positions $\pm \beta$ in case of $(a)$ large, $(b)$ small, and $(c)$ critical separation between the endpoints.  
For $n\rightarrow \infty$ the positions of the Brownian motions fill a prescribed region in the time-space plane, which is bounded by the boldface lines in the figures. Here the horizontal axis denotes the time, $t \in [0,1]$, the number of paths is $n = 20$ and $(a)$ $\alpha = 1, \beta = 0.7$, $(b)$ $\alpha = 0.4, \beta = 0.3$, and $(c)$ $\alpha = 1, \beta = 0.5$. Taken from \cite{glorystealer}.}
\label{3cases}
\end{figure}

There exist a third critical configuration, which can be seen as a limit of the large separation case when the two bundles are tangential to each other in one point, called tacnode point (see \figurename \ \ref{3cases}, case $(c)$). In a microscopic neighbourhood of this point the fluctuations of the particles are described by a new critical process called Tacnode process.


The kernel of such process in the single-time case has been first introduced by Kuijlaars \textit{et al.} in \cite{DelvauxKuijZhang}, where the kernel was expressed in terms of a $4\times 4$ matrix valued Riemann-Hilbert problem. Shortly after Kuijlaars's paper, Johansson formulated the multi-time (or extended) version of the process (\cite{JohanssonTac}), remarking nevertheless the fact that this extended version does not automatically reduce to the single-time version given in \cite{DelvauxKuijZhang}. The kernel was expressed in terms of resolvents and Fredholm determinants of the Airy operator acting on a semi-infinite interval $[\sigma, \infty)$. Another version of the multi-time Tacnode process was given in \cite{AdlerFerrariVMoer}. 

In \cite{AdlerJohanssonVMoer} the authors analyzed the same process as arising from random tilings instead of self-avoiding Brownian paths and they proved the equivalency of all the above formulations.
 A similar result has been obtained by Delvaux in \cite{Delvaux}, where a Riemann-Hilbert expression for the multi-time tacnode kernel is given. A more general formulation of this process has been studied in \cite{FerrariVeto}, where the limit shapes of the two groups of particles are allowed to be non-symmetric.
 
 Physically, if we start from the tacnode configuration and we push together the two ellipses, they will merge giving rise to the single connected component in \figurename \ \ref{3cases}$(b)$, while if we pull the ellipses apart, we simply end up with two disjoint ellipses as in \figurename \ \ref{3cases}$(c)$. It is thus natural to expect that the local dynamic around the tacnode point will in either cases degenerates into a Pearcey process or an Airy process, respectively.
 
 The degeneration Tacnode-Pearcey has been proven in \cite{glorystealer} where the authors showed a uniform convergence of the Tacnode kernel to the Pearcey kernel over compact sets in the limit as the pressure parameter diverges to $-\infty$. On the other hand, the method used in \cite{glorystealer} cannot be extensively applied to the Tacnode-Airy degeneration, since the Airy process can be defined also on non-compact sets. 
 
 The purpose of the present paper is to study the asymptotic behaviour of the gap probability of the (single-time) Tacnode process and its degeneration into the gap probability of the Airy process. 
 There are two types of regimes in which this degeneration occurs:
  the limit  as $\sigma \rightarrow + \infty$ (large separation), which physically corresponds to pulling apart the two sets of Brownian particles touching on the tacnode point, and the limit as $\tau \rightarrow \pm \infty$ (large time), which corresponds to moving away from the singular point along the boundary of the space-time region swept out by the non-intersecting paths.

An expression for the single-time tacnode kernel is the following (see \cite[formula (19)]{AdlerJohanssonVMoer}) 
\begin{gather}
\mathbb K^{\rm tac} (\tau; x,y) = \nonumber \\ K^{(\tau,-\tau)}_{\Ai} (\s-x,\s-y)
+ \sqrt[3]{2} \int_{\wt\s}^\infty\d z\int_{\wt \s}^\infty \d w\mathcal A^{\tau}_{x-\s}(w)  \left(\1 - K_{\Ai}\bigg|_{[\wt \s, +\infty)}\right)^{-1}(z,w){\mathcal A}^{-\tau}_{y-\s}(z) \label{KtacAJVM}
\end{gather}
with $\wt \s := 2^\frac 2 3 \s$ and 
\begin{eqnarray}
\Ai^{(\tau)}(x)&\& := 
{\rm e}^{\tau x + \frac 2 3 \tau^3} \Ai( x  ) = \int_{\gamma_R}   \frac {\d \lambda}{2i\pi }{\rm e}^{\frac {\lambda^3}3 + \lambda^2 \tau - x \lambda}\\
\Ai(x) &\& := \int_{\gamma_R}   \frac {\d \lambda}{2i\pi }{\rm e}^{\frac {\lambda^3}3 - x \lambda} =  -  \int_{\gamma_L}   \frac {\d \lambda}{2i\pi }{\rm e}^{-\frac {\lambda^3}3 + x \lambda}\\
 \mathcal A^\tau_x(z) &\& : =
 \Ai^{(\tau)}(x + \sqrt[3]{2} z) - \int_0^\infty \d w\, \Ai^{(\tau)} (-x + \sqrt[3]{2} w) \Ai(w+z) \\
K_{\Ai}^{(\tau,-\tau)} (-x, -y) &\& 
:= \int_0^\infty\d u  \Ai^{(\tau)}(-x+u)\Ai^{(-\tau)}(-y+u) \\
K_{\Ai}(z, w) &\& 
:= \int_0^\infty\!\!\!\!\d u \, \Ai(z+u)\Ai (w+u) 
\end{eqnarray}
where the contour $\gamma_R$ is the contour extending to infinity in the $\lambda$-plane  along the rays ${e}^{\pm i \frac {\pi}{ 3}}$,  oriented upwards and entirely contained in the right half plane ($\Re( \lambda)>0$), and $\gamma_L := -\gamma_R$.

The quantity of interest, i.e. the gap probability of the process, is expressed in terms of the Fredhom determinant of an integral operator with kernel (\ref{KtacAJVM}). Given a Borel set $\mathcal{I}$, then
\begin{equation}
P\left(\text{no particles in } \mathcal{I}\right) = \det \left( \1 - \mathbb{K}^{\rm tac}\bigg|_{\mathcal{I}}\right)\label{gappropTac}
\end{equation}

The first difficulty in studying the Tacnode process is the expression of its kernel, since it is highly transcendental and it involves the resolvent of the Airy operator. It it thus necessary to reduce it to a more approachable form. 

The first important step was \cite[Theorem 3.1]{MeMnum} where it was proved that gap probabilities of the Tacnode process can be defined as ratio of two Fredholm determinants of explicit integral operators with kernels that only involves contour integrals, exponentials and Airy functions. 
This result, which will be recalled in Section \ref{RHsetting}, will be our starting point in the investigation of the gap probabilities and their asymptotics. 
The second step will be to find an appropriate integral operator in the sense of Its-Izergin-Korepin-Slavnov (\cite{IIKS}) whose Fredholm determinant coincides with the quantity (\ref{gappropTac}). In this way, it will be possible to give a formulation of the gap probabilities of the Tacnode in terms of a Riemann-Hilbert (RH) problem, naturally associated to an IIKS integral operator (see \cite{JohnSasha}). Finally, applying well-known steepest descent methods to the above RH problem, we will be able to prove the conjectured degeneration into Airy processes. 

The RH approach for studying gap probabilities has been extensively used in the past years. To cite a few, we recall the study of gap probabilities for the Airy and Pearcey kernels in \cite{MeMmulti} and \cite{MeM} and for the Bessel kernel in \cite{Me}.

The outline of the paper is the following: in Section \ref{results} we state the main results of the paper, which will be proved in Sections \ref{RHsetting}, \ref{Asymsigma} and \ref{Asymtau}. In particular, Section \ref{RHsetting} deals with some preliminary calculations which are necessary to set a Riemann-Hilbert problem on which we shall later perform some steepest descent analysis in the limit as $\sigma \rightarrow \infty$ (Section \ref{Asymsigma}) or $\tau\rightarrow \infty$ (Section \ref{Asymtau}).

\section{Results}\label{results}

The first results on asymptotic regime of the tacnode process were stated in \cite{MeMnum}. We are recalling them here for the sake of completeness.
\begin{thm}
Let $\mathcal{I} := \bigcup_{j=1}^K [a_{2j-1}, a_{2j}] $ be collection of intervals, with $a_{j} = a (s_j) = -\sigma -\tau^2 +s_j$. Keeping the overlap $\sigma$ fixed, we have
\begin{equation}
\lim_{\tau \rightarrow \pm \infty} \det \left(\operatorname{Id} - \mathbb K^{\rm tac}\bigg|_{\mathcal{I}}\right) = \det \left(\operatorname{Id}  - K_{\Ai}\bigg|_{J}\right)
\end{equation}
with $J = \bigcup_{\ell=1}^K [s_{2\ell-1}, s_{2\ell}] $. Analogously, keeping $\tau$ fixed, we obtain
\begin{equation}
\lim_{\sigma \rightarrow + \infty} \det \left(\operatorname{Id} - \mathbb K^{\rm tac}\bigg|_{\mathcal{I}}\right) = \det \left(\operatorname{Id}  - K_{\Ai}\bigg|_{J}\right)
\end{equation}
\end{thm}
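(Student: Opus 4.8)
The plan is to work directly from the kernel expression \eqref{KtacAJVM} and show that, under each of the two scaling regimes, the Fredholm determinant $\det(\1 - \mathbb K^{\rm tac}|_{\mathcal I})$ converges to $\det(\1 - K_{\Ai}|_J)$ by controlling the relevant integral operators in trace norm. Recall that the kernel is a sum of two pieces: the ``finite-rank-like'' term $K^{(\tau,-\tau)}_{\Ai}(\s-x,\s-y)$ and the ``resolvent correction'' term built out of $\mathcal A^{\tau}_{x-\s}$, $\mathcal A^{-\tau}_{y-\s}$ and $(\1 - K_{\Ai}|_{[\wt\s,\infty)})^{-1}$. After the change of variables $a_j = -\s - \tau^2 + s_j$, i.e. $x = -\s - \tau^2 + \xi$, $y = -\s - \tau^2 + \eta$ with $\xi,\eta$ ranging over $J$, the first term becomes $K^{(\tau,-\tau)}_{\Ai}(2\s + \tau^2 - \xi, 2\s + \tau^2 - \eta)$, and using the identity $\Ai^{(\tau)}(x) = \e^{\tau x + \frac 23 \tau^3}\Ai(x)$ one checks that the exponential prefactors $\e^{\tau(\cdots)}$ and $\e^{-\tau(\cdots)}$ from the two Airy factors combine inside the $u$-integral into $\e^{\tau u}\e^{-\tau u}=1$ on the overlapping shift, leaving something that should reduce to $K_{\Ai}(\xi + c_\tau, \eta + c_\tau)$ for an appropriate shift $c_\tau$; the task is to verify that the residual $\s$- and $\tau$-dependent shift is precisely absorbed so that, on the rescaled interval $J$, this term converges to $K_{\Ai}(\xi,\eta)$.

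The core estimate is then to show that the resolvent-correction term tends to $0$ in trace norm on $\mathcal I$ (equivalently, on $J$ after rescaling). The idea is that $\mathcal A^{\tau}_{x-\s}(z)$ and $\mathcal A^{-\tau}_{y-\s}(z)$, once the arguments are shifted to the scaling region, carry Gaussian/Airy decay that is uniform over $z \in [\wt\s,\infty)$, while the resolvent $(\1 - K_{\Ai}|_{[\wt\s,\infty)})^{-1}$ is bounded in operator norm (indeed $\|K_{\Ai}|_{[\wt\s,\infty)}\| < 1$ and in the regime $\s\to+\infty$ even $\|K_{\Ai}|_{[\wt\s,\infty)}\|_{\rm tr}\to 0$). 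Concretely I would write the correction as a product of three Hilbert–Schmidt operators, bound $\|\mathcal A^{\pm\tau}_{\cdot-\s}\|_{\rm HS}$ from $L^2(\mathcal I)\to L^2([\wt\s,\infty))$ using the known asymptotics of the Airy function, and show these Hilbert–Schmidt norms decay exponentially as $\tau\to\pm\infty$ (resp. as $\s\to+\infty$). For the $\tau\to\pm\infty$ regime the decay comes from the fact that the arguments of the Airy functions appearing in $\mathcal A^{\tau}$, after the shift by $-\tau^2$, sit deep in the region where $\Ai$ decays super-exponentially; for the $\s\to+\infty$ regime it is the domain $[\wt\s,\infty)$ of the operators that is pushed to $+\infty$. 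Standard continuity of the Fredholm determinant with respect to the trace norm then upgrades the kernel convergence to determinant convergence.

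The main obstacle will be making the decay estimates genuinely uniform: the argument $x - \s = -2\s - \tau^2 + \xi$ of $\mathcal A^{\tau}$ and the integration variables in $K_{\Ai}^{(\tau,-\tau)}$ interact with the growing parameters, and one must be careful that the exponential factors $\e^{\pm\tau(\cdot)+\frac 23 (\pm\tau)^3}$ hidden inside $\Ai^{(\pm\tau)}$ do not overwhelm the Airy decay but rather cancel against it. In other words, one needs sharp two-sided control of $\Ai^{(\tau)}(x) = \e^{\tau x + \frac 23\tau^3}\Ai(x)$ when $x$ is of order $-\tau^2$, where this is \emph{not} small and the naive bound fails; the right move is to keep $\Ai^{(\tau)}$ paired with its partner and exploit the built-in cancellation, or to use the integral representation over $\gamma_R$ and a steepest-descent/saddle-point estimate to get the uniform bound directly. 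I expect the $\tau\to\pm\infty$ case to require more care than the $\s\to+\infty$ case for exactly this reason. Once the uniform bounds are in hand, the rest is the routine trace-norm continuity argument for Fredholm determinants.
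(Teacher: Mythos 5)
Your overall strategy---direct asymptotics of the kernel \eqref{KtacAJVM} together with trace-norm continuity of the Fredholm determinant---is in the same spirit as the paper's own (very short) proof, which also argues directly on the kernel and only needs the resolvent term to be negligible uniformly on compact sets. However, there is a genuine error in how you assign the roles of the two terms in the regime you chose. With $x=-\sigma-\tau^2+\xi$, $y=-\sigma-\tau^2+\eta$, the explicit term is
$K^{(\tau,-\tau)}_{\Ai}(\s-x,\s-y)=e^{\tau(y-x)}\,K_{\Ai}(\s+\tau^2-x,\,\s+\tau^2-y)=e^{\tau(\eta-\xi)}\,K_{\Ai}(2\s+2\tau^2-\xi,\,2\s+2\tau^2-\eta)$.
(Here note that the contour-integral representation gives $\Ai^{(\tau)}(x)=\e^{\tau x+\frac23\tau^3}\Ai(x+\tau^2)$, i.e.\ the displayed identity without the $\tau^2$ shift is a typo, and the exponential prefactors combine into the conjugation factor $\e^{\tau(y-x)}$, not into $1$.) Its Airy-kernel arguments run off to $+\infty$ both as $\tau\to\pm\infty$ and as $\s\to+\infty$, so on your intervals this term tends to zero; it does not converge to $K_{\Ai}(\xi,\eta)$, and no ``absorbed shift'' can repair this. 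Consequently your core estimate---that the resolvent-correction term tends to zero in trace norm---cannot be correct either: if both pieces vanished, the determinant would tend to $1$, not to $\det(\1-K_{\Ai}|_J)$. At the edge $x\approx-\s-\tau^2$ it is precisely the resolvent term that must produce the (conjugated) Airy kernel; the cancellation you flag in the regime where the argument of $\Ai^{(\tau)}$ is of order $-\tau^2$ does not make that term small, it makes it the main term.

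The missing ingredient is either the reflection symmetry of the tacnode kernel (invariance, up to conjugation, under $x\mapsto-x$, $\tau\mapsto-\tau$), which lets you place the intervals at $x=\s+\tau^2-s$: there the explicit term is exactly $\e^{\tau(y-x)}K_{\Ai}(s,s')$ with no limit to take (the conjugation drops out of the determinant), and the only thing left to prove is that the resolvent term vanishes uniformly over compact sets of $s$ --- which is what the paper's proof does when it refers to compact sets of the variables $x-\sigma-\tau^2$; or, if you insist on working at the left edge as parametrized in the statement, you must carry out the substantially harder analysis showing that $\mathcal A^{\tau}_{x-\s}$, paired with $(\1-K_{\Ai}|_{[\wt\s,\infty)})^{-1}$ and $\mathcal A^{-\tau}_{y-\s}$, reproduces the Airy kernel in the limit. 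The remaining ingredients of your plan (Hilbert--Schmidt factorization, uniform boundedness of the resolvent, continuity of the determinant in trace norm) are sound once the roles of the two terms are assigned correctly.
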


\begin{proof}
The convergence follows easily by directly studying the kernel of the extended tacnode process (see \cite[formula (19)]{AdlerJohanssonVMoer}), since the term involving the resolvent of the Airy kernel tends to zero, uniformly over
compact sets of the spatial variables $x-\sigma-\tau^2$. 
\end{proof}

A more interesting situation is the one in which the tacnode process degenerates into a couple of Tracy-Widom distributions, in analogy with the Pearcey-to-Airy transition (see \cite{MeM}). In this case, half of the space variables (endpoints of the gaps) moves far away from the tacnode following the left branch of the boundary of the space-time region swept by the particles, and the other half goes in the opposite direction. Therefore, it is expected that the gap probability of the tacnode process for a ``large gap" factorize into two Fredholm determinants for semi-infinite gaps of the Airy process.

Numerically, these regimes are illustrated in \figurename \ \ref{tacnodenum}. The results were already conjectured in \cite{MeMnum} and they are here rigorously proved. 

\begin{figure}
\includegraphics[width=.9\textwidth]{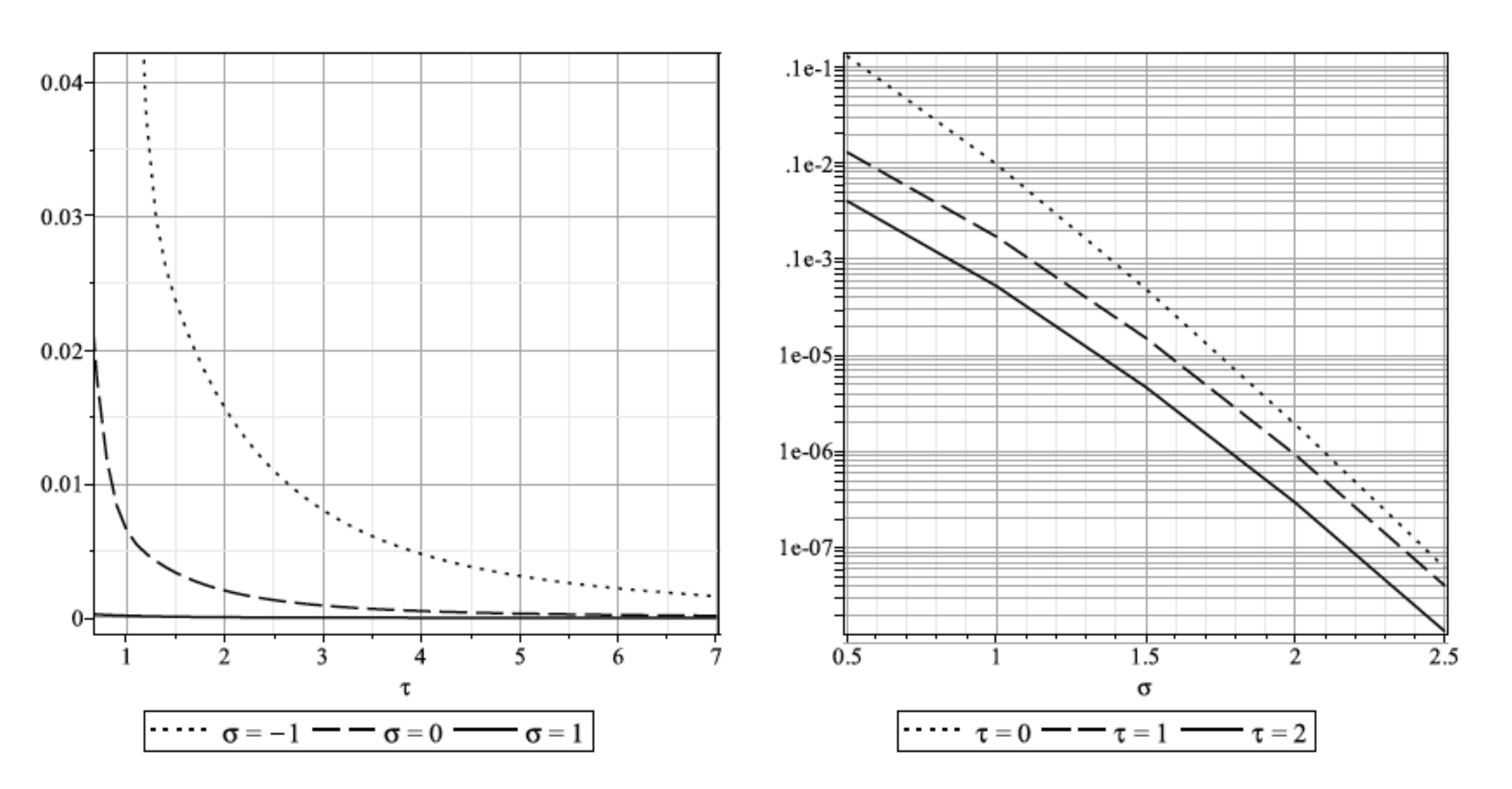}
\caption{The relative values $1- \frac{\det (\1 - \Pi  \mathbb K^{\rm tac} \Pi)}{F_2(a)F_2(b)}$ with $\Pi$ the projection on the interval $[a_{\rm tac}, b_{\rm tac}]$, $a_{\rm tac}= a-\sigma - \tau^2_{tac}$ and $b_{\rm tac}=-b+\sigma +\tau_{tac}^2$, plotted against $\tau_{\rm tac}$ (left) and $\sigma$ (right), showing the convergence of the tacnode gap probability to the product of two Tracy-Widom distributions. Here $a=-0.3$, $b=0.5$. Taken from \cite{MeMnum}. }
\label{tacnodenum}
\end{figure}

In the simple case with only one interval, we have the following theorems.
\begin{thm}[\textbf{Asymptotics as $\sigma \rightarrow +\infty$}]
\label{THMsigma}
Let $\mathbb K^{\rm tac}$ be the tacnode process and $K_{\Ai}$ the Airy process. Let
\begin{equation}
a = a(t) = -\sigma -\tau^2 +t \ \ \ b = b(s) = \sigma + \tau^2 - s
\end{equation}
then as $\sigma \rightarrow +\infty$
\begin{gather}
 \det  \left(\operatorname{Id}  -  \mathbb K^{\rm tac}\bigg|_{[-\sigma -\tau^2 +t, \sigma + \tau^2 -s ]}\right) = \nonumber \\
  \det \left(\operatorname{Id}  - K_{\Ai}\bigg|_{[s,+\infty)}\right) \det \left(\operatorname{Id}  - K_{\Ai}\bigg|_{[t,+\infty)}\right) \left( 1 +  \mathcal{O}(\sigma^{-1}) \right)
\end{gather}
and the convergence is uniform over compact sets of the variables $s,t$ provided
\begin{equation}
-\infty < s,t < K_1 (\sigma + \tau^2), \ \ \ 0<K_1<1.
\end{equation}
\end{thm}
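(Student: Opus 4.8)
The plan is to reduce the Fredholm determinant of $\mathbb{K}^{\rm tac}$ to a Riemann--Hilbert problem of Its--Izergin--Korepin--Slavnov type and then perform a steepest descent analysis as $\sigma \to +\infty$. Concretely, I would first invoke the result recalled in Section~\ref{RHsetting} (from \cite{MeMnum}) expressing the gap probability as a ratio of two Fredholm determinants of explicit operators whose kernels involve only contour integrals, exponentials and Airy functions; this removes the resolvent of the Airy operator that makes \eqref{KtacAJVM} intractable. Next, following the standard IIKS--to--RHP dictionary (\cite{IIKS,JohnSasha}), I would rewrite each determinant as $\det(\operatorname{Id} - \mathbb{K})$ for an integral operator of the form $\f(\lambda)^{\T}\g(\mu)/(\lambda-\mu)$ on an appropriate contour, where $\f,\g$ are built from the Airy-type exponentials $\e^{\lambda^3/3 + \lambda^2\tau - x\lambda}$; the associated RH problem $\Gamma(\lambda)$ has jump $\operatorname{Id} - 2\pi i\, \f(\lambda)\g(\lambda)^{\T}$, and the logarithmic derivative of the determinant with respect to the gap endpoints is controlled by $\Gamma$ via the standard Jimbo--Miwa--Ueno residue formula.

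The heart of the argument is the deformation of contours and the steepest descent analysis. After the change of variables placing the gap at $[-\sigma-\tau^2+t,\ \sigma+\tau^2-s]$, the large parameter $\sigma$ enters the phase functions multiplicatively; I would identify the relevant saddle points of the phases $\theta_\pm(\lambda)$ governing the left endpoint ($a$) and right endpoint ($b$) pieces of the kernel, and check that as $\sigma\to+\infty$ these saddles separate: the two ``halves'' of the tacnode kernel decouple, one localizing near the left branch of the boundary and the other near the right. Deforming the jump contours to pass through the respective saddles, the cross terms (the off-diagonal blocks coupling left and right, which carry a factor decaying like $\e^{-c\sigma}$ or at worst $\mathcal{O}(\sigma^{-1})$ after the appropriate rescaling) become negligible in $L^1\cap L^\infty$ on the deformed contour, so the global parametrix factorizes as a product of two independent Airy parametrices. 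Each factor, by the inverse of the IIKS construction, is exactly the RHP whose determinant is $\det(\operatorname{Id} - K_{\Ai}|_{[s,+\infty)})$ and $\det(\operatorname{Id} - K_{\Ai}|_{[t,+\infty)})$ respectively; the small-norm theorem for RHPs then yields the multiplicative error $1+\mathcal{O}(\sigma^{-1})$.

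The main obstacle I anticipate is twofold: first, controlling the error uniformly for $s,t$ allowed to grow as large as $K_1(\sigma+\tau^2)$ with $0<K_1<1$ --- this means the endpoints of the gap are themselves moving with $\sigma$, so the saddle-point locations depend on $s,t$ and one must verify that the decay estimates on the decoupling (cross) terms survive, with constants depending only on $K_1$ and not on how close $K_1$ is pushed; the restriction $K_1<1$ is presumably exactly what keeps the two saddles from re-colliding. Second, one must ensure the contour deformations remain admissible --- i.e.\ the IIKS contour can be moved to the steepest descent paths without crossing singularities of $\f,\g$, and that the semi-infinite nature of the limiting Airy gaps is produced correctly from the finite (but growing) tacnode gap. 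Handling the uniformity via explicit bounds on $\Re\,\theta_\pm$ along the deformed contours, rather than soft compactness arguments, is where most of the real work will lie; everything else is a routine, if lengthy, application of the nonlinear steepest descent method and the differential-identity machinery already used in \cite{MeMmulti,MeM,Me}.
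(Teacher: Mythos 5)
Your strategy is the same as the paper's: the reduction from \cite{MeMnum}, the IIKS operator and its associated $4\times4$ Riemann--Hilbert problem, steepest descent with the two ``halves'' decoupling near $\pm 2^{1/3}\tau$ (and the $\Theta_{\wt\sigma}$ jumps near $\pm\sqrt{\wt\sigma}$ being exponentially suppressed), a factorized global parametrix built from two Hastings--McLeod Painlev\'e II blocks, the small norm theorem, and the Jimbo--Miwa--Ueno differential identity in the endpoint variables $s,t$. So the architecture is right, and your remarks about uniformity for $s,t<K_1(\sigma+\tau^2)$ correctly identify where the explicit phase estimates are needed (this is exactly the content of the paper's Lemmas on the $L^p$ bounds of the off-critical phases).

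The genuine gap is in the last step, where you write that ``the small-norm theorem then yields the multiplicative error $1+\mathcal{O}(\sigma^{-1})$.'' The small-norm analysis controls the RHP solution, hence only the \emph{logarithmic derivatives} $\partial_s\ln\det$ and $\partial_t\ln\det$ via the residue formula: one gets $p(s)\,\mathrm{d}s+p(t)\,\mathrm{d}t$ up to exponentially small corrections, where $p$ is the log-derivative of the Tracy--Widom distribution. Integrating this in $(s,t)$ determines the determinant only up to an $(s,t)$-independent but $\sigma$-dependent constant of integration, and your proposal never fixes it; nor is the identification of each parametrix factor with an Airy-kernel determinant ``by the inverse of the IIKS construction'' a substitute, since the parametrix is a Painlev\'e II RHP and equality of determinants does not follow from equality of jump data alone. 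The paper closes this by a separate argument (its Lemma on the vanishing of the integration constant): direct pointwise/trace estimates on the block kernel $\hat\Pi\,\mathbb{H}\,\hat\Pi$ for a \emph{fixed} interval $[a,b]$, showing $\ln\det(\operatorname{Id}-\hat\Pi\,\mathbb{H}\,\hat\Pi)\to0$ as $\sigma\to+\infty$, together with the trivial observation that the prefactor $F_2(\wt\sigma)^{-1}$ from the ratio representation tends to $1$. Without an argument of this type (a normalization of the determinant at some reference point in the $(s,t)$ plane), your proof establishes the factorization only up to an undetermined multiplicative factor, so you should add this normalization step explicitly.
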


\begin{thm}[\textbf{Asymptotics as $\tau \rightarrow \pm \infty$}]
\label{THMtau}
Let $\mathbb K^{\rm tac}$ be the tacnode process and $K_{\Ai}$ the Airy process. Let
\begin{equation}
a = a(s) = -\sigma -\tau^2 +t \ \ \ b = b(t) = \sigma + \tau^2 - s
\end{equation}
then as $\tau \rightarrow \pm \infty$
 \begin{gather}
 \det   \left(\operatorname{Id}  -  \mathbb K^{\rm tac}\bigg|_{[-\sigma-\tau^2+t, \sigma+\tau^2-s]}\right) = \nonumber \\
  \frac{ \det \left(\operatorname{Id}  - K_{\Ai}\bigg|_{[s,+\infty)}\right) \det \left(\operatorname{Id}  - K_{\Ai}\bigg|_{[t,+\infty)}\right) \det \left(\operatorname{Id}-K_{\Ai}\bigg|_{[\tilde \sigma, \infty)}\right) \left( 1 + \mathcal{O}(\tau^{-1})\right)}{ \det \left(\operatorname{Id}  - K_{\Ai}\bigg|_{[\wt \s,\infty)}\right)}  \nonumber \\
  =    \det \left(\operatorname{Id}  - K_{\Ai}\bigg|_{[s,+\infty)}\right) \det \left(\operatorname{Id}  - K_{\Ai}\bigg|_{[t,+\infty)}\right) \left( 1 + \mathcal{O}(\tau^{-1})\right)
\end{gather}
and the convergence is uniform over compact sets of the variables $s,t$ provided
\begin{gather}
-\infty < s,t < K_1 (\sigma + \tau^2) \\
t = 4\tau^2 - \delta, \ \  0< \delta < \frac{7}{3}K_2\tau^2; \ \ \ \ s = \tau^2 + 2\sigma - \delta, \ \   0<\delta < K_3 \left( 2\sigma + \frac{2}{3}\tau^2\right) 
\end{gather}
for some $0<K_1, \, K_2, \, K_3 <1$.
\end{thm}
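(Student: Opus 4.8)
The plan is to express the Tacnode gap probability as a ratio of Fredholm determinants, recognize the non‑trivial factor as an Its--Izergin--Korepin--Slavnov (IIKS) determinant, attach to it a Riemann--Hilbert problem (RHP), and run a steepest descent analysis as $\tau\to\pm\infty$; much of the machinery parallels the proof of Theorem~\ref{THMsigma} in Section~\ref{Asymsigma}, the new features being the different scaling of the cubic phase and the appearance of an additional, \emph{fixed} Airy factor at $\wt\s$. Concretely, I would begin from the representation recalled in Section~\ref{RHsetting} (based on \cite[Theorem 3.1]{MeMnum}): the gap probability $\det(\operatorname{Id}-\mathbb K^{\rm tac}|_{[a,b]})$ is a ratio whose denominator is the fixed quantity $\det(\operatorname{Id}-K_{\Ai}|_{[\wt\s,\infty)})$ and whose numerator $\mathcal N$ is the Fredholm determinant of an explicit operator assembled from Airy functions, exponentials $\e^{\tfrac{\lambda^3}{3}+\tau\lambda^2-c\lambda}$ and contour integrals on $\gamma_R\cup\gamma_L$. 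This $\mathcal N$ is of IIKS type, with kernel $\tfrac{\mathbf f(\lambda)^{\T}\mathbf g(\mu)}{\lambda-\mu}$, so by \cite{IIKS,JohnSasha} its logarithmic $\tau$-derivative is computed from the solution $\Gamma(\lambda)$ of the associated RHP, and the whole problem reduces to the large-$\tau$ analysis of $\Gamma$.

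Next I would insert the scalings $a=-\s-\tau^2+t$, $b=\s+\tau^2-s$ with $t=4\tau^2-\delta$ and $s=\tau^2+2\s-\delta$ into the phase functions, rescale $\lambda$ by an appropriate power of $\tau$ together with an affine shift, and complete the cube. I expect the resulting leading phase to possess two well-separated critical points: one near the left endpoint, which after rescaling reproduces the Airy phase $\tfrac{\mu^3}{3}-t\mu$, and one near the right endpoint, which reproduces $\tfrac{\mu^3}{3}-s\mu$; the particular constants $4\tau^2$ and $\tau^2+2\s$ in the hypotheses are exactly those that pin these saddles at the correct locations, while the upper bounds $\delta<\tfrac73 K_2\tau^2$ and $\delta<K_3(2\s+\tfrac23\tau^2)$ are precisely what keep the saddles bounded away from each other and from the branch point $\wt\s$, uniformly in the admissible window $-\infty<s,t<K_1(\s+\tau^2)$. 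Since the Tacnode gap probability is an even function of $\tau$ (the kernel for $-\tau$ is, up to relabelling the two groups of particles, the transpose of the one for $\tau$, and transposition does not change a Fredholm determinant), it suffices to treat $\tau\to+\infty$.

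Deforming the RHP contour onto the steepest-descent paths through the two saddles, I expect the jump matrix of $\Gamma$ to become, up to errors exponentially small in $\tau$ away from fixed neighbourhoods of the saddles and of $\wt\s$, block-diagonal: one block builds the Airy parametrix for the gap $[t,\infty)$, a second the Airy parametrix for $[s,\infty)$, and a residual local block at the fixed branch point $\wt\s$ reconstructs the Airy resolvent on $[\wt\s,\infty)$. A standard small-norm argument then yields $\Gamma=(\operatorname{Id}+\mathcal O(\tau^{-1}))$ times the product of these three model solutions, uniformly on the stated ranges, and integrating the resulting asymptotics of $\partial_\tau\log\mathcal N$ (via the residue formula of \cite{JohnSasha}, as in \cite{MeM}) gives $\mathcal N=\det(\operatorname{Id}-K_{\Ai}|_{[s,\infty)})\det(\operatorname{Id}-K_{\Ai}|_{[t,\infty)})\det(\operatorname{Id}-K_{\Ai}|_{[\wt\s,\infty)})\,(1+\mathcal O(\tau^{-1}))$. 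Dividing by the denominator $\det(\operatorname{Id}-K_{\Ai}|_{[\wt\s,\infty)})$ produces the first displayed equality; here $\s$ (hence $\wt\s$) is fixed, so unlike in Theorem~\ref{THMsigma}, where $\wt\s\to+\infty$ makes that factor trivially $1$, this exact cancellation of the $\wt\s$-factor against the denominator is the genuine mechanism behind the second equality.

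The main obstacle is the uniformity of the steepest-descent estimates over the non-compact parameter window, together with the coupled constraints on $\delta$, $\tau$ and $\s$: one must control how the saddle points and their steepest-descent contours move as $s,t$ sweep their allowed ranges, and show that the error in the small-norm RHP is $\mathcal O(\tau^{-1})$ with a constant independent of $s,t$, which is exactly where the explicit bounds $\tfrac73 K_2\tau^2$ and $K_3(2\s+\tfrac23\tau^2)$ are used. A secondary but essential point is to verify that the residual local parametrix at $\wt\s$ reconstructs $\det(\operatorname{Id}-K_{\Ai}|_{[\wt\s,\infty)})$ \emph{exactly}, so that its cancellation against the denominator is exact and not merely asymptotic.
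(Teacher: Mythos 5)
Your plan follows the paper's own strategy quite closely: start from the representation $\det(\operatorname{Id}-\Pi\mathbb K^{\rm tac}\Pi)=F_2(\wt\s)^{-1}\det(\operatorname{Id}-\hat\Pi\mathbb H\hat\Pi)$ of Proposition \ref{conditionedprob}, pass to the IIKS kernel and its $4\times4$ Riemann--Hilbert problem, do steepest descent with three Hastings--McLeod/Airy parametrices (two at the moving saddles $\lambda=\pm2^{1/3}\tau$, one at the fixed contours through $\pm\sqrt{\wt\s}$), apply the small-norm theorem, and cancel the $\wt\s$-factor against the denominator. The genuine gap is in how you extract the determinant from the RHP asymptotics. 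You propose to compute $\partial_\tau\log\mathcal N$ and ``integrate the resulting asymptotics'' in $\tau$. With $s,t,\wt\s$ held fixed this can only yield $\mathcal N(\tau)=\mathcal N(\infty)\,(1+\mathcal O(\tau^{-1}))$; it cannot produce the $s,t$-dependent factors $\det(\operatorname{Id}-K_{\Ai}|_{[s,\infty)})\det(\operatorname{Id}-K_{\Ai}|_{[t,\infty)})$, and identifying the limit $\mathcal N(\infty)$ as a function of $s,t,\wt\s$ is precisely the statement to be proved, so the step is circular as written. The paper instead uses the Jimbo--Miwa--Ueno identity in the variables $s$, $t$ and $\wt\s$ (the $\tau\to\infty$ analogue of Theorem \ref{thmtaufunctionresiduessigma}): the parametrix $\Psi_{2,3}(t)\Psi_{1,4}(s)\Psi_{1,2}(\wt\s)$ gives $\partial_s\log\det=p(s)+\mathcal O(\tau^{-\infty})$, $\partial_t\log\det=p(t)+\mathcal O(\tau^{-\infty})$, $\partial_{\wt\s}\log\det=p(\wt\s)+\mathcal O(\tau^{-\infty})$, which integrate to the three Tracy--Widom logarithms. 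Even then one is left with a constant of integration $C$, and showing $C=0$ is a substantive separate step (the $\tau$-analogue of Lemma \ref{Cequal0}): one must prove, by pointwise/trace estimates on the conjugated kernel over a \emph{fixed} interval $[a,b]$, that $\det(\operatorname{Id}-\Pi\mathbb K^{\rm tac}\Pi)\to1$ as $\tau\to+\infty$ while the Tracy--Widom factors at the base point $(s_0,t_0)$ also tend to $1$. Your proposal never addresses this boundary-value/constant issue, which is an essential part of the argument, not a formality.

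Two smaller inaccuracies. First, the hypotheses $t=4\tau^2-\delta$ and $s=\tau^2+2\sigma-\delta$ do not ``pin the saddles'': the saddles sit at $\lambda=\pm2^{1/3}\tau$ in any case. Their role (Lemmas \ref{lemmaLR23tau} and \ref{lemmaJ23}) is to make exponentially small, uniformly, the \emph{new} jump entries created when the contours attached to the moving saddles are deformed past the fixed contours through $\pm\sqrt{\wt\s}$ (the matrices $J_2,\tilde J_2,J_3,\tilde J_3$ obtained by jump splitting/merging); your sketch of ``deforming onto steepest-descent paths'' glosses over exactly this interaction, which is where the constants $\frac73K_2$ and $K_3(2\sigma+\frac23\tau^2)$ come from. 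Second, no rescaling of $\lambda$ by powers of $\tau$ is needed: the affine shifts $\xi_\pm=(\lambda\pm2^{1/3}\tau)/2^{1/3}$ already turn the phases into exact Airy phases $\frac{\xi_\pm^3}{3}-t\xi_\pm$ and $\frac{\xi_-^3}{3}-s\xi_-$, and the cancellation of the $\wt\s$-factor need only hold up to the same $1+\mathcal O(\tau^{-1})$ error, since the denominator $F_2(\wt\s)$ is exact from Proposition \ref{conditionedprob}.
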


More generally, we consider the tacnode process restricted to a collection of intervals.
\begin{thm}
\label{THMgen}
Given
\begin{equation}
\mathcal{I} = \bigcup_{j=1}^J [a_{2j-1}, a_{2j}] \cup [a_{2J+1}, b_0] \cup \bigcup_{k=1}^K[b_{2k-1}, b_{2k}]
\end{equation}
where
\begin{equation}
a_{\ell} = a (s_\ell) = -\sigma -\tau^2 +t_\ell \ \ \ b_\ell = b(t_{2K+1-\ell}) = \sigma +\tau^2 -s_{2K+1-\ell},
\end{equation}
then as $\sigma \rightarrow +\infty$
 \begin{gather}
 \det   \le(\operatorname{Id}  -  \mathbb K^{\rm tac}\bigg|_{\mathcal{I}}\ri) = \det \le(\operatorname{Id}  - K_{\Ai}\bigg|_{J_1}\ri) \det \le(\operatorname{Id}  - K_{\Ai}\bigg|_{J_2}\ri) \left( 1 + \mathcal{O}(\sigma^{-1})\right)
 \end{gather}
  or as $\tau \rightarrow \pm \infty$
  \begin{gather}
 \det   \le(\operatorname{Id}  -  \mathbb K^{\rm tac}\bigg|_{\mathcal{I}}\ri) = \det \le(\operatorname{Id}  - K_{\Ai}\bigg|_{J_1}\ri) \det \le(\operatorname{Id}  - K_{\Ai}\bigg|_{J_2}\ri) \left( 1 + \mathcal{O}(\tau^{-1})\right)
\end{gather}
where
\begin{equation}
J_1 = \bigcup_{\ell=1}^J [t_{2\ell-1}, t_{2\ell}] \cup [t_{2J+1}, +\infty) \ \ \ J_2=\bigcup_{\ell=1}^K[s_{2\ell-1}, s_{2\ell}] \cup [s_{2K+1}, +\infty)
\end{equation}
and the convergence is uniform over compact sets of the variables $s,t$ provided
\begin{gather}
-\infty < s_\ell,t_\ell < K_1 (\sigma + \tau^2) \\
t_\ell = 4\tau^2 - \delta, \ \  0< \delta < \frac{7}{3}K_2\tau^2; \ \ \ \ s_\ell = \tau^2 + 2\sigma - \delta, \ \   0<\delta < K_3 \left( 2\sigma + \frac{2}{3}\tau^2\right) 
\end{gather}
for some $0<K_1, \, K_2, \, K_3 <1$.
\end{thm}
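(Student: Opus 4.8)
The plan is to run, essentially verbatim, the Riemann--Hilbert steepest descent machinery already developed for the one-interval statements in Theorems \ref{THMsigma} and \ref{THMtau}; the only genuinely new feature is that the endpoints of $\mathcal I$ split into two widely separated clusters. Indeed, with $a_\ell=-\sigma-\tau^2+t_\ell$ the points $a_1<\dots<a_{2J+1}$ all lie within $O(1)$ of $-\sigma-\tau^2$, while $b_0<\dots<b_{2K}$ lie within $O(1)$ of $\sigma+\tau^2$, so that the ``left'' and ``right'' endpoint data are separated by a distance $\gtrsim 2(\sigma+\tau^2)$, which diverges in both regimes. Everything then reduces to showing that in the limit these two clusters contribute two decoupled Airy Riemann--Hilbert problems, one producing $\det(\1-K_{\Ai}|_{J_1})$ and the other $\det(\1-K_{\Ai}|_{J_2})$.

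First I would invoke \cite[Theorem 3.1]{MeMnum}, recalled in Section \ref{RHsetting}, to rewrite $\det(\1-\mathbb K^{\rm tac}|_{\mathcal I})$ as a ratio of Fredholm determinants of explicit operators, and then, exactly as in Section \ref{RHsetting}, realize the relevant operator as an Its--Izergin--Korepin--Slavnov operator (\cite{IIKS}) whose symbol encodes the multi-interval structure of $\mathcal I$, in the manner of the multi-interval constructions of \cite{MeMmulti}. This produces a Riemann--Hilbert problem whose jump contour is a union of local crosses through (the preimages of) the endpoints $a_\ell$ and $b_k$, with all remaining data --- in particular the phase function governing the steepest descent --- identical to the one-interval case, since that phase is blind to how many intervals are present.

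Next I would carry out the steepest descent: conjugate by the phase function (equivalently, by the $g$-function used in Theorems \ref{THMsigma}--\ref{THMtau}), open lenses around each subinterval, and use the sign structure of the real part of the phase --- established in the one-interval analysis and unchanged here --- to conclude that the transformed jump matrix is $\1$ plus an exponentially small error off fixed-size disks around the endpoints. In each such disk I paste the Airy parametrix already built for a single endpoint. Because the left disks and the right disks are separated by a distance $\gtrsim\sigma$ (respectively $\gtrsim\tau^2$), the global parametrix splits, up to exponentially small terms, into the product of one problem carrying jumps only on the left crosses and one carrying jumps only on the right crosses; after the coordinate shifts $a_\ell\mapsto t_\ell$ on the left and $b_\ell\mapsto s_{2K+1-\ell}$ on the right --- the identifications already appearing in the statement --- these are precisely the Riemann--Hilbert problems attached, via \cite{JohnSasha} and as in \cite{MeMmulti}, to the Airy gap probabilities on $J_1$ and $J_2$. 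The semi-infinite rays in $J_1=\bigcup_{\ell=1}^J[t_{2\ell-1},t_{2\ell}]\cup[t_{2J+1},\infty)$ and $J_2=\bigcup_{\ell=1}^K[s_{2\ell-1},s_{2\ell}]\cup[s_{2K+1},\infty)$ arise because the central interval $[a_{2J+1},b_0]$, of width $\sim 2(\sigma+\tau^2)\to\infty$, is seen as $[t_{2J+1},+\infty)$ from the left cluster and as $[s_{2K+1},+\infty)$ from the right. A small-norm estimate applied to the ratio of the exact solution to this decoupled parametrix then gives a correction $\1+O(\sigma^{-1})$ (respectively $\1+O(\tau^{-1})$), the algebraic rate being the usual mismatch of the Airy parametrices on the disk boundaries.

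Finally I would translate the factorization of the Riemann--Hilbert solution back to Fredholm determinants through the tau-function correspondence of \cite{JohnSasha} used in Sections \ref{RHsetting}--\ref{Asymtau}: in the $\sigma\to+\infty$ regime this yields $\det(\1-\mathbb K^{\rm tac}|_{\mathcal I})=\det(\1-K_{\Ai}|_{J_1})\det(\1-K_{\Ai}|_{J_2})(1+O(\sigma^{-1}))$ directly, while in the $\tau\to\pm\infty$ regime an extra factor $\det(\1-K_{\Ai}|_{[\wt\sigma,\infty)})$ appears in the numerator and is cancelled by the denominator of the \cite{MeMnum} ratio formula, exactly as in Theorem \ref{THMtau}. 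Uniformity over compact sets of the variables $s_\ell,t_\ell$ under the stated constraints follows from the uniformity of the phase estimates: the bounds $s_\ell,t_\ell<K_1(\sigma+\tau^2)$ together with the $\delta$-windows of the $\tau$-regime are exactly what keeps the real part of the phase of the correct sign on every contour component and the endpoint disks disjoint and of fixed radius. The main obstacle I anticipate is not analytic novelty but the bookkeeping of the multi-component contour: one must check that the lens openings around the several subintervals within a single cluster remain mutually non-interfering --- which needs the $t_\ell$ (respectively $s_\ell$) to stay bounded apart, automatic on compacts of distinct points --- and, more delicately, that the phase genuinely decouples the two clusters at the claimed polynomial rate, the subtle point being the central interval $[a_{2J+1},b_0]$ whose two endpoints lie in different clusters and to which the one-interval estimates of Theorems \ref{THMsigma}--\ref{THMtau} must be applied with care.
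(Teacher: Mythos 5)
Your global skeleton (the ratio formula of \cite{MeMnum}, passage to an IIKS operator, an associated Riemann--Hilbert problem, steepest descent, and in the $\tau$-regime the cancellation of the factor $\det(\1-K_{\Ai}|_{[\wt\s,\infty)})$) is indeed the paper's route: the general theorem is proved by running the machinery of Sections \ref{RHsetting}--\ref{Asymtau} on the multi-interval RH problem (\ref{mainRH}), the extension being treated as notational. However, the analytic core of your proposal does not match the RH problem this kernel actually admits, and as written it could not be carried out. The tacnode kernel is not integrable in the spatial variable (it contains the resolvent $(\1-K_{\Ai})^{-1}$ --- this is the whole reason for the Fourier manipulations of Section \ref{RHsetting}), so there is no RH problem with ``local crosses through the endpoints $a_\ell$, $b_k$'', no $g$-function, no lenses around the subintervals and no disk-local Airy parametrices to paste. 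In the actual problem the jumps live on fixed rays of $\gamma_R/\gamma_L$ type in a spectral plane, the endpoints enter only through the exponents $\Theta_{\pm\tau,a_i}$, and the multi-interval structure enlarges the \emph{size} of the jump matrix, not the contour. The decoupling of the two clusters is not a statement about spatially separated disks: it comes from the exponential suppression of the coupling phases on the fixed contours (Lemmas \ref{lemmaLR1}, \ref{lemmaLR23}, \ref{lemmaLR23tau}, \ref{lemmaJ23}, under exactly the stated restrictions on $s_\ell,t_\ell$), after which the global parametrix block-diagonalizes into Hastings--McLeod Painlev\'e II blocks at $\pm 2^{1/3}\tau$ (and the Airy block at $\pm\sqrt{\wt\s}$ in the $\tau$-regime); for the general theorem these blocks are the RH problems attached in \cite{MeMmulti} to the Airy gap probabilities on $J_1$ and $J_2$. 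Consequently the error jump is $\mathcal O(\sigma^{-\infty})$ (resp.\ $\mathcal O(\tau^{-\infty})$), and the $\mathcal O(\sigma^{-1})$ in the statement is not the ``usual mismatch of Airy parametrices on disk boundaries''.

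The second genuine gap is in your last step. The correspondence with \cite{JohnSasha} and the Jimbo--Miwa--Ueno formula only give the logarithmic \emph{differential} of the determinant in the parameters $s_\ell$, $t_\ell$ (and $\wt\s$); identifying the limit of the determinant itself requires integrating these identities and then proving that the constant of integration vanishes. In the paper this is a separate, non-trivial argument (Lemma \ref{Cequal0} and its analogue in Section \ref{Asymtau}), carried out by pointwise/trace estimates on the kernel of $\hat\Pi\,\mathbb H\,\hat\Pi$ at a fixed base configuration, and it is precisely where the extra factor $\det(\1-K_{\Ai}|_{[\wt\s,\infty)})$ is produced and cancelled in the $\tau$-regime. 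Your claim that the factorization of the RH solution yields the determinant identity ``directly'' skips this step, and without it the argument only determines the answer up to a multiplicative constant depending on $\sigma,\tau$.
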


The parametrization of the endpoints $a$ and $b$ in Theorems \ref{THMsigma} and \ref{THMtau} (and of $a_\ell$ and $b_\ell$ in Theorem \ref{THMgen}) has the following meaning. 
At the critical time $0< t_{\rm tac} <1$, the two bulks tangentially touch at one point $P_{\rm tac}$, the tacnode point. From the common tacnode point $a(t_{\rm tac})=b(t_{\rm tac})$, two new endpoints $[a(t), b(t)]$ emerge and move away along the branches of the boundary.

The tacnode point process describes the statistics of the random walkers in a scaling neighborhood of $t=t_{\rm tac}$ and $a=b=P_{\rm tac}$.
The asymptotics as $\tau \rightarrow \pm \infty$ given in Theorem \ref{THMtau} is the regime where we look ``away" from the critical point (either in the future for $\tau >0$ or in the past for $\tau <0$) and it is expected to reduce to two Airy point processes, which describe the edge-behavior of the random walkers. Similarly, when we take the limit as $\sigma \rightarrow +\infty$ (Theorem \ref{THMsigma}) we are physically pushing away the two bulks from each other and the expected regime around the not-any-more critical time will be again a product of two Airy point processes.

The proof of these theorems rely essentially upon the construction of a Riemann-Hilbert problem deduced from a suitable IIKS integrable kernel (\cite{IIKS}) and the Deift-Zhou steepest descent method (\cite{SDM}). In the next section we will show how to deduce such integrable kernel from the Tacnode kernel.  We will start with considerations that apply to the more general case, but then we will specialize to the single interval case (Theorems \ref{THMsigma} and \ref{THMtau}) in order to avoid unnecessary complications, which are purely notational and not conceptual.

\section{The Riemann-Hilbert setting for the gap probabilities of the Tacnode process}\label{RHsetting}

We recall the definition of the tacnode kernel, referring to the formula given by Adler, Johansson and Van Moerbeke in \cite{AdlerJohanssonVMoer}.

The single-time tacnode kernel reads (see \cite[formula (19)]{AdlerJohanssonVMoer}) 
\begin{gather}
\mathbb K^{\rm tac} (\tau; x,y) = \nonumber \\ K^{(\tau,-\tau)}_{\Ai} (\s-x,\s-y)
+ \sqrt[3]{2} \int_{\wt\s}^\infty\d z\int_{\wt \s}^\infty \d w\mathcal A^{\tau}_{x-\s}(w)  \left(\1 - K_{\Ai}\bigg|_{[\wt \s, +\infty)}\right)^{-1}(z,w){\mathcal A}^{-\tau}_{y-\s}(z)
\end{gather}
where $\wt \s := 2^\frac 2 3 \s$ and the functions appearing in the above definition are specified below:
\begin{eqnarray}
\Ai^{(\tau)}(x)&\& := 
{\rm e}^{\tau x + \frac 2 3 \tau^3} \Ai( x  ) = \int_{\gamma_R}   \frac {\d \lambda}{2i\pi }{\rm e}^{\frac {\lambda^3}3 + \lambda^2 \tau - x \lambda}\\
\Ai(x) &\& := \int_{\gamma_R}   \frac {\d \lambda}{2i\pi }{\rm e}^{\frac {\lambda^3}3 - x \lambda} =  -  \int_{\gamma_L}   \frac {\d \lambda}{2i\pi }{\rm e}^{-\frac {\lambda^3}3 + x \lambda}\\
 \mathcal A^\tau_x(z) &\& : =
 \Ai^{(\tau)}(x + \sqrt[3]{2} z) - \int_0^\infty \d w\, \Ai^{(\tau)} (-x + \sqrt[3]{2} w) \Ai(w+z) \\
K_{\Ai}^{(\tau,-\tau)} (-x, -y) &\& 
:= \int_0^\infty\d u  \Ai^{(\tau)}(-x+u)\Ai^{(-\tau)}(-y+u) \\
K_{\Ai}(z, w) &\& 
:= \int_0^\infty\!\!\!\!\d u \, \Ai(z+u)\Ai (w+u) 
\end{eqnarray}
The contour $\gamma_R$ is a contour extending to infinity in the $\lambda$-plane  along the rays ${e}^{\pm i \frac {\pi}{ 3}}$,  oriented upwards and entirely contained in the right half plane ($\Re( \lambda)>0$), and $\gamma_L := -\gamma_R$.

First of all, since only the combination $x-\sigma, y-\sigma$ appears, we shift the variables and we perform a spacial rescaling of the form $u=\sqrt[3]{2} u'$. The resulting kernel is
\begin{gather}
\wt {\mathbb K} (x,y) :=  {\sqrt[3]{2}} \mathbb K^{\rm tac}  (\sqrt[3]{2} x, \sqrt[3]{2} y) 
= {\sqrt[3]{2}} \int_0^\infty \d u \Ai^{(\tau)}  (\sqrt[3]{2} (u -x )) \Ai^{(-\tau)} (\sqrt[3]{2} (u-y)) +
\nonumber \\+
 {\sqrt[3]{2}}  \int_{\wt \s}^\infty \!\!\! \d z \int_{\wt \s }^{\infty} \!\!\! \d w
\mathcal A_{\sqrt[3]{2} x}^{\tau} (w)\le(\1 - K_{\Ai} \bigg|_{\wt \s}\ri)^{-1}(z,w)  \mathcal A_{\sqrt[3]{2} y}^{-\tau} (w) 
\end{gather}

For the sake of brevity, we shall introduce the operators $K_\Ai$, $K^{(\tau,-\tau)}_\Ai$, $\mathfrak A_\tau$ (with abuse of notation) as the operators with the kernels,
\begin{eqnarray}
&&K^{(\tau,-\tau)}_\Ai := K^{(\tau,-\tau)}_\Ai (\sqrt[3]{2} x , \sqrt[3]{2}y)=   \sqrt[3]{2} \int_0^\infty \d u \Ai^{(\tau)}  (\sqrt[3]{2} (u -x )) \Ai^{(-\tau)} (\sqrt[3]{2} (u-y))  \\
&&K_\Ai :=  K_\Ai(x,y)\bigg|_{[\wt \s,\infty)}\\
&& \mathcal B_\tau  (x,z):= 2^{\frac{1}{6}} \Ai^{(\tau)} \le(\sqrt[3]{2} (x+ z)\ri), \ \ \  \mathcal A(z,w):= \Ai(z+w) \\
&& \mathfrak A_\tau (x,z) := \mathcal A^\tau_{\sqrt[3]{2} x} (z) = \mathcal B_\tau(x,z) - \int_0^\infty\!\!\!\! \d w \,\mathcal B_\tau(-x,w) \mathcal A(w, z)\\
\end{eqnarray}
moreover, we set $\pi$ as the projector on the interval $[\wt \s, \infty)$.

Given the above definitions, we can rewrite the tacnode kernel in the following way
\begin{prop}
The kernel $\wt {\mathbb K}$ can be represented as 
    \begin{eqnarray}
  \wt {\mathbb K}(x,y) &\& = K_\Ai^{(\tau,-\tau)}(x,y) + \int_{[\wt \s, \infty)} \!\!\! \d z  \int_{[\wt \s, \infty)}\!\!\! \d w\, \mathfrak A_\tau(x,z) \mathcal R(z,w) \mathfrak A_{-\tau} (z,y)   
   \\
  \mathcal R(z,w) &\& := \le(\1 - K_{\Ai}\bigg|_{[\wt \s,\infty)}\ri)^{-1}(z,w) 
  \end{eqnarray}
Alternatively, 
\begin{equation}
\wt {\mathbb K} = K_\Ai^{(\tau,-\tau)}+ \mathfrak A_\tau \pi (\1- K_\Ai)^{-1} \pi \mathfrak A_{-\tau}^T 
\end{equation}
where we recall that $\wt {\mathbb K}$ is the transformed of the kernel $\mathbb K^{\rm tac} $ under the change of variables $u' = 2^{-\frac{1}{3}}(u-\sigma)$.
  \end{prop}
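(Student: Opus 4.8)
The plan is to derive the claimed representation directly from the definition of $\wt{\mathbb K}$ by tracking the substitutions already introduced, so no deep new idea is needed — the statement is essentially a change of variables plus a careful unwinding of the notational abbreviations $\mathcal B_\tau$, $\mathcal A$, $\mathfrak A_\tau$. First I would start from the two-line formula for $\wt{\mathbb K}(x,y)$: the first term is, by definition, exactly $K_\Ai^{(\tau,-\tau)}(x,y)$ in the rescaled notation, so that part is immediate. The work is all in the double-integral term, where I must show that the rescaled $\mathcal A_{\sqrt[3]{2}x}^{\tau}(w)$ (resp. $\mathcal A_{\sqrt[3]{2}y}^{-\tau}(z)$) coincides with $\mathfrak A_\tau(x,z)$ (resp. $\mathfrak A_{-\tau}(z,y)$), and that the prefactor $\sqrt[3]{2}$ together with the rescaling of the Airy functions inside the resolvent of $K_\Ai\big|_{[\wt\s,\infty)}$ is absorbed correctly. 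The key computation is: plug the definition $\mathcal A_x^\tau(z) = \Ai^{(\tau)}(x+\sqrt[3]{2}z) - \int_0^\infty \d w\,\Ai^{(\tau)}(-x+\sqrt[3]{2}w)\Ai(w+z)$ into the expression with $x \mapsto \sqrt[3]{2}x$, and recognize, using the factor $2^{1/6}$ in the definition of $\mathcal B_\tau$, that $\mathcal A^\tau_{\sqrt[3]{2}x}(z) = \mathcal B_\tau(x,z) - \int_0^\infty \d w\,\mathcal B_\tau(-x,w)\,\mathcal A(w,z)$ after the substitution $w = \sqrt[3]{2}w'$ in the inner integral. This is precisely the stated $\mathfrak A_\tau(x,z)$.

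Next I would handle the resolvent: the operator $\left(\1 - K_{\Ai}\big|_{[\wt\s,+\infty)}\right)^{-1}$ appearing in the original kernel is, after the spatial rescaling $u = \sqrt[3]{2}u'$, conjugated into the resolvent of the rescaled Airy operator; but since $K_\Ai(z,w) = \int_0^\infty \Ai(z+u)\Ai(w+u)\,\d u$ is \emph{not} rescaled in the abbreviation adopted (the abbreviation keeps $K_\Ai := K_\Ai(x,y)\big|_{[\wt\s,\infty)}$ with the plain Airy kernel), I need to check that the rescaling has been consistently distributed: the $\sqrt[3]{2}$ Jacobians from the two outer $\d z, \d w$ integrations and the factor $\sqrt[3]{2}$ out front must combine with the $2^{1/6}$'s hidden in $\mathcal B_\tau$ versus the un-rescaled $\Ai(z+w)$ in $\mathcal A$ so that the final pairing against $\mathcal R = (\1 - K_\Ai\big|_{[\wt\s,\infty)})^{-1}$ has no leftover constants. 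Concretely I would write the double integral with all Jacobians explicit, then group $2^{1/6}\cdot 2^{1/6} = 2^{1/3} = \sqrt[3]{2}$ appropriately and match against $\mathfrak A_\tau(x,z)\mathcal R(z,w)\mathfrak A_{-\tau}(z,w\text{ or }y)$; the bookkeeping confirms the stated identity with the integrations ranging over $[\wt\s,\infty)$ in both variables.

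Finally, for the operator-form restatement $\wt{\mathbb K} = K_\Ai^{(\tau,-\tau)} + \mathfrak A_\tau\,\pi\,(\1 - K_\Ai)^{-1}\,\pi\,\mathfrak A_{-\tau}^T$, I would simply reinterpret the double integral over $[\wt\s,\infty)^2$ as the composition of the operator $\mathfrak A_\tau$ (acting from the $x$-space to the $z$-space), the projector $\pi$ onto $[\wt\s,\infty)$, the resolvent $(\1-K_\Ai)^{-1}$ restricted by $\pi$, and the transpose $\mathfrak A_{-\tau}^T$; the transpose appears because in the kernel formulation $\mathfrak A_{-\tau}$ has its spatial argument ($y$) in the second slot rather than the first. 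I would also recall the affine change of variables $u' = 2^{-1/3}(u - \sigma)$ linking $\wt{\mathbb K}$ to $\mathbb K^{\rm tac}$, which was the motivation for the shift-and-rescale at the outset.

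The main obstacle is purely the constant-chasing: keeping the powers of $2$ (the $\sqrt[3]{2}$ prefactor, the $2^{1/6}$ inside $\mathcal B_\tau$, and the three Jacobians from $u$, $z$, $w$) straight so that they cancel exactly and the plain, un-rescaled Airy kernel $K_\Ai$ is what appears in the resolvent. There is no analytic subtlety — invertibility of $\1 - K_\Ai\big|_{[\wt\s,\infty)}$ is standard and the integrals converge by the superexponential decay of $\Ai$ — so once the substitutions are carried out carefully the proposition follows.
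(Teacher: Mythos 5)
Your proposal is correct and takes essentially the same route as the paper, which states this proposition without a separate proof precisely because it is a direct unwinding of the definitions of $K_\Ai^{(\tau,-\tau)}$, $\mathcal B_\tau$, $\mathcal A$, $\mathfrak A_\tau$ and the projector $\pi$ after the shift-and-rescale $u'=2^{-1/3}(u-\sigma)$ — exactly the definition-chasing you carry out. Your constant accounting is the right one (the two factors $2^{1/6}$ hidden in $\mathcal B_\tau$ combine to $\sqrt[3]{2}$ and absorb the overall prefactor, while the resolvent variables $z,w$ are never rescaled, so the plain Airy kernel on $[\wt\s,\infty)$ remains), matching the paper's intended bookkeeping.
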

  
Let $\mathcal{I} = [a_1, a_2]\sqcup [a_3,a_4] \dots \sqcup [a_{2K-1}, a_{2K}]$ and denote by  $\Pi$ the projector on $\mathcal{I}$. We will denote with $\tilde \Pi$ the projection on the rescaled and translated collection of intervals $[\tilde a_1, \tilde a_2] \sqcup \ldots \sqcup [\tilde a_{2K-1}, \tilde a_{2K}]$, where $\tilde a_j := 2^{-\frac{1}{3}}(a_j-\sigma)$. We are interested in studying the gap probability of the Tacnode process restricted to this collection of intervals, namely  
\begin{equation}
\det( \1 - \Pi \mathbb K^{\rm tac} \Pi) = \det \left( \1 - 2^{\frac{1}{3}} \tilde \Pi \le( K_\Ai^{(\tau,-\tau)} + \mathfrak A_\tau\pi \left(\1 - K_\Ai \bigg|_{[\wt \s, \infty)}\right)^{-1} \pi \mathfrak A_{-\tau}^T\ri)\tilde \Pi\right)
\end{equation}


The following proposition is a restatement of Theorem 3.1 from \cite{MeMnum}, adapted to the single-time case which we are examining.
\begin{prop} \label{conditionedprob}
The gap probability of the Tacnode process admits the following equivalent representation
\begin{gather}
\det( \1 - \Pi \mathbb K^{\rm tac}\Pi)  = F_2(\wt \s)^{-1} \det \le(\1 - \hat \Pi \mathbb{H} \hat \Pi  \right) := \nonumber \\
=F_2(\wt \s)^{-1} \det \le(\1  - \sqrt[3]{2} \le[ \begin{array}{c|c}
 \pi K_\Ai \pi & -\sqrt[6]{2} \pi \mathfrak A_{-\tau}^T\tilde \Pi\\
 \hline
 - \sqrt[6]{2} \tilde \Pi \mathfrak A_{\tau} \pi & \sqrt[3]{2} \tilde \Pi K_\Ai^{(\tau,-\tau)} \tilde \Pi 
\end{array}\ri]\ri)   \label{explode0}
\end{gather}
where  $\hat \Pi \mathbb{H} \hat \Pi$ is an operator acting on the Hilbert space $L^2([\tilde \sigma, \infty)) \oplus L^2(\R)$,  $\hat \Pi := \pi \oplus \tilde \Pi$ and $F_2(\tilde \sigma)$ is the Tracy--Widom distribution 
\begin{equation}
F_2(\wt \s):= \det \le(\1 - K_{\Ai}\bigg|_{[\wt \s,\infty)}\ri).
\end{equation}
\end{prop}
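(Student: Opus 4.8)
The plan is to manipulate the Fredholm determinant $\det(\1 - \Pi\mathbb K^{\rm tac}\Pi)$ algebraically, turning the resolvent $(\1 - K_\Ai|_{[\wt\s,\infty)})^{-1}$ that appears inside $\wt{\mathbb K}$ into an \emph{honest} extra block of the operator, at the price of dividing by the determinant $F_2(\wt\s) = \det(\1 - K_\Ai|_{[\wt\s,\infty)})$ that is being "cancelled." The starting point is the representation from the previous Proposition, namely $\det(\1 - \Pi\mathbb K^{\rm tac}\Pi) = \det\!\big(\1 - 2^{1/3}\tilde\Pi(K_\Ai^{(\tau,-\tau)} + \mathfrak A_\tau\,\pi(\1 - K_\Ai|_{[\wt\s,\infty)})^{-1}\pi\,\mathfrak A_{-\tau}^T)\tilde\Pi\big)$. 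The key identity is the standard block-determinant trick: for operators $A$ on $L^2([\wt\s,\infty))$ and suitable off-diagonal pieces, one has
\begin{equation}
\det\begin{pmatrix} \1 - A & -B \\ -C & \1 - D\end{pmatrix} = \det(\1 - A)\,\det\!\big(\1 - D - C(\1-A)^{-1}B\big),
\end{equation}
valid whenever $\1 - A$ is invertible (which holds here since $A = \sqrt[3]{2}\,\pi K_\Ai \pi$ and $F_2(\wt\s)\neq 0$). Reading this identity from right to left with $A = \sqrt[3]{2}\,\pi K_\Ai\pi$, $D = 2^{2/3}\tilde\Pi K_\Ai^{(\tau,-\tau)}\tilde\Pi$, and $B,C$ the off-diagonal entries $\mp 2^{1/6}\cdot\sqrt[3]{2}$ times $\pi\mathfrak A_{-\tau}^T\tilde\Pi$ and $\tilde\Pi\mathfrak A_\tau\pi$ respectively, the Schur complement on the right-hand side reproduces exactly $\1 - 2^{1/3}\tilde\Pi\big(K_\Ai^{(\tau,-\tau)} + \mathfrak A_\tau\pi(\1 - \sqrt[3]{2}\pi K_\Ai\pi)^{-1}\pi\mathfrak A_{-\tau}^T\big)\tilde\Pi$, up to tracking the powers of $2$.

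First I would fix the conventions: recall that $\wt{\mathbb K}(x,y) = \sqrt[3]{2}\,\mathbb K^{\rm tac}(\sqrt[3]{2}x,\sqrt[3]{2}y)$, so that $\det(\1 - \Pi\mathbb K^{\rm tac}\Pi) = \det(\1 - \tilde\Pi\wt{\mathbb K}\tilde\Pi)$ with $\tilde a_j = 2^{-1/3}(a_j - \sigma)$. Substituting the representation of $\wt{\mathbb K}$ from the Proposition gives the Schur-complement side of the block identity with $A = K_\Ai|_{[\wt\s,\infty)}$ (no extra constant, since the resolvent there is the genuine Airy resolvent), $D = \sqrt[3]{2}\,\tilde\Pi K_\Ai^{(\tau,-\tau)}\tilde\Pi$ and off-diagonal terms $\sqrt[3]{2}\,\tilde\Pi\mathfrak A_\tau\pi$ and $\tilde\Pi\mathfrak A_{-\tau}^T$. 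The subtlety is cosmetic: to display the operator symmetrically with all entries carrying a common prefactor $\sqrt[3]{2}$ as in \eqref{explode0}, one rescales the two off-diagonal blocks by compensating factors $2^{1/6}$ and $2^{-1/6}$ (a similarity transformation, which leaves the determinant unchanged), absorbing the book-keeping so that the four blocks become $\pi K_\Ai\pi$, $-2^{1/6}\pi\mathfrak A_{-\tau}^T\tilde\Pi$, $-2^{1/6}\tilde\Pi\mathfrak A_\tau\pi$, $2^{1/3}\tilde\Pi K_\Ai^{(\tau,-\tau)}\tilde\Pi$, all multiplied by $\sqrt[3]{2}$ overall. Applying the block identity then yields
\begin{equation}
\det(\1 - \tilde\Pi\wt{\mathbb K}\tilde\Pi) = \frac{1}{\det(\1 - K_\Ai|_{[\wt\s,\infty)})}\det(\1 - \hat\Pi\mathbb H\hat\Pi) = F_2(\wt\s)^{-1}\det(\1 - \hat\Pi\mathbb H\hat\Pi),
\end{equation}
which is the claim.

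The steps I would carry out, in order: (i) state the block-determinant / Schur-complement identity for trace-class perturbations and justify its applicability — one needs $\mathfrak A_\tau$, $\mathfrak A_{-\tau}$ and $K_\Ai^{(\tau,-\tau)}$ restricted to the relevant intervals to define trace-class (indeed Hilbert–Schmidt with rapid decay) operators, which follows from the Gaussian-type decay of the Airy function and its shifts; (ii) identify the four blocks and verify that the Schur complement equals $\1 - \tilde\Pi\wt{\mathbb K}\tilde\Pi$ after substituting the Proposition's formula for $\wt{\mathbb K}$; (iii) perform the diagonal similarity transformation $\mathrm{diag}(2^{1/6}\cdot\1,\,\1)$ (or its inverse) to symmetrize the constants and match \eqref{explode0} exactly; (iv) read off the ratio, recognizing $\det(\1 - K_\Ai|_{[\wt\s,\infty)}) = F_2(\wt\s)$. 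The main obstacle is really only step (i)–(ii): one must be careful that the resolvent $(\1 - K_\Ai|_{[\wt\s,\infty)})^{-1}$ is well-defined (true because $F_2(\wt\s) > 0$ for all real $\wt\s$, a known property of the Tracy–Widom distribution), that all the operators live on the correct direct-sum Hilbert space $L^2([\wt\s,\infty))\oplus L^2(\R)$ with $\hat\Pi = \pi\oplus\tilde\Pi$, and that the adjoint/transpose $\mathfrak A_{-\tau}^T$ is placed on the correct side so that the Schur complement contracts the $L^2([\wt\s,\infty))$ factor and not the $L^2(\R)$ factor. Everything else is bookkeeping of the powers of $\sqrt[3]{2}$ coming from the spatial rescaling $u = \sqrt[3]{2}u'$.
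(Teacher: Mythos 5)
Your proposal is correct and follows essentially the same route as the paper: the paper proves \eqref{explode0} by factoring $\1-\hat\Pi\mathbb{H}\hat\Pi$ into a product of block-triangular operators whose determinants are $\det(\1-\pi K_\Ai\pi)=F_2(\wt\s)$ and $\det(\1-\tilde\Pi\wt{\mathbb K}\tilde\Pi)$, which is precisely the Schur-complement identity you invoke, with the same trace-class justification. The only caveat is your step (iii): a diagonal similarity $\operatorname{diag}(c\,\1,\1)$ only redistributes constants between the off-diagonal blocks and cannot produce the overall prefactor $\sqrt[3]{2}$ multiplying the $(1,1)$ block in the displayed statement; the paper's own proof works with the block operator without that overall factor (so that the resolvent is the genuine Airy resolvent and the first factor is $F_2(\wt\s)$), exactly as you do when you take $A=K_\Ai\big|_{[\wt\s,\infty)}$.
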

\begin{oss}
The projection $\pi$ in (\ref{explode0}) is redundant since by definition the operator acts on the Hilbert space $L^2([\wt \sigma,\infty))$, but we will keep it for convenience.
\end{oss}

The gap probabilities of the Tacnode process are expressible as ratio of two Fredholm determinants. Therefore, we can interpret the tacnode process as a (formal) conditioned process: its gap probabilities are the gap probabilities of the process $\mathbb{H}$ conditioned such that there are no points in the interval $[\wt \s, \infty)$.

\begin{proof}
The identity is based on the following operator identity (all being trace-class perturbations of the identity)
\begin{gather}
 \det \le(\1  - \le[ \begin{array}{c|c}
 \pi K_\Ai\pi & -\sqrt[6]{2} \pi \mathfrak A_{-\tau}^T\tilde \Pi\\
 \hline
 -\sqrt[6]{2} \tilde \Pi \mathfrak A_{\tau} \pi & \sqrt[3]{2} \tilde \Pi K_\Ai^{(\tau,-\tau)} \tilde \Pi 
\end{array}\ri]\ri) =
 \det\le[\begin{array}{c|c}
 \1 -\pi K_\Ai \pi  & 0 \\ 
 \hline 
 0&\1 \end{array}
 \ri] 
 \det  \le[
\begin{array}{c|c}
\1 & 0\\
\hline
\sqrt[6]{2} \tilde \Pi \mathfrak A_\tau \pi & \1
\end{array}
\ri] \times \nonumber \\
\times \det \le[ \begin{array}{c|c}
 \1 & \sqrt[6]{2} (\1-K_\Ai)^{-1}_{\wt \s} \pi \mathfrak A_{-\tau}^T \tilde \Pi\\
 \hline
 0 & \1 -  \sqrt[3]{2} \left\{ \tilde \Pi K_\Ai^{(\tau,-\tau)}\tilde \Pi - \tilde \Pi \mathfrak A_\tau \pi (\1-K_\Ai)^{-1}_{\wt \s}\pi \mathfrak A_{-\tau}^T \tilde \Pi \right\}
\end{array}\ri]  \cr
=\det \le(\1- \pi K_\Ai \pi  \ri) \det( \1 - \tilde \Pi \wt{\mathbb K} \tilde \Pi) 
\end{gather}
\end{proof}

Our next goal is to find suitable Fourier representations of the various operators appearing in (\ref{explode0}). In order to do that, we will rewrite the kernels involved, with their projections respectively, in terms of contour integrals. The results are shown in the following two lemmas. Their proof is just a matter of straightforward calculations using Cauchy's residue theorem.  
  \begin{lemma}
The kernels involved in the definitions can be represented as the following contour integrals
\begin{eqnarray}
\mathcal B^\tau(x,z) &\&  = 2^{-\frac{1}{6}} \int_{\gamma_R} \frac {\d \lambda}{2i\pi} {\rm e}^{\wt\theta_{\tau}(\lambda;x+z)},  \ \ \ 
\mathcal A(z,w) = \int_{\gamma_R} \frac {\d \lambda}{2i\pi} {\rm e}^{\theta(\lambda;x+z)} \\
 \mathfrak A_\tau (x,z)&\& = 2^{-\frac{1}{6}} \le[ - 
\int_{\gamma_L} \frac {\d\mu}{2\pi i} {\rm e}^{ -\wt \theta_{-\tau}(\mu; x+z)} 
-\int_{\gamma_L} \frac {\d\mu}{2\pi i}\int_{\gamma_R} \frac {\d\lambda}{2\pi i}
 \frac { {\rm e}^{ - \wt \theta_{-\tau}(\mu; -x) + \theta(\lambda;z)}}{\mu-\lambda}
\ri] \\
 K_\Ai^{(\tau,-\tau)}(x,y) &\&= \int_{\gamma_R}\! \frac{\d\lambda}{2i\pi}  \int_{\gamma_L}\!
\frac {\d\mu}{2i\pi} \frac { {\rm e}^{-\wt \theta_{-\tau}(\mu, -x) + \wt \theta_{-\tau}(\lambda, -y) } }{\sqrt[3]{2} (\mu - \lambda )}  \\
K_\Ai(z,w) &\& :=   \int_{\gamma_R}  \frac{\d\lambda}{2i\pi}      \int_{\gamma_L}\frac {\d\mu}{2i\pi}     \frac { {\rm e}^{ \theta(\lambda, z) - \theta(\mu, w) } } {\mu - \lambda } 
\end{eqnarray}
with $\wt \theta_\tau(\lambda; x) := \frac {\lambda^3}6  + \frac {\tau}{2^{2/3}} \lambda^2 -x\lambda$ and $\theta(\lambda;x):= \frac {\lambda^3}3 - x\lambda$.
\end{lemma}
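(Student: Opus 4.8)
The plan is to prove the Lemma by direct computation, reducing every kernel appearing in the statement of Proposition \ref{conditionedprob} to a contour-integral form via Cauchy's residue theorem. I would organize the work around the two basic ``building-block'' functions, $\Ai^{(\tau)}$ and $\Ai$, for which the excerpt already records the integral representations over $\gamma_R$ (and $\gamma_L=-\gamma_R$). The first step is the easy one: substitute $x\mapsto 2^{1/3}(x+z)$ into the definition of $\Ai^{(\tau)}$ and collect the cubic/quadratic/linear terms in $\lambda$, introducing the phase $\wt\theta_\tau(\lambda;x):=\frac{\lambda^3}{6}+\frac{\tau}{2^{2/3}}\lambda^2-x\lambda$; this yields the claimed formula for $\mathcal B^\tau(x,z)=2^{1/6}\Ai^{(\tau)}(2^{1/3}(x+z))$ (note the $2^{1/6}$ prefactor absorbs the Jacobian-type constant coming from the rescaling of $\lambda$), and similarly the formula for $\mathcal A(z,w)=\Ai(z+w)$ with the phase $\theta(\lambda;x):=\frac{\lambda^3}{3}-x\lambda$.

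The second step handles the two ``double-integral'' kernels $K_\Ai^{(\tau,-\tau)}$ and $K_\Ai$, which are defined as semi-infinite convolutions $\int_0^\infty\d u\,(\cdots)$. The standard trick is: write each Airy factor as its contour integral, one over $\gamma_R$ (say in variable $\lambda$) and one over $\gamma_L$ (in variable $\mu$), interchange the $u$-integral with the contour integrals, and perform $\int_0^\infty e^{-u(\mu-\lambda)}\,\d u=\frac{1}{\mu-\lambda}$; here one must check the sign of $\Re(\mu-\lambda)$ on the chosen contours to guarantee convergence of the $u$-integral — this is exactly why one Airy function is taken on $\gamma_R$ and the conjugate one on $\gamma_L$, since then $\Re\mu>0>\Re\lambda$ and $\Re(\mu-\lambda)>0$. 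For $K_\Ai^{(\tau,-\tau)}(x,y)=\int_0^\infty\d u\,\Ai^{(\tau)}(-x+u)\Ai^{(-\tau)}(-y+u)$ one gets, after the rescaling, the stated formula with numerator $e^{-\wt\theta_{-\tau}(\mu,-x)+\wt\theta_{-\tau}(\lambda,-y)}$ over $\sqrt[3]{2}(\mu-\lambda)$ (the $\sqrt[3]2$ in the denominator again records the Jacobian of the rescaling); for $K_\Ai(z,w)=\int_0^\infty\d u\,\Ai(z+u)\Ai(w+u)$ one gets $\frac{e^{\theta(\lambda,z)-\theta(\mu,w)}}{\mu-\lambda}$.

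The third and slightly more delicate step is $\mathfrak A_\tau(x,z)=\mathcal A^\tau_{2^{1/3}x}(z)$, which by definition has two pieces: a single term $\mathcal B_\tau(x,z)$ and a convolution term $-\int_0^\infty\d w\,\mathcal B_\tau(-x,w)\mathcal A(w,z)$. For the first piece I would deform $\mathcal B_\tau$'s representation from $\gamma_R$ to $\gamma_L$ using $\Ai(x)=-\int_{\gamma_L}\frac{\d\lambda}{2i\pi}e^{-\lambda^3/3+x\lambda}$ (the parity identity already quoted), which flips the sign of the cubic and linear terms and produces the $-\int_{\gamma_L}\frac{\d\mu}{2\pi i}e^{-\wt\theta_{-\tau}(\mu;x+z)}$ term with the overall minus sign and $-\tau\to\tau$ bookkeeping. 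For the convolution piece I would apply the same ``$\int_0^\infty e^{-w(\mu-\lambda)}\d w=\frac1{\mu-\lambda}$'' manoeuvre as in step two, writing $\mathcal B_\tau(-x,w)$ over $\gamma_L$ (variable $\mu$) and $\mathcal A(w,z)$ over $\gamma_R$ (variable $\lambda$); this yields the double-integral term $-\int_{\gamma_L}\frac{\d\mu}{2\pi i}\int_{\gamma_R}\frac{\d\lambda}{2\pi i}\frac{e^{-\wt\theta_{-\tau}(\mu;-x)+\theta(\lambda;z)}}{\mu-\lambda}$, and the common $2^{-1/6}$ prefactor is collected out front.

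Overall I expect no conceptual obstacle: the whole Lemma is ``a matter of straightforward calculations using Cauchy's residue theorem,'' as the text says. The one place demanding genuine care — and the step I would flag as the main technical point — is the consistent tracking of the powers of $2$ (the $2^{1/6}$, $2^{-1/6}$, $\sqrt[3]2$ factors) together with the orientation/contour choices ($\gamma_R$ versus $\gamma_L$) that make each $\int_0^\infty$ converge; getting a sign or a cube-root of $2$ wrong there would corrupt the phases $\wt\theta_\tau$ and $\theta$ and hence everything downstream. A secondary point worth a sentence is justifying the interchange of the $u$- (or $w$-) integral with the contour integrals, which follows from absolute convergence once the contours are fixed so that $\Re(\mu-\lambda)>0$ is bounded below along them and the Airy phases provide cubic decay at infinity.
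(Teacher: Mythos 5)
Your proposal is correct and is essentially the proof the paper has in mind: the paper itself offers no written argument beyond ``straightforward calculations using Cauchy's residue theorem,'' and your plan (insert the $\gamma_R$/$\gamma_L$ representations of $\Ai$ and $\Ai^{(\tau)}$, rescale $\lambda\mapsto 2^{-1/3}\lambda$ to produce $\wt\theta_\tau$, $\theta$ and the powers of $2$, and evaluate the $\int_0^\infty$ convolutions as geometric-type integrals) is exactly that computation. One bookkeeping slip to fix before writing it up: with the paper's conventions $\gamma_R$ lies in $\Re\lambda>0$ and $\gamma_L=-\gamma_R$ in $\Re\mu<0$ (not $\Re\mu>0>\Re\lambda$ as you wrote), so the $u$-integral is $\int_0^\infty {\rm e}^{u(\mu-\lambda)}\,\d u=\frac{1}{\lambda-\mu}$, convergent because $\Re(\mu-\lambda)<0$, and the denominators $\frac{1}{\mu-\lambda}$ in the Lemma then arise from the extra minus sign carried by the $\gamma_L$ representation $\Ai(x)=-\int_{\gamma_L}\frac{\d\mu}{2i\pi}{\rm e}^{-\mu^3/3+x\mu}$; as written, your exponent ${\rm e}^{-u(\mu-\lambda)}$ and half-plane assignment are both transposed (the two slips cancel, so your final formulas are right, but the justification of convergence should be stated with the correct half-planes).
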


Moreover, giving the projector $\tilde \Pi$ on the collection of intervals $\bigcup_j [\tilde a_{2j-1}, \tilde a_{2j}]$, the following identities hold
\begin{eqnarray}
&\& \tilde \Pi \mathfrak A_\tau \pi (x,w) =\\
&\&=\sum_{j} ^{2K} \frac {(-1)^j}  {\sqrt[3]{2}}
\int_{i\R} \frac{\d \xi}{2i\pi} {\rm e}^{\xi( \tilde a_j - x)}
\int_{i\R} \frac {\d \zeta}{2i\pi} {\rm e}^{ \zeta(\wt \s - w)}
  \le[
\int_{\gamma_R} \frac {\d\lambda}{2\pi i} \frac{{\rm e}^{ \wt \theta_\tau(\lambda; \tilde a_j+\wt \s)}}
{(\xi - \lambda)(\lambda-\zeta)} 
-\int_{\gamma_L} \!\!\!\frac {\d\mu}{2\pi i}\int_{\gamma_R} \!\!\!\frac {\d\lambda}{2\pi i}
 \frac { {\rm e}^{ - \wt \theta_{-\tau}(\mu; -\tilde a_j) + \theta(\lambda;\wt \s)}}
 {(\mu-\lambda)(\xi - \mu)(\lambda - \zeta)}
\ri]\nonumber
\end{eqnarray}
\begin{eqnarray}
&\& \pi \mathfrak A_{-\tau}^T \tilde \Pi (z,y) =\\
&\& = \sum_{j} \frac {(-1)^{j}}  {\sqrt[3]{2}} \int_{i\R} \frac {\d\xi}{2i\pi} {\rm e}^{\xi(z - \wt \s )}
 \int_{i\R} \frac {\d\zeta}{2i\pi} {\rm e}^{\zeta(y - \tilde a_j )}
\le[
-\int_{\gamma_L}\!\!\! \frac {\d\mu}{2\pi i} \frac{{\rm e}^{- \wt \theta_{\tau}(\mu; \tilde a_j+\wt \s)}}
{(\xi - \mu)( \mu - \zeta )} 
-\int_{\gamma_L} \!\!\! \frac {\d\mu}{2\pi i}\int_{\gamma_R}\!\!\! \frac {\d\lambda}{2\pi i}
 \frac { {\rm e}^{  \wt \theta_{-\tau}(\lambda; -\tilde a_j) - \theta(\mu;\wt \s)}}
 {(\mu-\lambda)(\xi - \mu)(\lambda - \zeta)}
\ri]\nonumber
\end{eqnarray}
\begin{eqnarray}
&\& \tilde \Pi K_\Ai^{(\tau,-\tau)}(x,y) \tilde \Pi = \\
&\& = \sum_{j,k} \frac{(-1)^{j+k}}{\sqrt[3]{4}} 
\int_{i\R}\frac {\d\xi}{2i\pi} {\rm e}^{\xi(\tilde a_j-x)}
\int_{i\R}\frac {\d\zeta}{2i\pi} {\rm e}^{\zeta(y - \tilde a_k)}
 \int_{\gamma_R}\frac{\d\lambda}{2i\pi}  \int_{\gamma_L}\frac{\d\mu}{2i\pi} \frac { {\rm e}^{ - \wt \theta_{-\tau}(\mu, -\tilde a_j) + \wt \theta_{-\tau}(\lambda, -\tilde a_k) } }
 {(\mu - \lambda )(\xi -\mu)(\lambda - \zeta)} \nonumber \\
&\&\pi K_\Ai \pi(z,w)  =\int_{i\R}\frac{\d \xi}{2i\pi} {\rm e}^{\xi(z - \wt \s)} 
\int_{i\R}\frac{\d \zeta}{2i\pi} {\rm e}^{\zeta( \wt \s- w)}  \int_{\gamma_R}\frac{\d\lambda}{2i\pi}  \int_{\gamma_L} 
\frac{\d\mu}{2i\pi} \frac { {\rm e}^{ \theta(\lambda, \wt \s) - \theta(\mu, \wt\s) } } {(\mu - \lambda )(\xi - \mu)(\lambda-\zeta)  }
\end{eqnarray}

\begin{lemma}
\label{lemmaFourier}
The Fourier representation of the previous operators is the following
\begin{eqnarray}
&\& \mathcal F(\tilde \Pi \mathfrak A_\tau\pi)(\xi, \zeta)=\\
&\&=
\sum_{j} \frac {(-1)^j}  {2i\pi \sqrt[3]{2}} {\rm e}^{\xi \tilde a_j+ \zeta \wt \s}
  \le[
\int_{\gamma_R} \frac {\d\lambda}{2\pi i} \frac{{\rm e}^{ \wt \theta_\tau(\lambda; \tilde a_j+\wt \s)}}
{(\xi - \lambda)(\lambda-\zeta)} 
-\int_{\gamma_L} \!\!\!\frac {\d\mu}{2\pi i}\int_{\gamma_R} \!\!\!\frac {\d\lambda}{2\pi i}
 \frac { {\rm e}^{ - \wt \theta_{-\tau}(\mu; -\tilde a_j) + \theta(\lambda;\wt \s)}}
 {(\mu-\lambda)(\xi - \mu)(\lambda - \zeta)}
\ri]\nonumber
\end{eqnarray}
\begin{eqnarray}
&\& \mathcal F(\pi  \mathfrak A_{-\tau}^T\tilde \Pi  )(\xi, \zeta)=\\
&\& = \sum_{k} \frac {(-1)^k}  {2i\pi \sqrt[3]{2}} {\rm e}^{ - \wt \s \xi - \tilde a_k \zeta}
\le[
-\int_{\gamma_L} \frac {\d\mu}{2\pi i} \frac{{\rm e}^{- \wt \theta_{\tau}(\mu; \tilde a_k+\wt \s)}}{(\xi-\mu)( \mu - \zeta )} 
-\int_{\gamma_L} \frac {\d\mu}{2\pi i}\int_{\gamma_R} \frac {\d\lambda}{2\pi i}
 \frac { {\rm e}^{  \wt \theta_{-\tau}(\lambda; -\tilde a_k) - \theta(\mu;\wt \s)}}{(\mu-\lambda)(\xi - \mu)(\lambda - \zeta)}
\ri] \nonumber
\end{eqnarray}
\begin{eqnarray}
&\& \mathcal F( \tilde \Pi K_\Ai^{(\tau,-\tau)}\tilde \Pi) (\xi, \zeta)=
 \sum_{j,k} \frac{(-1)^{j+k}}{2i\pi \sqrt[3]{4}} 
{\rm e}^{  \tilde a_j \xi  - \tilde a_k \zeta}
 \int_{\gamma_R}\frac{\d\lambda}{2i\pi}  \int_{\gamma_L}\frac{\d\mu}{2i\pi} \frac { {\rm e}^{ - \wt \theta_{-\tau}(\mu, -\tilde a_j) + \wt \theta_{-\tau}(\lambda, -\tilde a_k) } }{(\mu - \lambda )( \xi-\mu)( \lambda - \zeta)} \\
&\& \mathcal F( \pi K_\Ai\pi)(\xi, \zeta)  =\frac1{2i\pi}  {\rm e}^{ \wt \s (\zeta -  \xi)  }
  \int_{\gamma_R}\frac{\d\lambda}{2i\pi}  \int_{\gamma_L}
\frac{\d\mu}{2i\pi} \frac { {\rm e}^{ \theta(\lambda, \wt \s) - \theta(\mu, \wt\s) } } {(\mu - \lambda )(\xi - \mu)(\lambda-\zeta)  }
\end{eqnarray}
All these kernels act on $L^2(i\R)$.
\end{lemma}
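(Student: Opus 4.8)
The plan is to read off each of the four Fourier kernels directly from the contour-integral representations of $\tilde\Pi\mathfrak A_\tau\pi$, $\pi\mathfrak A_{-\tau}^T\tilde\Pi$, $\tilde\Pi K_\Ai^{(\tau,-\tau)}\tilde\Pi$ and $\pi K_\Ai\pi$ just established, because those representations are nothing but the inverse-Fourier reconstructions of the asserted kernels. Concretely, I would first fix on each of the two summands $L^2([\wt\s,\infty))$ and $L^2(\R)$ a Fourier transform onto $L^2(i\R)$ — with normalisations, and with the two orientations taken opposite to one another, chosen precisely so that Fourier inversion reproduces the contour representations recorded above (the projector $\pi$ being encoded simply by restricting the relevant spatial integration to $[\wt\s,\infty)$, and $\tilde\Pi$ by restricting it to $\mathcal I=\bigcup_j[\tilde a_{2j-1},\tilde a_{2j}]$) — and let $\F$ be their direct sum, acting block-wise on $L^2([\wt\s,\infty))\oplus L^2(\R)$. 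With this choice, conjugating each block of $\hat\Pi\mathbb{H}\hat\Pi$ by $\F$ amounts to stripping off the external exponentials $\e^{\pm\xi\cdot}$, $\e^{\pm\zeta\cdot}$; the opposite orientation on the two summands is exactly what produces the opposite sign patterns visible in the four formulas (for example $\e^{\xi\tilde a_j+\zeta\wt\s}$ in $\F(\tilde\Pi\mathfrak A_\tau\pi)$ versus $\e^{-\wt\s\xi-\tilde a_k\zeta}$ in $\F(\pi\mathfrak A_{-\tau}^T\tilde\Pi)$), while the transpose in $\mathfrak A_{-\tau}^T$ additionally interchanges the two slots.

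The execution is then immediate, and case-by-case it is a one-line identification. For instance, the representation of $\pi K_\Ai\pi$ recorded above already exhibits this kernel as the double Fourier integral over $(\xi,\zeta)\in(i\R)^2$ of the function $(\xi,\zeta)\mapsto\frac1{2i\pi}\,\e^{\wt\s(\zeta-\xi)}\int_{\gamma_R}\frac{\d\lambda}{2i\pi}\int_{\gamma_L}\frac{\d\mu}{2i\pi}\,\frac{\e^{\theta(\lambda,\wt\s)-\theta(\mu,\wt\s)}}{(\mu-\lambda)(\xi-\mu)(\lambda-\zeta)}$, so that this function is $\F(\pi K_\Ai\pi)$, which is exactly the claimed formula; the three remaining operators are handled in the same way, the only new feature being the alternating sum $\sum_j(-1)^j$ over the $2K$ endpoints $\tilde a_j$ coming from the decomposition of $\tilde\Pi$, together with the constants $\sqrt[3]{2}$, $\sqrt[3]{4}$. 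No residue computation is needed at this stage: the Cauchy-residue work — the Airy double-contour identity $\int_0^\infty\Ai(z+u)\Ai(w+u)\,\d u=\int_{\gamma_R}\int_{\gamma_L}\frac{\d\lambda\,\d\mu}{(2i\pi)^2}\,\frac{\e^{\theta(\lambda,z)-\theta(\mu,w)}}{\mu-\lambda}$ and its $\wt\theta$ analogue for $K_\Ai^{(\tau,-\tau)}$, together with the elementary integrals $\int_{\tilde a_{2j-1}}^{\tilde a_{2j}}\e^{(\lambda-\xi)x}\,\d x$ and $\int_{\wt\s}^{\infty}\e^{(\lambda-\zeta)w}\,\d w$ that generate the denominators $(\xi-\lambda)$, $(\lambda-\zeta)$, $(\xi-\mu)$, $(\mu-\zeta)$ — was already carried out one step earlier, in obtaining the contour representations themselves.

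The only genuine work, and the step I expect to be the main obstacle, is the analytic bookkeeping needed to make this rigorous. First, the $\xi,\zeta$-contours $i\R$ must initially be read as slightly shifted off the imaginary axis, with signs chosen so that the half-line and interval integrals converge (for instance $\int_{\wt\s}^\infty\e^{\zeta(\wt\s-w)}\,\d w=\zeta^{-1}$ requires $\Re\zeta>0$, and correspondingly the opposite sign on the conjugate side), after which one deforms back. Second, one must verify that the resulting double and triple integrals over $\gamma_R$, $\gamma_L$ and $i\R$ converge absolutely, so that Fubini applies and the order of integration is immaterial: along $\gamma_R$ (respectively $\gamma_L$) the cubic exponents satisfy $\Re(\lambda^3)\to-\infty$ (respectively $\Re(-\mu^3)\to-\infty$), giving exponential decay, while along $i\R$ the product of the two (or three) linear denominators gives decay of order $|\xi|^{-2}$. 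Apart from this routine verification, and the care needed with signs, orientations and the transpose, the lemma is forced by the representations already in hand; its purpose is to put the block operator $\hat\Pi\mathbb{H}\hat\Pi$ of Proposition~\ref{conditionedprob} into integrable (Its-Izergin-Korepin-Slavnov) form on $L^2(i\R)$, which is the starting point for the steepest-descent analysis of the subsequent sections.
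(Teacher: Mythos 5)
Your proposal is correct and follows essentially the same route as the paper, which dispatches this lemma (together with the preceding contour-integral identities) as ``straightforward calculations using Cauchy's residue theorem'': once the projected kernels are written as double inverse-Fourier integrals over $\xi,\zeta\in i\R$, the kernels $\mathcal F(\cdot)(\xi,\zeta)$ are simply read off by stripping the external exponentials, exactly as you do. Your added remarks on contour regularization and Fubini are just a more explicit account of the same bookkeeping the paper leaves implicit.
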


With the convention that $\rho,\zeta,\xi\in i\R$ and $\lambda \in \gamma_R, \ \mu \in \gamma_L$, we have the following result.
\begin{lemma}
The operators in Lemma \ref{lemmaFourier} can be represented as the composition of several operators: 
\begin{gather}
\mathcal F(\pi K_\Ai\pi)(\xi, \zeta) = A(\xi,\mu) C(\mu, \lambda) B(\lambda, \zeta)\\
A(\xi,\mu) := \frac { {\rm e}^{(\mu-\xi) \wt \s - \frac {\mu^3}4 }}{2i\pi (\xi - \mu)} 
\ \ \ \ C(\mu,\lambda) := \frac { {\rm e}^{  \frac {\lambda^3 - \mu^3}{12}  }}{2i\pi (\mu - \lambda )}\ \ \ \ 
B(\lambda,\zeta) := \frac { {\rm e}^{\frac {\lambda^3}4 +  (\zeta - \lambda) \wt \s  }} {2i\pi ( \lambda-\zeta)} 
\end{gather}
\begin{gather}
\hline \nonumber \\
\mathcal F(\tilde \Pi K_\Ai^{(\tau,-\tau)}\tilde \Pi)(\xi,\zeta):= A _j(\xi, \mu) C (\mu,\lambda) B_k (\lambda, \zeta)\\
A_j(\xi, \mu)  := \sum_j\frac { (-1)^j {\rm e}^{(\xi- \mu)  \tilde a_j  -\frac {\mu^3}{12}+ \frac{\tau}{2^{2/3}} \mu^2 }}{2i\pi (\xi-\mu)\sqrt[3]{2} }\ \ \ 
 \
B_k(\lambda, \zeta):= \sum_k\frac { (-1)^k{\rm e}^{\frac {\lambda^3}{12}- \frac{ \tau}{2^{2/3}} \lambda^2 + (\lambda- \zeta) \tilde a_k} }{2i\pi ( \lambda-\zeta  )\sqrt[3]{2} }
\end{gather}
\begin{gather}
\hline \nonumber\\ 
\mathcal F(\tilde \Pi\mathfrak A_{\tau}\pi)(\xi,\zeta):= H_j(\xi,\lambda) Q_R (\lambda,\zeta)  -  A_j(\xi,\mu)C(\mu,\lambda) B(\lambda, \zeta) \\
H_j(\xi,\lambda) :=\sum_j (-1)^j \frac{ { \rm e}^{(\xi-\lambda)\tilde a_j - \wt \s\lambda  +\frac {\lambda^3}{12} + \frac{ \tau}{2^{2/3}} \lambda^2 } }{2i\pi (\xi - \lambda) \sqrt[3]{2} }\ ,\ \ \ \ 
Q_R(\lambda, \zeta) := \frac {{\rm e}^{\frac {\lambda^3}{12}  + \wt \s \zeta}}{2i\pi( \lambda - \zeta)}
\end{gather}
\begin{gather}
\hline \nonumber \\
\mathcal F(\pi\mathfrak A_{-\tau}^T\tilde \Pi)(\xi,\zeta):= Q_L (\xi,\mu) \wt H _k(\mu, \zeta) 
- A(\xi,  \mu) C (\mu,\lambda) B_k(\lambda, \zeta) \\
\wt H_k(\mu,\zeta) :=\sum_{k} (-1)^{k+1} \frac{ {\rm e}^{ (\mu-\zeta) \tilde a_k +  \mu \wt \s - \frac {\mu^3}{12} - \frac{ \tau}{2^{2/3}} \mu^2  }   }{2i\pi (\mu - \zeta) \sqrt[3]{2} }\ ,\ \ \ \ 
Q_L(\xi,\mu):= \frac {{\rm e}^{-\frac {\mu^3}{12} - \wt \s \xi } } {2i\pi (\xi - \mu)}
\end{gather}
with
\begin{eqnarray}
B, B_k: L^2(i\R)\to L^2(\gamma_R)\\
A, A_j: L^2(\gamma_L)\to L^2(i\R)\\
C: L^2(\gamma_R) \to L^2(\gamma_{L})   \\
H_j: L^2(\gamma_{R})\to L^2(i\R)\ \ \ Q_R:L^2(i\R)\to L^2(\gamma_R)\\
Q_L: L^2(\gamma_L) \to L^2(i\R) \  \ \ \wt H_k: L^2(i\R)\to L^2(\gamma_{L})
\end{eqnarray}
\end{lemma}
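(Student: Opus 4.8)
I would prove this by direct verification: the statement is a repackaging of the contour-integral formulas of Lemma~\ref{lemmaFourier} into operator-composition form, so no genuinely new computation is needed. The starting point is the observation that each Fourier kernel in Lemma~\ref{lemmaFourier} is an integral over the two ``inner'' variables $\lambda\in\gamma_R$, $\mu\in\gamma_L$ of an integrand of the form (a rational factor with simple poles in $\xi-\mu$, $\mu-\lambda$, $\lambda-\zeta$) times (the exponential of a polynomial in $\lambda,\mu,\xi,\zeta$ and the data $\wt\s,\tilde a_j$). An integrand with this ``chain'' pole structure $\tfrac1{(\xi-\mu)(\mu-\lambda)(\lambda-\zeta)}$, in which each consecutive pair of variables is coupled only through one pole and through an exponent that separates additively, is by definition the kernel of a composition $A\circ C\circ B$, with the integrations $\int_{\gamma_L}\d\mu$ and $\int_{\gamma_R}\d\lambda$ serving as the two internal contractions.

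\textbf{Key steps.} First, for $\mathcal F(\pi K_\Ai\pi)$ and $\mathcal F(\tilde\Pi K_\Ai^{(\tau,-\tau)}\tilde\Pi)$ I would split the exponent additively into an $(\xi,\mu)$-piece, a $(\mu,\lambda)$-piece and a $(\lambda,\zeta)$-piece, attach the three pole factors to the respective pieces, and then read off $A,C,B$ (resp. $A_j,C,B_k$). The only genuine choice here is how to distribute the cubic terms between the outer and the middle factor; the splits $\tfrac{\lambda^3}{3}=\tfrac{\lambda^3}{4}+\tfrac{\lambda^3}{12}$, $\tfrac{\mu^3}{3}=\tfrac{\mu^3}{4}+\tfrac{\mu^3}{12}$ (and $\tfrac{\lambda^3}{6}=\tfrac{\lambda^3}{12}+\tfrac{\lambda^3}{12}$ in the $\tau$-deformed phases) are the right ones because they leave in each single-contour factor a cubic that \emph{decays} at infinity --- $\Re\lambda^3\to-\infty$ along $\gamma_R$ and $\Re\mu^3\to+\infty$ along $\gamma_L$ --- while the quadratic $\tfrac{\tau}{2^{2/3}}\lambda^2$ and the linear endpoint terms are lower order; the $\wt\s$- and $\tilde a_j$-prefactors are absorbed into $A,B$ (resp. $A_j,B_k$). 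Second, for $\mathcal F(\tilde\Pi\mathfrak A_\tau\pi)$ and $\mathcal F(\pi\mathfrak A_{-\tau}^T\tilde\Pi)$ I would use the defining identity $\mathfrak A_\tau(x,z)=\mathcal B_\tau(x,z)-\int_0^\infty\mathcal B_\tau(-x,w)\mathcal A(w,z)\,\d w$: the $\mathcal B_\tau$ term has a single contour integral and yields, after the projection and Fourier transform already carried out in Lemma~\ref{lemmaFourier}, a \emph{two}-factor composition $H_j\circ Q_R$ (resp. $Q_L\circ\wt H_k$), whereas the convolution term has a double contour integral and yields the \emph{three}-factor composition $A_j\circ C\circ B$ (resp. $A\circ C\circ B_k$) built from the same blocks; reading off exponents and poles gives the quoted $H_j,Q_R,Q_L,\wt H_k$, with the minus sign between the two summands coming from the definition of $\mathfrak A_\tau$. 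Third, I would check the mapping properties: each elementary kernel is Hilbert--Schmidt between the indicated spaces (cubic decay along the contours, and the contours are mutually disjoint so the simple poles stay bounded), so that the compositions and their Fredholm determinants are well defined.

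\textbf{Main obstacle.} There is no analytic difficulty; the work is entirely bookkeeping, and that is also where the only real care is needed. One has to fix the orientations of $\gamma_R$ and $\gamma_L$ so that ``composition'' literally means $\int_{\gamma_L}\d\mu\int_{\gamma_R}\d\lambda$ traversed consistently with the chain $L^2(i\R)\xrightarrow{B}L^2(\gamma_R)\xrightarrow{C}L^2(\gamma_L)\xrightarrow{A}L^2(i\R)$, and reconcile the sign coming from $\gamma_L=-\gamma_R$ and the relative sign of the two summands in the $\mathfrak A$ kernels with the explicit $(-1)^j,(-1)^k$ appearing in Lemma~\ref{lemmaFourier}. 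Keeping those signs straight across all four kernels is the delicate part; once the conventions are fixed, the identities are a finite, one-kernel-at-a-time check.
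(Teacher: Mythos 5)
Your proposal is correct and coincides with the paper's treatment: the paper states this lemma without further proof, regarding it as precisely the direct bookkeeping verification you describe — splitting the cubic phases as $\frac{\lambda^3}{3}=\frac{\lambda^3}{4}+\frac{\lambda^3}{12}$ and $\frac{\lambda^3}{6}=\frac{\lambda^3}{12}+\frac{\lambda^3}{12}$, attaching the chain of simple poles $\frac{1}{(\xi-\mu)(\mu-\lambda)(\lambda-\zeta)}$ to the individual factors, and reading the $\gamma_L$- and $\gamma_R$-integrations of Lemma \ref{lemmaFourier} as the internal contractions of the compositions. Your identification of the exponent splits, the two-factor versus three-factor structure coming from the two terms in the definition of $\mathfrak A_{\pm\tau}$, and the sign conventions (including the $(-1)^{k+1}$ in $\wt H_k$) all check out against the stated formulas, so nothing further is needed.
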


Finally,
\begin{prop}
The following identity of determinants holds
\begin{gather}
\det \le(\1 - \left[
\begin{array}{c|c}
AC B &   -  Q_L\wt H_k +  A C  B_k\\
\hline\\
  - H_j Q_R  + A_j C  B & A_j C  B_k
\end{array}
\ri]\ri)= 
\det  \le[
\begin{array}{c|c|c}
 \1 &  B &      B_k\\
 \hline
 AC & \1 & Q_L\wt H_k \\
\hline
 A_j C &H_j Q_R   & \1 
\end{array}
\ri] \nonumber \\
= \det \le[
\begin{array}{c|c|c|c|c|c}
\1_{L_1} & 0 & 0 &  0 & 0 & \wt H_k\\
\hline
0 & \1_{R_1} & 0 & 0 & Q_R & 0\\
\hline
0 & 0 & \1_{L_2} &  C & 0 & 0\\
\hline
0 & 0 & 0 & \1 & B & B_k\\
\hline
-Q_L & 0 & -A &  0 & \1_{0_2} & 0\\
\hline
0 & -H_j & -A_j &  0 & 0 & \1_{0_3}
\end{array}
\ri] = 
\det \le[
\begin{array}{c|c|c|c}
\1_{L_1} & \wt H_k H_j & \wt H_k A_j &0 \\
\hline
Q_R Q_L  & \1_{R_1} & Q_R A &0 \\
\hline
0 & 0 & \1_{L_2} & C \\
\hline
B Q_L  & B_k H_j  & B A + B_k A_j  & \1_{R_2} 
\end{array}
\ri] 
\end{gather}
where by the $\1_{X_j}$ we denote the identity operator on $L^2(X,\C)$ and the further subscript distinguishes orthogonal copies of the same space.
\end{prop}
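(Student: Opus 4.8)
The plan is to prove the chain of determinant identities purely by elementary operator algebra, using twice the Weinstein--Aronszajn / Schur-complement style identity $\det(\1 - XY) = \det(\1 - YX)$ together with block LDU factorizations. I will read the four operators appearing in the left-hand $2\times 2$ block as compositions with a common ``internal'' contour space structure: each entry factors through $L^2(\gamma_R)$, $L^2(\gamma_L)$, or both, as made explicit in the preceding lemma (e.g. $ACB$ really means $A\circ C\circ B$ with $B:L^2(i\R)\to L^2(\gamma_R)$, $C:L^2(\gamma_R)\to L^2(\gamma_L)$, $A:L^2(\gamma_L)\to L^2(i\R)$, and similarly $H_jQ_R$, $Q_L\wt H_k$, $A_jCB_k$, etc.).

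First step: pass from the $2\times 2$ determinant on $L^2(i\R)^{\oplus 2}$ to the $3\times 3$ determinant on (essentially) $L^2(i\R)^{\oplus 2}\oplus(\text{a contour copy})$. Here one writes the $2\times 2$ matrix of operators as a product $\mathcal{M}\mathcal{N}$ where $\mathcal{M}$ collects the ``left half'' of each composition and $\mathcal{N}$ the ``right half'', so that $\det(\1-\mathcal{M}\mathcal{N})=\det(\1-\mathcal{N}\mathcal{M})$ lands on a larger direct sum; then a sequence of elementary (unipotent, block-triangular) row/column operations — which have determinant $1$ because the relevant blocks are strictly triangular and hence nilpotent on the finite tower — rearranges $\1-\mathcal{N}\mathcal{M}$ into the displayed $3\times 3$ array with blocks $B,B_k,AC,Q_L\wt H_k,A_jC,H_jQ_R$. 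The key bookkeeping point is that the off-diagonal entries $-Q_L\wt H_k+ACB_k$ and $-H_jQ_R+A_jCB$ are genuine sums of two compositions, so the intermediate space must accommodate both summands; splitting each summand into its own ``lane'' is exactly what produces the block structure.

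Second step: expand the $3\times 3$ array into the $6\times 6$ array by ``opening up'' every composition into its atomic pieces. Concretely, on the Hilbert space $L^2(\gamma_L)\oplus L^2(\gamma_R)\oplus L^2(\gamma_L)\oplus L^2(i\R)\oplus L^2(i\R)\oplus L^2(i\R)$ (labelled $L_1,R_1,L_2,\dots$ in the statement, with the $\1_{0_2},\1_{0_3}$ copies being fresh), one introduces the atoms $\wt H_k,Q_R,C,B,B_k$ above the diagonal and $-Q_L,-H_j,-A,-A_j$ below, arranged so that multiplying out the obvious block-triangular factorization recovers the $3\times 3$ array. This is again a determinant-preserving move: the $6\times 6$ matrix differs from a product of two block-triangular unipotent matrices only by a permutation of summands, and one checks the determinant of that permutation/factorization is $1$. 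Then one more application of $\det(\1-XY)=\det(\1-YX)$ — eliminating the three ``middle'' lanes by Schur complementation against the identity blocks on $\1,\1_{0_2},\1_{0_3}$ — collapses the $6\times 6$ array to the final $4\times 4$ array, whose entries $\wt H_kH_j$, $\wt H_kA_j$, $Q_RQ_L$, $Q_RA$, $BQ_L$, $B_kH_j$, $BA+B_kA_j$, $C$ are precisely the pairwise products of an upper atom with a lower atom that survive the elimination; the ``$+$'' in $BA+B_kA_j$ appears because two distinct middle lanes feed into the same surviving space $L^2(\gamma_R)$.

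The main obstacle, and the part requiring genuine care rather than routine manipulation, is the second step: correctly identifying which Hilbert-space copy each atom maps between, so that the $6\times 6$ array is well-typed and the Schur complements actually cancel. One must be scrupulous that (i) every block-triangular factor used is unipotent with nilpotent off-diagonal part — this is automatic here since the tower of spaces is finite and the arrows are acyclic once oriented, so all these ``determinants'' of triangular matrices are $1$; (ii) the trace-class property is preserved at each stage, which follows since each atom ($A,B,C,H_j,Q_R$, etc.) has a smooth rapidly-decaying Cauchy-type kernel on its respective contours, making every composition trace-class, so all Fredholm determinants are well-defined and the identities $\det(\1-XY)=\det(\1-YX)$ legitimately apply; and (iii) the signs: the $-$'s on $Q_L,H_j,A,A_j$ in the $6\times 6$ array are chosen exactly so that the surviving products reproduce the signs $+\wt H_kH_j$, $+Q_RQ_L$ etc. in the $4\times 4$ array, absorbing the two minus signs from the original off-diagonal entries. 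Once the typing and signs are pinned down, each equality in the chain is a one-line application of a standard determinant identity, and I would present it that way, relegating the kernel-regularity (trace-class) justification to a single remark referring back to the explicit contour-integral formulas in the preceding lemma.
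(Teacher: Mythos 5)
Your strategy is essentially the paper's: split each entry of the $2\times 2$ block operator into its ``atomic'' factors, enlarge the space by unipotent block-triangular multiplications (determinant $1$), and use $\det(\mathcal{M}\mathcal{N})=\det(\mathcal{N}\mathcal{M})$ to pass from the opened-up array to the final $4\times 4$ array after eliminating the auxiliary copies of $i\R$ --- this is exactly how the paper proves the first, second and last equalities. The one point where your justification does not go through as written is your item (ii): for the $6\times 6$ determinant the off-diagonal entries are the individual atoms $A$, $A_j$, $B$, $B_k$, $C$, $H_j$, $\wt H_k$, $Q_L$, $Q_R$ themselves, \emph{not} compositions, and these are a priori only Hilbert--Schmidt; so the observation that ``every composition is trace-class'' does not make that middle determinant an ordinary Fredholm determinant. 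The paper deals with this by declaring the $6\times 6$ determinant to be a Carleman regularized $\det_2$ and noting that the perturbation is diagonal-free, so the regularized determinant coincides with the formal Fredholm expansion (while the $2\times 2$, $3\times 3$ and $4\times 4$ determinants, whose entries are products of at least two Hilbert--Schmidt atoms, are genuinely trace-class perturbations). With that single amendment --- either invoking the $\det_2$/diagonal-free argument, or alternatively proving that each atom is itself trace class on its separated contours, which you assert but do not establish --- your argument coincides with the paper's proof.
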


\begin{proof}
We start by noticing that all operators are Hilbert--Schmidt, and hence the first two determinants and the last one are ordinary Fredholm determinants, since the operators appearing are trace-class; the third determinant should be understood as Carleman regularized $\det_2$ determinant. However, since the operator whose determinant is computed is diagonal-free, the formal definition coincides with the usual Fredholm determinant. The first identity is seen by multiplying on the left by a proper lower triangular matrix, while the second one is given by multiplying the matrix
\begin{gather}
\mathcal{M} = \le[
\begin{array}{c|c|c|c|c|c|c}
\1_{L_1} & 0 & 0 & 0 & 0 & 0 & \wt H_k\\
\hline
0 & \1_{R_1} & 0 & 0 & 0 & Q_R & 0\\
\hline
0 & 0 & \1_{L_2} & 0 & C & 0 & 0\\
\hline
0 & 0 & 0 & \1_{R_2} & 0 & B & B_k\\
\hline
0 & 0 & 0 & 0 & \1_{0_1} & 0 & 0\\
\hline
-Q_L & 0 & -A & 0 & 0 & \1_{0_2} & 0\\
\hline
0 &- H_j & -A_j & 0 & 0 & 0 & \1_{0_3}
\end{array}
\ri]
\end{gather}
on the left by
\begin{gather}
\mathcal{N} = \le[
\begin{array}{c|c|c|c|c|c|c}
\1_{L_1} & 0 & 0 & 0 & 0 & 0 & 0\\
\hline
0 & \1_{R_1} & 0 & 0 & 0 &  0 & 0\\
\hline
0 & 0 & \1_{L_2} & 0 & 0  & 0 & 0\\
\hline
0 & 0 & 0 & \1_{R_2} & 0 & 0 &  0 \\
\hline
0 & 0 & 0 & 0  & \1_{0_1} & 0 & 0\\
\hline
Q_L & 0 & A & 0 & 0 & \1_{0_2} & 0\\
\hline
0 & H_j & A_j & 0 & 0 & 0 & \1_{0_3}
\end{array}
\ri] 
\end{gather}
where $0_j$ is a copy of the imaginary axis $i\mathbb{R}$. We now multiply the two matrices in reverse order, as we know that $\det (\mathcal{M}\mathcal{N}) = \det(\mathcal{N}\mathcal{M})$.
In conclusion, we obtain the operator 
\bea
\det \le[
\begin{array}{c|c|c|c}
\1_{L_1} & \wt H_k H_j & \wt H_k A_j &0 \\
\hline
Q_R Q_L  & \1_{R_1} & Q_R A &0 \\
\hline
0 & 0 & \1_{L_2} & C \\
\hline
B Q_L  & B_k H_j  & B A + B_k A_j  & \1_{R_2} 
\end{array}
\ri]
\eea 
where we have removed the trivial part involving the three copies of $i\R$.
\end{proof}

Collecting all the results found so far, we have
\begin{thm}
The gap probability of the tacnode process at single time is 
\begin{equation}
\det( \1 - \Pi \wt {\mathbb K} \Pi)  = F_2(\wt \s)^{-1} \cdot \det\left( \1 - \mathbb{M} \right)
\end{equation}
where
\begin{equation}
\mathbb{M} :=  \le[
\begin{array}{c|c|c|c}
0_{L_1} & - \wt  H_k H_j & - \wt H_k A_j &0 \\
\hline
- Q_R Q_L  & 0_{R_1} & - Q_R A &0 \\
\hline
0 & 0 &0_{L_2} &-C \\
\hline
- B Q_L  & - B_k H_j  & - (B A + B_k A_j)  &0_{R_2}
\end{array}
\ri]
\end{equation}
with
\begin{align}
 &Q_R Q_L(\lambda, \mu)  =\frac { { \rm e}^{\frac{\lambda^3 - \mu^3}{12}}}{2i\pi( \lambda - \mu  )}, \ \ \ \ 
 B Q_L(\lambda,\mu)  =\frac { {\rm e}^{\frac {\lambda^3}4 - \lambda\wt \s  - \frac { \mu^3}{12}} }{ 2i\pi ( \lambda - \mu)} \\
 &Q_R A(\lambda,\mu)  = \frac { {\rm  e}^{\mu \wt \s - \frac {\mu^3}{4} + \frac {\lambda^3}{12} } }{2i\pi (\lambda  - \mu)},  \ \ \  \ 
C(\mu,\lambda) = \frac { {\rm e}^{  \frac {\lambda^3 - \mu^3}{12}  }}{2i\pi (\mu - \lambda )} \\
&\wt H_k H_j(\mu,\lambda) =
  \frac  {\sum_{j=1}^{2K} (-1)^{j+1} h_j^{-1} (\mu) h_j(\lambda) } {2i\pi(\mu - \lambda)}\, ; \ \ \ \ h_j(\zeta):=\frac   { {\rm e}^{\zeta^3/12  + \frac{ \tau}{2^{2/3}} \zeta^2 - (\tilde a_j + \wt \s) \zeta } }{ \sqrt[3]{2}}\\
&\wt H_k A_j (\mu_1, \mu_2)  =
\sum_{j=1}^{2K} (-1)^{j+1} \frac {h_j^{-1}(\mu_1) g_j (\mu_2) }{2i\pi(\mu_1 -\mu_2)} \, ; \ \ \ \ g_j(\zeta):=  \frac {{\rm e}^{-\zeta^3/12 + \frac{ \tau}{2^{2/3}} \zeta^2 - \zeta \tilde a_j} }{ \sqrt[3]{2}} \\
&B_k H_j(\lambda_2, \lambda_1)  =
 \sum_{j=1}^{2K} (-1)^{j} \frac {g^{-1}_j(\lambda_2) h_j(\lambda_1) }{2i\pi(\lambda_2 -\lambda_1)}   \\
 & \left(BA +B_kA_j\right)(\lambda,\mu)  = \frac { {\rm e}^{ \frac {\lambda^3-\mu^3}4  +(\mu - \lambda) \wt \s }    }{2i\pi(\lambda -  \mu )} +
  \sum_{j} (-1)^j \frac {g_j^{-1} (\lambda)g_j(\mu) }{2i\pi(\lambda -\mu)}
\end{align}
\end{thm}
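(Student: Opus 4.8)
The plan is to assemble the final theorem by simply \emph{collecting} the three ingredients already established in the excerpt: the representation of the gap probability as a ratio of Fredholm determinants in Proposition~\ref{conditionedprob}, the block-operator expression for that numerator determinant obtained in the last Proposition (via the matrix-multiplication trick with $\mathcal{M}$ and $\mathcal{N}$), and the explicit contour-integral formulas for the composed operators $Q_RQ_L$, $BQ_L$, $Q_RA$, $C$, $\wt H_kH_j$, $\wt H_kA_j$, $B_kH_j$ and $BA+B_kA_j$. Concretely: starting from Proposition~\ref{conditionedprob}, which gives
\[
\det(\1-\Pi\mathbb K^{\rm tac}\Pi)=F_2(\wt\s)^{-1}\det\!\le(\1-\sqrt[3]{2}\,\hat\Pi\,\mathbb H\,\hat\Pi\ri),
\]
one recognizes the $2\times 2$ block operator $\sqrt[3]{2}\,\hat\Pi\mathbb H\hat\Pi$ as exactly the operator whose determinant was computed in the preceding Proposition to equal
\[
\det\le[\begin{array}{c|c|c|c}
\1_{L_1} & \wt H_kH_j & \wt H_kA_j & 0\\\hline
Q_RQ_L & \1_{R_1} & Q_RA & 0\\\hline
0 & 0 & \1_{L_2} & C\\\hline
BQ_L & B_kH_j & BA+B_kA_j & \1_{R_2}
\end{array}\ri].
\]
Writing this $4\times 4$ matrix as $\1-\mathbb M$ with $\mathbb M$ the off-diagonal operator displayed in the statement identifies $\det(\1-\sqrt[3]{2}\hat\Pi\mathbb H\hat\Pi)=\det(\1-\mathbb M)$, and substituting back yields the asserted formula.

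The one genuine computation is to verify that the kernels $Q_RQ_L$, $BQ_L$, $Q_RA$, $C$, $\wt H_kH_j$, $\wt H_kA_j$, $B_kH_j$ and $BA+B_kA_j$ are the stated explicit functions. For each of these I would multiply the two Fourier-side kernels from the previous Lemma and carry out the intermediate contour integral (over $i\R$, $\gamma_R$ or $\gamma_L$ as appropriate) by residues; for instance $Q_RQ_L(\lambda,\mu)=\int_{i\R}\frac{\d\zeta}{2i\pi}\,Q_R(\lambda,\zeta)\,Q_L(\zeta,\mu)$ collapses by Cauchy's theorem to $\frac{\e^{(\lambda^3-\mu^3)/12}}{2i\pi(\lambda-\mu)}$ once one checks the residue at $\zeta$ lying between the contours picks up the single pole. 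The $\wt H_kH_j$, $\wt H_kA_j$, $B_kH_j$ terms similarly produce the finite sums over $j$ with the functions $h_j,g_j$ as defined, after absorbing the $\e^{\pm\wt\s\zeta}$, $\e^{\pm\tilde a_j\zeta}$ and $\e^{\tau\zeta^2/2^{2/3}}$ factors into $h_j,g_j$; and $(BA+B_kA_j)$ splits into the "continuous" piece $BA$ (a single Gaussian-type residue) plus the discrete sum $\sum_j(-1)^j g_j^{-1}(\lambda)g_j(\mu)/(2i\pi(\lambda-\mu))$ from $B_kA_j$.

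I would organize the proof in two short paragraphs. First paragraph: quote Proposition~\ref{conditionedprob} and the block-determinant Proposition, note that $\sqrt[3]{2}\hat\Pi\mathbb H\hat\Pi$ is precisely the operator there, and conclude $\det(\1-\Pi\wt{\mathbb K}\Pi)=F_2(\wt\s)^{-1}\det(\1-\mathbb M)$ with $\mathbb M$ the stated $4\times 4$ off-diagonal matrix. Second paragraph: record the eight kernel identities, indicating that each follows from composing the corresponding pair of kernels in the previous Lemma and evaluating the single intermediate integral by Cauchy's residue theorem, with $h_j$ and $g_j$ defined so as to collect the exponential factors. The proof is essentially bookkeeping, so it suffices to display the composition law being used ($\mathcal F(XY)=\int XY$ with the contour as in the previous Lemma) and spell out one representative residue computation, leaving the rest as "analogous".

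The main obstacle — really the only place where care is needed rather than routine manipulation — is \textbf{contour compatibility}: one must check that in each composition the integration contour separates the poles correctly (e.g.\ that $\xi\in i\R$ lies between $\gamma_L$ and $\gamma_R$, that the pole from $1/(\xi-\mu)$ is on the correct side of $i\R$, and that closing the contour is justified by the Gaussian/cubic decay of the exponentials along $\gamma_R$, $\gamma_L$, $i\R$). This is what guarantees that each intermediate integral picks up exactly one residue and no boundary term at infinity, and hence that the displayed kernels are correct; once this is granted, the rest of the proof is immediate from the two Propositions cited.
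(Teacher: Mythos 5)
Your overall route is the same as the paper's: combine Proposition \ref{conditionedprob} (the ratio of Fredholm determinants), the Fourier representation of Lemma \ref{lemmaFourier} together with the factorization into $A,B,C,H_j,Q_R,Q_L,\wt H_k,A_j,B_k$, and the chain of determinant identities of the preceding Proposition, then compute the composed kernels by residues and read off $\1-\mathbb M$. That part of the proposal is fine (modulo small bookkeeping: the factor $\sqrt[3]{2}$ is already built into the definition of $\mathbb H$ in (\ref{explode0}) and into $h_j,g_j$, and $\det(\1-\Pi\mathbb K^{\rm tac}\Pi)=\det(\1-\tilde\Pi\wt{\mathbb K}\tilde\Pi)$ under the rescaling of variables).

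The genuine gap is in your treatment of the four compositions carrying endpoint indices, $\wt H_kH_j$, $\wt H_kA_j$, $B_kH_j$ and $B_kA_j$. Your guiding principle that ``each intermediate integral picks up exactly one residue'' and that these terms ``similarly produce the finite sums over $j$'' does not describe what actually happens: each such composition is a priori a \emph{double} sum over $j,k$ (both factors are sums over the endpoints), and the intermediate integral over $\zeta\in i\R$ has one pole on each side of the contour ($\mu\in\gamma_L$ on the left, $\lambda\in\gamma_R$ on the right) multiplied by ${\rm e}^{\zeta(\tilde a_j-\tilde a_k)}$, which has no cubic decay in $\zeta$; which half-plane one may close in, hence which of the two poles is picked up, depends on the sign of $\tilde a_j-\tilde a_k$. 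So one must invoke the ordering $\tilde a_1<\cdots<\tilde a_{2K}$, split into the cases $j<k$, $j>k$, $j=k$, and then observe the parity/telescoping cancellation (schematically $\sum_{j<k}(-1)^{j+k}f_k+\sum_{k<j}(-1)^{j+k}f_j+\sum_j f_j=\sum_j(-1)^{j+1}f_j$) that collapses the double sum to the single sums with alternating signs appearing in the statement; one should also check, as the paper does, that the collapsed kernels are regular on the diagonal because the sum vanishes there. This cancellation, not the contour-separation bookkeeping you single out, is the real content of the computation; as written, your argument would stop at an unreduced double sum over $j,k$ and would not yield the stated kernels.
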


\begin{proof}
The first three kernels and the kernel $BA$ follow from easy computations. 
\begin{gather}
 Q_R Q_L(\lambda, \mu) = \int_{i\R} \frac {\d \zeta}{2i\pi} \frac { { \rm e}^{\frac{\lambda^3 - \mu^3}{12}}}
   {2i\pi(\lambda - \zeta)( \zeta-\mu )} =\frac { { \rm e}^{\frac{\lambda^3 - \mu^3}{12}}}{2i\pi( \lambda - \mu  )}\\
 B Q_L(\lambda,\mu) = \int_{i\R} \frac {\d \zeta}{2i\pi} \frac { {\rm e}^{\frac {\lambda^3}4 - \lambda\wt \s  - \frac { \mu^3}{12}} }
  { 2i\pi ( \lambda-\zeta)( \zeta-\mu)} =\frac { {\rm e}^{\frac {\lambda^3}4 - \lambda\wt \s  - \frac { \mu^3}{12}} }{ 2i\pi ( \lambda - \mu)}\\
 Q_R A(\lambda,\mu) = \int_{i\R} \frac {\d \zeta}{2i\pi} \frac { {\rm  e}^{\mu \wt \s - \frac {\mu^3}{4} + \frac {\lambda^3}{12} } }{2i\pi (\lambda - \zeta)(\zeta  - \mu)} = \frac { {\rm  e}^{\mu \wt \s - \frac {\mu^3}{4} + \frac {\lambda^3}{12} } }{2i\pi (\lambda  - \mu)}  \\
BA(\lambda, \mu)  = \int_{i\R} \frac {\d \zeta}{2i\pi} \frac {{\rm e}^{ \frac {\lambda^3-\mu^3}4  +(\mu - \lambda) \wt \s }    }{2i\pi( \lambda - \zeta) (\zeta - \mu)}   = \frac {{\rm e}^{ \frac {\lambda^3-\mu^3}4  +(\mu - \lambda) \wt \s }    }{2i\pi( \lambda - \mu )}  
  \end{gather}

Next, we recall that the endpoints are ordered $\tilde a_j<\tilde a_{j+1}$, so that we can pick up residues accordingly to the sign of $\tilde a_j-\tilde a_k$ ($j,k=1, \ldots, 2K$). 
\begin{gather}
\wt H_k H_j(\mu,\lambda) =\sum_{j,k} \frac{(-1)^{j+k+1}}{\sqrt[3]{4}}  \int_{i\R} \frac {\d \zeta}{2i\pi} {\rm e}^{\zeta(\tilde a_j -\tilde a_k)} 
 \frac{ {\rm e}^{ \mu \tilde a_k + \mu \wt \s - \frac {\mu^3}{12} - \frac{ \tau}{2^{2/3}} \mu^2  }   }{2i\pi (\mu - \zeta)  }
 \frac{ { \rm e}^{-\lambda \tilde a_j - \wt \s\lambda  +\frac {\lambda^3}{12} + \frac{ \tau}{2^{2/3}} \lambda^2 } }{ (\zeta - \lambda) }=
 \cr
 \sum_{j<k}  \frac{(-1)^{j+k}}{\sqrt[3]{4}} 
 \frac{ {\rm e}^{ (\mu -\lambda) \tilde a_k + (\mu - \lambda) \wt \s + \frac {\lambda^3 - \mu^3}{12} + \frac{ \tau}{2^{2/3}} (\lambda^2 - \mu^2 ) }  }{2i\pi (\mu - \lambda )  } 
 +  \sum_{k<j }  \frac{(-1)^{j+k}}{\sqrt[3]{4}} 
 \frac{ {\rm e}^{ (\mu -\lambda) \tilde a_j + (\mu - \lambda) \wt \s + \frac {\lambda^3 - \mu^3}{12} + \frac{ \tau}{2^{2/3}} (\lambda^2 - \mu^2 ) }  }{2i\pi (\mu - \lambda )  } + \cr
  + \sum_{j=1}^{2K} \frac{1}{\sqrt[3]{4}} 
 \frac{ {\rm e}^{ (\mu -\lambda) \tilde a_j + (\mu - \lambda) \wt \s + \frac {\lambda^3 - \mu^3}{12} + \frac{ \tau}{2^{2/3}} (\lambda^2 - \mu^2 ) }  }{2i\pi ( \mu-\lambda)  } 
 \end{gather}
Thanks to some cancellations, we are left with
\begin{gather}
\wt H_k H_j(\mu,\lambda) =
  \sum_{j=1}^{2K} \frac{(-1)^{j+1}}{\sqrt[3]{4}} 
 \frac{ {\rm e}^{ (\mu -\lambda) \tilde a_j + (\mu - \lambda) \wt \s + \frac {\lambda^3 - \mu^3}{12} + \frac{ \tau}{2^{2/3}} (\lambda^2 - \mu^2 ) }  }{2i\pi (\mu- \lambda)  } .
\end{gather}

Similarly,
\begin{gather}
B_kA_j (\lambda, \mu):=\int_{i\R} \frac { {\rm d}\zeta}{2i\pi}  \sum_{k,j}\frac { (-1)^{k+j} {\rm e}^{\frac {\lambda^3}{12}- \frac{ \tau}{2^{2/3}} \lambda^2 + (\lambda- \zeta) \tilde a_k} }{2i\pi (\lambda -\zeta)\sqrt[3]{4} }\frac {{\rm e}^{(\zeta - \mu)  \tilde a_j  -\frac {\mu^3}{12}+ \frac{ \tau}{2^{2/3}} \mu^2 }}{(\zeta - \mu) }=\cr
= \sum_{j=1}^{2K} \frac { (-1)^{j}
 {\rm e}^{\frac {\lambda^3-\mu^3}{12}- \frac{ \tau}{2^{2/3}} (\lambda^2-\mu^2)  + (\lambda- \mu) \tilde a_j} }{2i\pi (\lambda -\mu)\sqrt[3]{4} }.
\end{gather}
In the next computation, we set $\lambda_1, \lambda_2\in \gamma_R$:
\begin{gather}
B_k H_j(\lambda_2, \lambda_1)  =\int_{i\R}  \frac {\d\zeta}{2i\pi} \sum_{k,j}\frac { (-1)^{k+j}{\rm e}^{\frac {\lambda_2^3}{12}- \frac{ \tau}{2^{2/3}} \lambda_2^2 + (\lambda_2- \zeta)\tilde a_k} }{2i\pi ( \lambda_2 -\zeta  )\sqrt[3]{4} } \frac{ { \rm e}^{(\zeta-\lambda_1) \tilde a_j - \wt \s\lambda_1  +\frac {\lambda_1^3}{12} + \frac{ \tau}{2^{2/3}} \lambda_1^2 } }{(\zeta - \lambda_1) } = \\=
\sum_{j\leq k}\frac { (-1)^{k+j}{\rm e}^{\frac {\lambda_2^3}{12}- \frac{ \tau}{2^{2/3}} \lambda_2^2 - \wt \s\lambda_1  +\frac {\lambda_1^3}{12} + \frac{ \tau}{2^{2/3}} \lambda_1^2 } 
 }{2i\pi (\lambda_2 - \lambda_1 )\sqrt[3]{4} } \le({ \rm e}^{(\lambda_2-\lambda_1) \tilde a_j } -{ \rm e}^{(\lambda_2-\lambda_1) \tilde a_k } \ri) 
\end{gather}
the first term contributes only with the terms with even $j$ (with positive sign) , the second only those with odd $k$ with a negative sign so that 
\begin{eqnarray}
B_k H_j(\lambda_2, \lambda_1)  =\sum_{j=1}^{2K} \frac { (-1)^{j}{\rm e}^{\frac {\lambda_2^3}{12}- \frac{ \tau}{2^{2/3}} \lambda_2^2 - \wt \s\lambda_1  +\frac {\lambda_1^3}{12} + \frac{ \tau}{2^{2/3}} \lambda_1^2 } 
 }{2i\pi (\lambda_2 - \lambda_1 )\sqrt[3]{4} } { \rm e}^{(\lambda_2-\lambda_1) \tilde a_j } 
\end{eqnarray}
Note that the kernel is regular at $\lambda_1 = \lambda_2$ because the sum vanishes.

In a similar way 
\begin{gather}
\wt H_k A_j (\mu_1, \mu_2)  =\int_{i\R} \frac{\d \zeta}{2i\pi} \sum_{k,j} (-1)^{k+j+1} \frac{ {\rm e}^{ (\mu_1-\zeta) \tilde a_k +  \mu_1 \wt \s - \frac {\mu_1^3}{12} - \frac{ \tau}{2^{2/3}} \mu_1^2  }   }{2i\pi (\mu_1 - \zeta) \sqrt[3]{4} }\frac { {\rm e}^{(\zeta- \mu_2)  \tilde a_j  -\frac {\mu_2^3}{12}+ \frac{ \tau}{2^{2/3}} \mu_2^2 }}{ (\zeta - \mu_2) }=\\
=\sum_{j\geq k} (-1)^{k+j+1} \frac{ {\rm e}^{   \mu_1 \wt \s - \frac {\mu_1^3}{12} - \frac{ \tau}{2^{2/3}} \mu_1^2  }   }{2i\pi (\mu_1 - \mu_2) \sqrt[3]{4} }\frac { {\rm e}^{  -\frac {\mu_2^3}{12}+ \frac{ \tau}{2^{2/3}} \mu_2^2 }}{ 1 }\le({\rm e}^{(\mu_1-\mu_2) \tilde a_k }  - {\rm e}^{(\mu_1-\mu_2) \tilde a_j }  \ri) = \\
=\sum_{j=1}^{2K} (-1)^{j+1} \frac{ {\rm e}^{   \mu_1 \wt \s - \frac {\mu_1^3}{12} - \frac{ \tau}{2^{2/3}} \mu_1^2   -\frac {\mu_2^3}{12}+ \frac{ \tau}{2^{2/3}} \mu_2^2   }   }{2i\pi (\mu_1 - \mu_2) \sqrt[3]{4} }{\rm e}^{(\mu_2-\mu_1) \tilde a_j }  
\end{gather}
\end{proof}

In conclusion, the kernel can be written as an integrable kernel in the sense of  Its-Izergin-Korepin-Slavnov (\cite{IIKS}):
\begin{equation}
\mathbb M(\xi,\zeta) = \frac{\textbf{f}(\xi)^T \cdot \textbf{g}(\zeta)}{2\pi i(\xi-\zeta)}
\end{equation}
with
\begin{align}
&\textbf{f}(\xi) = \le[
\begin{array}{c}
-{\rm e}^{-\xi^3/12} \chi_{_{L_2}}\\[3pt]
-{\rm e}^{ \xi^3/4 - \xi \wt \s} \chi_{_{R_2}} - {\rm e}^{\xi^3/12} \chi_{_{R_1}}\\[3pt]
g^{-1}_1(\xi) \chi_{_{R_2}} - h_1^{-1} (\xi) \chi_{_{L_1}}\\[3pt]
\vdots\\
-(-1)^{2K} g^{-1}_{_{2K}}(\xi) \chi_{_{R_2}} +(-1)^{2K} h_{_{2K}}  ^{-1} (\xi) \chi_{_{L_1}}
\end{array}
\ri] \\
&\textbf{g} (\zeta) = \le[
\begin{array}{c}
{\rm e}^{\zeta^3/12} \chi_{_{R_2}}\\[3pt]
 {\rm e}^{-\zeta^3/12} \chi_{_{L_1}}  + {\rm e}^{-\zeta^3/4+ \zeta \wt \s} \chi_{_{L_2}}\\[3pt]
g_1(\zeta) \chi_{_{L_2}} + h_1(\zeta) \chi_{_{R_1}}\\[3pt]
\vdots \\
g_{_{2K}}(\zeta) \chi_{_{L_2}} + h_{_{2K}}(\zeta) \chi_{_{R_1}}
\end{array}
\ri] 
\end{align}

It is thus natural to associate to it a suitable following RH problem. We refer to \cite{JohnSasha} for a detailed explanation.

\begin{prop}
The Fredholm determinant $\det(\Id-\mathbb{M})$ is linked through IIKS correspondence to the following Riemann-Hilbert problem
	\bea
		&&\Gamma_+(\lambda) = \Gamma_-(\lambda)J(\lambda), \quad \lambda \in \Sigma \nonumber\\
		&&\Gamma(\lambda) = I+\mathcal O(\lambda^{-1}),\quad \lambda\rightarrow\infty\label{RHH1}\\
		&&J(\lambda) :=  \label{mainRH}\\
		&& \left[\begin{array}{c|c|c|c|c|c}
			1 & {\rm e}^{-\Theta_{\wt \sigma}}\chi_{_L} & {\rm e}^{-\Theta_{\tau, -a_1}}\chi_{_L} & \ldots & \ldots & {\rm e}^{-\Theta_{\tau, -a_{2K}}}\chi_{_L}\\
			\hline
			{\rm e}^{\Theta_{\wt\sigma}}\chi_{_R} & 1 & {\rm e}^{\Theta_{-\tau, a_1}}\chi_{_R} & \ldots & \ldots & 
			{\rm e}^{\Theta_{-\tau, a_{2K}}}\chi_{_R}\\
			\hline
			-{\rm e}^{\Theta_{\tau, - a_1}}\chi_{_R} & {\rm e}^{-\Theta_{\tau, a_1}}\chi_{_L} & 1 & \ldots & \ldots & 0\\
			\hline
			\vdots & \vdots & 0 & \ldots & \ldots & \vdots\\
			\hline
			 \vdots & \vdots & 0 & \ldots & \ldots & \vdots\\
			 \hline
			(-1)^{2K}{\rm e}^{\Theta_{-\tau, -a_{2K}}}\chi_{_R} & (-1)^{2K+1}{\rm e}^{-\Theta_{\tau, a_{2K}}}\chi_{_L} & 0 & \ldots & \ldots & 1
		\end{array}\right] \nonumber
	\eea
where $\Sigma$ is the collection of all contours involved and with 
\begin{gather}
\Theta_{\wt \sigma}(\lambda) = \frac{\lambda^3}{3} - \wt \sigma\lambda, \ \ \ \
\Theta_{\tau,a_i}(\lambda) = \frac{\lambda^3}{6} - 2^{-\frac{2}{3}}\tau \lambda^2 - 2^{-\frac{1}{3}}(a_i+\sigma)\lambda.
\end{gather} 
\end{prop}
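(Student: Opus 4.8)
The plan is to invoke the general Its--Izergin--Korepin--Slavnov (IIKS) correspondence as formulated in \cite{JohnSasha}, which associates to any integrable kernel $\mathbb M(\xi,\zeta) = \frac{\f(\xi)^T\cdot\vg(\zeta)}{2\pi i(\xi-\zeta)}$ acting on $L^2(\Sigma)$ a Riemann--Hilbert problem whose jump matrix is $J(\lambda) = I - 2\pi i\, \f(\lambda)\vg(\lambda)^T$ (restricted to the appropriate components of the contour dictated by the characteristic functions in $\f$ and $\vg$), and for which $\det(\Id - \mathbb M)$ equals the Malgrange--Bertola--$\tau$-function of the RHP. So the proof is essentially a verification that the explicit $\f,\vg$ listed just above yield exactly the jump matrix $J(\lambda)$ displayed in \eqref{mainRH}, after unwinding the rescalings relating the variables $\tilde a_j$ to $a_j$.

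Concretely, I would proceed as follows. First, recall $\tilde a_j = 2^{-1/3}(a_j - \sigma)$ and substitute this into the exponents of $g_j,h_j$; collecting the $\lambda^3$, $\lambda^2$ and linear-in-$\lambda$ pieces, one checks that $h_j(\lambda)$ and $g_j(\lambda)$ combine with the elementary exponentials ${\rm e}^{\pm\xi^3/12}$, ${\rm e}^{\xi^3/4 - \xi\wt\s}$ appearing in the other components of $\f,\vg$ to reproduce precisely ${\rm e}^{\pm\Theta_{\wt\sigma}}$ and ${\rm e}^{\pm\Theta_{\tau,\pm a_i}}$ with the $\Theta$'s as defined in the statement; here the factor-of-two bookkeeping between $\frac{\lambda^3}{12}$, $\frac{\lambda^3}{6}$, $\frac{\lambda^3}{3}$ and between $2^{-2/3}\tau$, $2^{-1/3}(a_i+\sigma)$ is the only thing to watch. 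Second, form the rank-one-per-pair outer product $I - 2\pi i\,\f(\lambda)\vg(\lambda)^T$: the product of a $\chi_{_{L}}$-supported component of $\f$ with a $\chi_{_{R}}$-supported component of $\vg$ (or vice versa) is automatically zero since the supports are disjoint, which is why $J$ has the sparse $2\times 2 + $ border structure shown; the surviving entries are exactly the off-diagonal exponentials. Third, match the signs: the alternating $(-1)^j$ in the definitions of $g_j,h_j$ and in the components of $\f$ produce the alternating $(-1)^{2K}$, $(-1)^{2K+1}$ (more generally $(-1)^{j}$, $(-1)^{j+1}$) prefactors in the last rows/columns of $J$, and the overall minus sign on the $(3,1)$-type entry ${-\rm e}^{\Theta_{\tau,-a_1}}\chi_{_R}$ comes from the sign conventions in the block $\mathbb M$ (the $-C$, $-Q_RQ_L$, etc. appearing in $\mathbb M$ of the previous theorem). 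Finally, identify the contour $\Sigma$ as the disjoint union of the imaginary axis copies $i\R$ together with $\gamma_R$ and $\gamma_L$ on which the various $h_j,g_j$ etc.\ are naturally defined, with orientation inherited from the $L^2$ spaces listed in the preceding lemma, and note $\Gamma(\lambda)\to I$ at infinity because $\f,\vg$ decay there.

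The only genuinely delicate point — and the one I expect to be the main obstacle — is the correct identification and orientation of the contour $\Sigma$ and the precise placement of the $\chi_{_L},\chi_{_R}$ factors on each arc, i.e.\ making sure that the ``restricted'' rank-one structure $J = I - 2\pi i\f\vg^T$ lands on the right side of each contour so that $J$ genuinely has the block shape above rather than a fuller matrix; this is bookkeeping about which half-plane each of $\gamma_R,\gamma_L,i\R$ sits in relative to the others and how they are glued, together with checking that the resulting jumps are consistent (no spurious jumps across intersection points). Once that is pinned down, the equality $\det(\Id - \mathbb M) = \tau_{\mathrm{IIKS}}[\Gamma]$ is immediate from \cite{IIKS,JohnSasha}, since all the operators involved were shown to be trace-class (or $\det_2$-regularizable with vanishing correction because the kernels are diagonal-free) in the propositions above, so the Fredholm determinant is well-defined and the IIKS machinery applies verbatim.
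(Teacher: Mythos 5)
Your overall route is the same as the paper's: apply the IIKS dictionary $J(\lambda)=I-2\pi i\,\f(\lambda)\,\vg(\lambda)^T$ to the explicit $\f,\vg$, unwind $\tilde a_j=2^{-1/3}(a_j-\sigma)$ to recover the phases $\Theta_{\wt\sigma}$, $\Theta_{\pm\tau,\pm a_i}$, and match signs. Up to that point the proposal is fine and is exactly what the paper does.

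The gap is the point you yourself flag as ``the main obstacle'' and then leave unresolved, and your diagnosis of it is slightly off target. The issue is not the orientation of $\Sigma$ or the placement of the $\chi_{_L},\chi_{_R}$ factors per se: the kernel $\mathbb M$ acts on $L^2(L_1)\oplus L^2(R_1)\oplus L^2(L_2)\oplus L^2(R_2)$, i.e.\ on \emph{two distinct copies} of $\gamma_L$ and two distinct copies of $\gamma_R$, so the raw IIKS formula produces a RHP whose jumps live on four contours, not two. Your argument that ``a $\chi_{_L}$-supported component of $\f$ times a $\chi_{_R}$-supported component of $\vg$ vanishes by disjointness of supports'' explains why no $L$--$R$ cross terms appear, but it says nothing about the $L_1$--$L_2$ and $R_1$--$R_2$ cross terms: once you superimpose the two copies of $\gamma_L$ (resp.\ $\gamma_R$) to obtain the single contour carrying $\chi_{_L}$ (resp.\ $\chi_{_R}$) in \eqref{mainRH}, those products are no longer automatically zero, and the merged jump is the \emph{product} of the two jump matrices, not $I-2\pi i\,\f\vg^T$ with merged characteristic functions. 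The missing step, which is the one nontrivial observation in the paper's proof, is that the jump matrices on $\gamma_{R_1}$ and $\gamma_{R_2}$ (and likewise on $\gamma_{L_1}$ and $\gamma_{L_2}$) commute --- indeed their rank-type cross products vanish --- so the two copies may be identified and the merged jump is exactly the matrix displayed in \eqref{mainRH}. Without that commutation/vanishing argument the contour identification you postpone cannot be ``pinned down,'' so you should add it explicitly; with it, the rest of your verification goes through as stated.
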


\begin{proof}
It is simply a matter of straightforward calculations: using the standard formula $J(\lambda( = I - 2i \pi \textbf{f}(\lambda) \textbf{g}(\lambda)^T $ and writing explicitly the endpoints $\tilde a_i$ as functions of the original endpoints $a_i$, we get the jump matrix as in (\ref{mainRH}), but with two distinct copies of $\gamma_R$ and $\gamma_L$. On the other hand, it is easy to show that the jumps on - say - $\gamma_{R_1}$ and $\gamma_{R_2}$ commure, hence we can identify the two contours.
\end{proof}

In particular, let's consider the simplest case where $\mathcal{I} = [a,b]$, then the RH problem is $4\times 4$ with jump matrix
\begin{gather}\label{RHH}
J(\lambda) =
 \left[ \begin{array}{cccc}
1 & e^{-\Theta_{\wt \sigma}} \chi_{_{L_1}} & e^{-\Theta_{\tau, -a}}\chi_{_{L_2}} & e^{-\Theta_{\tau, -b}}\chi_{_{L_2}} \\
e^{\Theta_{\wt \sigma}}\chi_{_{R_1}} & 1 & e^{\Theta_{-\tau, a}}\chi_{_{R_3}} & e^{\Theta_{-\tau, b}}\chi_{_{R_3}}  \\
-e^{\Theta_{\tau, -a}}\chi_{_{R_2}} & e^{-\Theta_{-\tau,a}}\chi_{_{L_3}} & 1 & 0 \\
e^{\Theta_{\tau,-b}}\chi_{_{R_2}} & -e^{-\Theta_{-\tau, b}}\chi_{_{L_3}} & 0 & 1
\end{array}
\right]  
\end{gather}
where
\begin{gather}
\Theta_{\wt \sigma}(\lambda) = \frac{\lambda^3}{3} - \wt \sigma\lambda, \ \ \ \
\Theta_{\tau,a_i}(\lambda) = \frac{\lambda^3}{6} - 2^{-\frac{2}{3}}\tau \lambda^2 - 2^{-\frac{1}{3}}(a_i+\sigma)\lambda.
\end{gather} 
(we have renamed the contours $R_1,R_2,R_3$ and $L_1,L_2,L_3$).

We will now focus exclusively on the single-interval case and we will apply a steepest descent method in order to prove the factorization of the gap probability of Tacnode process into two gap probabilities of the Airy process. The starting point is the $4\times 4$ Riemann-Hilbert problem (\ref{RHH}) with contour configuration as in \figurename \ \ref{curvesigma} or \figurename \ \ref{curvetau}, depending on the scaling regime we are considering.

\section{Proof of Theorem \ref{THMsigma}}\label{Asymsigma}
From now on, we are assuming $\tau>0$. For $\tau\leq0$ the calculations follow the same guidelines as below.

The phase functions $\Theta_{\tau}(\lambda, -b)$ and $\Theta_{-\tau}(\lambda, a)$ (appearing in the entries of the $2\times 2$ off-diagonal blocks of the jump matrix (\ref{RHH}))  have inflection points with zero derivative when the discriminant of the derivative vanishes, which occurs when
\begin{gather}
a_{\rm crit} + \sigma + \tau^2 =0 \\
b_{\rm crit} - \sigma -\tau^2 =0
\end{gather}
with critical values $\Theta_{\tau}(\lambda, -b_{\rm crit}) = 2^{1/3}\tau$ and $\Theta_{-\tau}(\lambda, a_{\rm crit}) = -2^{1/3}\tau$. The neighbourhood of the discriminant is parametrizable as follows
\begin{gather}
a = a(t) = -\sigma -\tau^2 +t \\
b = b(s) = \sigma + \tau^2 - s
\end{gather}
Thus we have the expressions
\begin{align}
&\Theta_{\tau}(\lambda, -b) = \frac{\xi_-^3}{3} - s\xi_- + \frac{\tau^3}{3} - s\tau, \ \ \  \xi_-:= \frac{\lambda  - 2^{\frac{1}{3}}\tau}{2^{\frac{1}{3}}} \\
&\Theta_{-\tau}(\lambda, a) = \frac{\xi_+^3}{3} - t\xi_+ - \frac{\tau^3}{3} +t\tau,  \ \ \ \xi_+:= \frac{\lambda  + 2^{\frac{1}{3}}\tau}{2^{\frac{1}{3}}}
\end{align}

On the other hand, the phase $\Theta_{\wt \sigma}$ in the entries $(1,2)$ and $(2,1)$ of (\ref{RHH}) has critical point at $\pm \sqrt{\wt \sigma} = \pm \sqrt{2^{\frac{2}{3}} \sigma}$.

\paragraph{Preliminary step.}
We conjugate the matrix $\Gamma$ by the constant (with respect to $\lambda$) diagonal matrix
\begin{equation}
D := \operatorname{diag}(1,1, -K(t), -K(s)) \label{Dmatrix}
\end{equation}
where $K(u) := \frac{\tau^3}{3} -u\tau$. As a result, also the jump matrices (\ref{RHH}) are similarly conjugated and this has the effect of replacing the phases $\Theta_{\pm \tau, \mp a}$ and $\Theta_{\pm \tau, \mp b}$ by  ``$\Theta_{\pm \tau, \mp a} \mp K(t)$" and ``$\Theta_{\pm \tau, \mp b} \mp K(s)$" respectively, so that their critical value is zero.

We denote by a hat the new matrix and respective jump
\begin{equation}
\hat \Gamma := e^{-D} \Gamma e^D \ \ \ \hat J := e^{-D} J e^D
\end{equation}

Thus, the resulting jump $\hat J$ has the following form:
\begin{gather}
\left[ \begin{array}{cccc}
1& e^{-\Theta_{\wt\sigma}} & 0 &0 \\
0&1&0&0 \\
0&0&1&0 \\
0&0&0&1 
\end{array} \right]  \ \  \text{on} \ L_1,  \ \ \ \left[ \begin{array}{cccc}
1&0&0&0 \\
e^{\Theta_{\wt \sigma}} &1&0&0 \\
0&0&1&0 \\
0&0&0&1 
\end{array} \right] \ \ \text{on} \ R_1, \label{RHPconj1}
\end{gather}
\begin{gather}
\left[ \begin{array}{cccc}
1&0&e^{- \Theta_{\tau, -a} +K(t)}& e^{- \Theta(\xi_-,s)} \\
0&1&0&0 \\
0&0&1&0 \\
0&0&0&1 
\end{array} \right]  \ \  \text{on} \ L_2, \ \ \ 
\left[ \begin{array}{cccc}
1&0&0&0 \\
0&1&0&0 \\
-e^{\Theta_{\tau,-a}-K(t)}&0&1&0 \\
e^{ \Theta(\xi_-,s)}&0&0&1 
\end{array} \right]  \ \  \text{on} \ R_2,
\end{gather}
\begin{gather}
\left[ \begin{array}{cccc}
1&0&0&0 \\
0&1&0&0 \\
0&e^{- \Theta(\xi_+,t)}&1&0 \\
0&-e^{- \Theta_{-\tau,b}-K(s)}&0&1 
\end{array} \right]  \ \  \text{on} \ L_3, \ \ \ 
\left[ \begin{array}{cccc}
1&0&0&0 \\
0&1&e^{\Theta(\xi_+,t)}& e^{ \Theta_{-\tau,b}+K(s)} \\
0&0&1&0 \\
0&0&0&1 
\end{array} \right]  \ \  \text{on} \ R_3, 
\end{gather}
where
\begin{gather}
\Theta(\xi_{\pm}, u):= \frac{\xi_{\pm}^3}{3} - \xi_\pm u \ \ \ \ \ \xi_\pm:= \frac{\lambda \pm \sqrt[3]{2}\tau}{\sqrt[3]{2}}  \label{xi+-} \\
\Theta_{-\tau,b}(\lambda,s) := \frac{\lambda^3}{6} + \frac{\tau \lambda^2}{2^{2/3}} -2^{2/3}\sigma \lambda- \frac{\tau^2\lambda}{\sqrt[3]{2}}  + \frac{s\lambda}{\sqrt[3]{2}}\\
\Theta_{\tau, -a}(\lambda, t) := \frac{\lambda^3}{6} - \frac{\tau \lambda^2}{2^{2/3}}  -2^{2/3}\sigma \lambda - \frac{\tau^2\lambda}{\sqrt[3]{2}} +\frac{t\lambda}{\sqrt[3]{2}} \label{RHPconjfin}
\end{gather}

We choose the contours according to the following configuration (see \figurename \ \ref{curvesigma}):
\begin{itemize}
\item $L_2$ and $R_2$ are centred around the critical point $P_R :=  2^{\frac{1}{3}}\tau$ 
\item $L_3$ and $R_3$ are centred around the critical point $P_L :=- 2^{\frac{1}{3}}\tau$
\item $L_1$ passes through the critical point $P_{\sigma,L} := - \sqrt{\wt \sigma}$ and $R_1$ passes through the critical point $P_{\sigma,R}:=\sqrt{\wt \sigma}$; these points are thought as very far from the origin, in the limit as $\sigma \gg 1$.
\end{itemize}

\begin{figure}
\centering
\includegraphics[width=.8\textwidth]{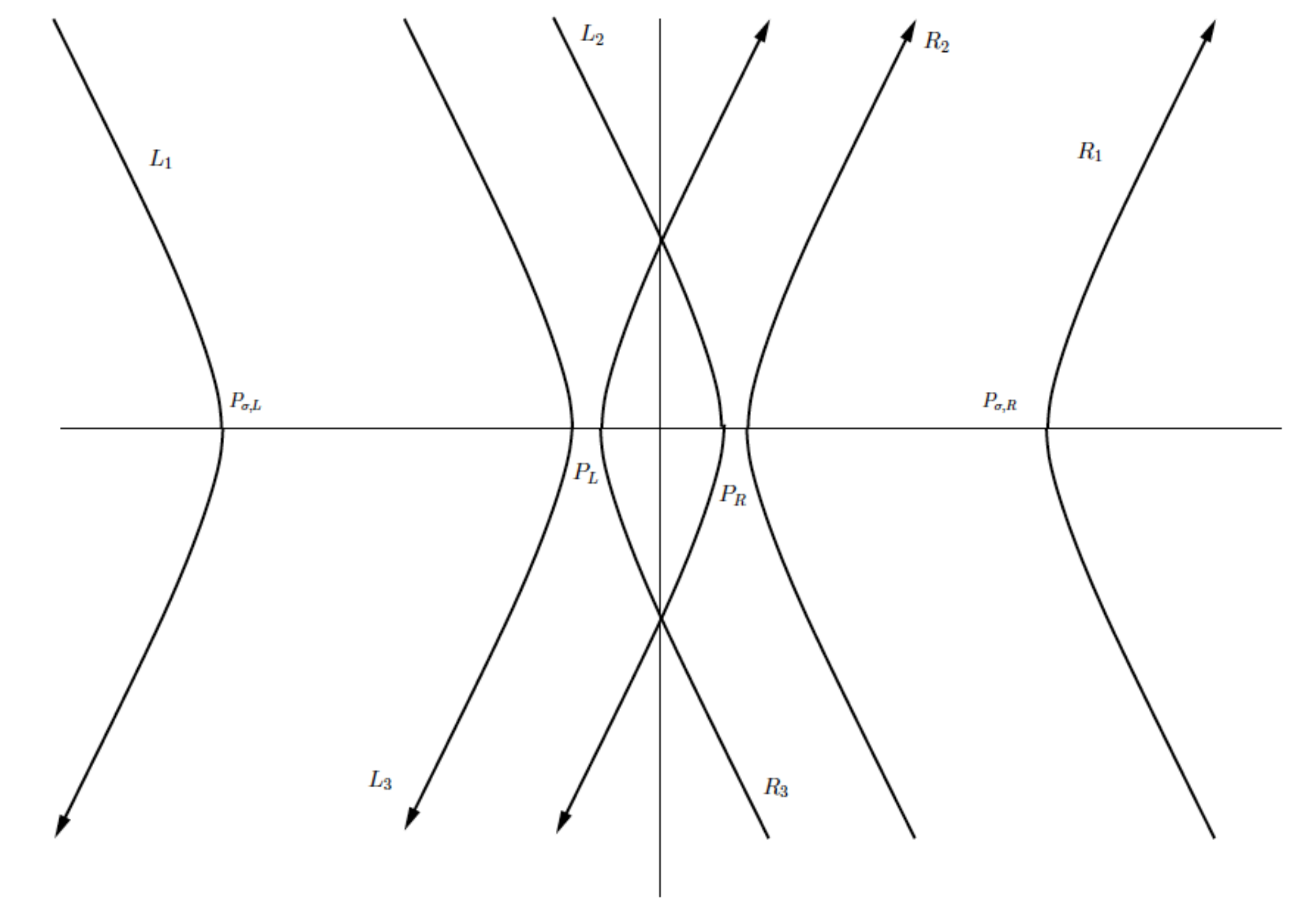}
\caption{The contour setting in the asymptotic limit as $\sigma \rightarrow +\infty$.}
\label{curvesigma}
\end{figure}

\begin{oss}
All the left jumps commute with themselves and similarly all the right jumps. Moreover, the jump matrices $L_2$ and $R_3$ commute.
\end{oss}

The proof now proceeds along the following scheme (as $\sigma \rightarrow +\infty$):
\begin{enumerate}
\item the matrices $L_1$ and $R_1$ are exponentially close to the identity in every $L^p$ norm (Lemma \ref{lemmaLR1});
\item regarding the matrices $L_2$ and $R_2$, the entries of the form $\pm(\Theta_{\tau,-a} - K(t))$ are exponentially small in every $L^p$ norm; the same behaviour will appear for the entries of the type $\pm (\Theta_{-\tau,b}+K(s))$ in the matrices $L_3$ and $R_3$ (Lemma \ref{lemmaLR23});
\item  for the remaining entries in the jumps $L_{2,3}$ and $R_{2,3}$ we will explicitly and exactly solve a (model) Riemann-Hilbert problem which will approximate the problem at hand.
\end{enumerate}

\subsection{Estimates on the phases}

The proof of the first two points rely on the following lemmas.

\begin{lemma}\label{lemmaLR1}
The jumps on the curves $L_1$ and $R_1$ are exponentially suppressed in any $L^p$ norm, $1\leq p \leq \infty$, as $\sigma \rightarrow + \infty$.
\end{lemma}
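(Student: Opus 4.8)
The plan is to show that each entry of the jump matrices on $L_1$ and $R_1$ that is not already part of the identity, namely $\e^{-\Theta_{\wt\sigma}(\lambda)}$ on $L_1$ and $\e^{\Theta_{\wt\sigma}(\lambda)}$ on $R_1$, decays exponentially in $\sigma$ uniformly along the chosen contour, and hence its $L^p(L_1)$ or $L^p(R_1)$ norm does too. Recall $\Theta_{\wt\sigma}(\lambda)=\frac{\lambda^3}{3}-\wt\sigma\lambda$ with $\wt\sigma=2^{2/3}\sigma$, and that $L_1$ is taken to pass through the critical point $P_{\sigma,L}=-\sqrt{\wt\sigma}$ while $R_1$ passes through $P_{\sigma,R}=\sqrt{\wt\sigma}$. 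By symmetry ($\lambda\mapsto-\lambda$ swaps the two contours and sends $\Theta_{\wt\sigma}$ to $-\Theta_{\wt\sigma}$), it suffices to treat $R_1$ and the quantity $\Re\,\Theta_{\wt\sigma}(\lambda)$, showing it is bounded below by a positive multiple of $\sigma^{3/2}$ plus a term that grows along the contour.

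First I would parametrize $R_1$ as a steepest-descent (constant-phase) contour for $\Theta_{\wt\sigma}$ emanating from the saddle $\sqrt{\wt\sigma}$: writing $\lambda=\sqrt{\wt\sigma}+r\e^{\i\varphi}$ near the saddle, one has $\Theta_{\wt\sigma}(\lambda)=\Theta_{\wt\sigma}(\sqrt{\wt\sigma})+\sqrt{\wt\sigma}\,r^2\e^{2\i\varphi}+\frac13 r^3\e^{3\i\varphi}$, and $\Theta_{\wt\sigma}(\sqrt{\wt\sigma})=-\frac23\wt\sigma^{3/2}=-\frac23 2^{2/3}\sigma\sqrt{2^{2/3}\sigma}=-\frac{2}{3}\,2\,\sigma^{3/2}=-\frac{4}{3}\sigma^{3/2}$. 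Choosing the rays along which $\Re(\e^{2\i\varphi})>0$ (i.e. the contour leaving the saddle into the region where $\Re\,\Theta_{\wt\sigma}$ increases), one gets $\Re\,\Theta_{\wt\sigma}(\lambda)\ge -\frac{4}{3}\sigma^{3/2}+c_1\sqrt{\sigma}\,r^2+c_2 r^3$ locally, and globally, by choosing $R_1$ to run off to infinity along the rays $\e^{\pm\i\pi/3}$ as prescribed, $\Re\,\Theta_{\wt\sigma}(\lambda)\ge \frac13|\lambda|^3\cos(\pi)$... — more carefully, along $\arg\lambda=\pm\pi/3$ one has $\Re(\lambda^3)=|\lambda|^3\cos(\pm\pi)=-|\lambda|^3<0$, so one must instead route $R_1$ through the saddle along a genuine steepest-descent path and only asymptotically approach the rays $\e^{\pm\i\pi/3}$ where $\Re\,\Theta_{\wt\sigma}$ remains $\ge c|\lambda|^3$ for $|\lambda|$ large (the cubic term $\frac{\lambda^3}{3}$ has $\Re>0$ on the sectors $|\arg\lambda|<\pi/6$ and $|\arg\lambda-\pi|<\pi/6$... ) — I would simply invoke that the prescribed $R_1$ is a steepest descent contour for $\Theta_{\wt\sigma}$, so $\Re\,\Theta_{\wt\sigma}$ is monotone increasing away from the saddle, whence $\e^{\Theta_{\wt\sigma}(\lambda)}$ on $R_1$ is bounded by $\e^{-\frac{4}{3}\sigma^{3/2}}$ times a function with a convergent $L^p$ integral along the contour (Gaussian near the saddle, $\e^{-c|\lambda|^3}$ at infinity). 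This gives $\|\e^{\Theta_{\wt\sigma}}\chi_{R_1}\|_{L^p}\le C_p\,\e^{-\frac{4}{3}\sigma^{3/2}}$, and by the $\lambda\mapsto-\lambda$ symmetry the same bound holds for $\|\e^{-\Theta_{\wt\sigma}}\chi_{L_1}\|_{L^p}$, so both jumps are $I+\mathcal O(\e^{-c\sigma^{3/2}})$ in every $L^p$, which is the claim.

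The main obstacle I anticipate is not the pointwise saddle estimate but the bookkeeping needed to make the bound uniform over the entire contour simultaneously with the other steepest-descent deformations: one must check that the contour $R_1$ chosen here (forced to pass through $\pm\sqrt{\wt\sigma}$, which drift to infinity as $\sigma\to\infty$) can be taken disjoint from the other contours $L_2,R_2,L_3,R_3$ (which are localized near $\pm 2^{1/3}\tau$, hence bounded) and from the region where the other phases are not controlled, and that the $L^p$ norms are computed with respect to arclength on a contour whose shape is $\sigma$-dependent. I would handle this by fixing, once and for all, a contour that coincides with the local steepest-descent parabola through the saddle on a ball of radius $O(\sqrt\sigma)$ and with the fixed rays $\arg\lambda=\pm\pi/3$ outside a large fixed ball, interpolating smoothly in between, and then noting that on the interpolating piece $|\Re\,\Theta_{\wt\sigma}|$ is still $\gtrsim\sigma^{3/2}$ because that piece stays in the half-plane $\Re\lambda>0$ at distance $\gtrsim\sqrt\sigma$ from the origin where $-\wt\sigma\Re\lambda\lesssim-\sigma^{3/2}$ dominates. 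With this choice the integral $\int_{R_1}\e^{p\,\Re\Theta_{\wt\sigma}}\,|\d\lambda|$ splits into a Gaussian contribution $\asymp\sigma^{-1/4}\e^{-\frac{4p}{3}\sigma^{3/2}}$ near the saddle plus an exponentially smaller tail, giving the stated exponential suppression in every $L^p$, $1\le p\le\infty$ (the $p=\infty$ case being the pointwise bound $\sup_{R_1}\e^{\Re\Theta_{\wt\sigma}}=\e^{-\frac{4}{3}\sigma^{3/2}}$ attained at the saddle).
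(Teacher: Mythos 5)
Your overall strategy --- parametrize $R_1$ through the saddle point $\sqrt{\wt\sigma}=2^{1/3}\sqrt{\sigma}$, expand $\Theta_{\wt\sigma}$, and estimate the $L^p$ norms, treating $L_1$ by the odd symmetry $\lambda\mapsto-\lambda$ --- is the same as the paper's, and the bound you state at the end, $\|e^{\Theta_{\wt\sigma}}\|_{L^p(R_1)}\leq C_p e^{-\frac43\sigma^{3/2}}$, is the correct one. However, the middle of your argument has the inequality pointing the wrong way, and taken literally it would destroy the estimate. The jump entry carried by $R_1$ is $e^{+\Theta_{\wt\sigma}}$, so what you must prove is an \emph{upper} bound $\Re\,\Theta_{\wt\sigma}\leq-\frac43\sigma^{3/2}-(\text{growing term})$ along $R_1$, i.e.\ the contour must leave the saddle in the directions where $\Re\,\Theta_{\wt\sigma}$ \emph{decreases}, $\Re(e^{2i\varphi})<0$ (e.g.\ $\varphi=\pm\pi/3$). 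Your prescription to choose $\Re(e^{2i\varphi})>0$ ``where $\Re\,\Theta_{\wt\sigma}$ increases'', and your stated goal that $\Re\,\Theta_{\wt\sigma}$ be ``bounded below by a positive multiple of $\sigma^{3/2}$'', are the opposite of what is needed: with that choice $e^{\Re\Theta_{\wt\sigma}}$ grows along the contour, every $L^p$ norm diverges, and your own $L^\infty$ claim that the supremum is attained at the saddle becomes false (it also contradicts your later sentence that $\Re\,\Theta_{\wt\sigma}$ is ``monotone increasing away from the saddle''). Similarly, the fact that $\Re(\lambda^3)<0$ along $\arg\lambda=\pm\pi/3$ is not a defect to be repaired by re-routing $R_1$ toward the sectors $|\arg\lambda|<\pi/6$ where $\Re(\lambda^3)>0$: those are exactly the growth sectors of $e^{\Theta_{\wt\sigma}}$ and must be avoided; the prescribed rays are already the decay directions.

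Once the sign is fixed, none of the contour surgery or interpolation you worry about in the last paragraph is needed, and the argument collapses to the paper's two-line computation: parametrizing $\lambda=\pm2^{1/3}\sqrt{\sigma}+u\left(\tfrac12\pm\tfrac{\sqrt3}{2}i\right)$, $u\geq0$, one finds exactly
$\Re\left[\Theta_{\wt\sigma;R_1}\right]=\Re\left[-\Theta_{\wt\sigma;L_1}\right]=-\tfrac43\sigma^{3/2}-\tfrac{\sqrt{\sigma}}{2^{2/3}}u^2-\tfrac{u^3}{3}$,
since $\cos(\pm2\pi/3)=-\tfrac12$ and $\cos(\pm\pi)=-1$ make both the quadratic and cubic contributions negative along the entire ray; hence $\|e^{\Theta_{\wt\sigma}}\|^p_{L^p(R_1)}=2\int_0^\infty e^{p\Re\Theta_{\wt\sigma}}\,du\leq Ce^{-\frac43 p\sigma^{3/2}}$ and $\|e^{\Theta_{\wt\sigma}}\|_{L^\infty(R_1)}=e^{-\frac43\sigma^{3/2}}$, with $L_1$ identical by symmetry. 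So the gap is not a missing idea but a directional error: you need to verify, as this explicit computation does, that the prescribed contour is a descent contour for the exponent that actually appears in the jump, rather than asserting growth of $\Re\,\Theta_{\wt\sigma}$ along it.
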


\begin{proof}
A parametrization for the curves $L_1$ and $R_1$ is the following $\lambda = \pm 2^{1/3}\sqrt{\sigma} +u\left(\frac{1}{2}\pm \frac{\sqrt{3}}{2}i\right)$. 
Therefore, we have (for both signs)
\begin{gather}
\Re \left[ \Theta_{\tilde \sigma; R_1} \right] = \Re \left[ - \Theta_{\tilde \sigma; L_1}\right] = -\frac{4}{3}\sigma^{3/2} -\frac{\sqrt{\sigma}}{2^{2/3}} u^2- \frac{u^3}{3}
\end{gather}
which implies
\begin{gather}
\left\| e^{\Theta_{\tilde \sigma}} \right\|^p_{L^p(R_1)} =  2 \int_0^\infty e^{p \Re \left[ \Theta_{\tilde \sigma} \right]}du  \leq Ce^{-\frac{4}{3}p\sigma^{3/2}}, \ \ 
\left\| e^{\Theta_{\tilde \sigma}} \right\|_{L^\infty(R_1)} = e^{-\frac{4}{3}\sigma^{3/2} }
\end{gather}
The same results holds for the contour $L_1$.
\end{proof}

\begin{lemma}\label{lemmaLR23}
Given $0<K_1<1$ fixed and $s<K_1(\sigma+\tau^2)$, then the function $e^{ \Theta(-\tau,b)+K(s)}$ tends to zero exponentially fast in any $L^p(R_3)$ norm ($1\leq p\leq \infty$) as $\sigma \rightarrow + \infty$:
\begin{equation}
\left\| e^{\Theta(-\tau,b)+K(s)}\right\|_{L^p(R_3)} \leq Ce^{-2\tau(1-K_1)\sigma}
\end{equation}
Similarly, the function $e^{- \Theta(-\tau,b)-K(s)}$ is exponentially small in any $L^p(L_3)$ norm  ($1\leq p\leq \infty$). 

Moreover, the function $e^{-\Theta_{\tau,-a}+K(t)}$ and $e^{\Theta_{\tau,-a}-K(t)}$ are exponentially small in any $L^p(L_2)$ and $L^p(R_2)$ norms, respectively ($1\leq p\leq \infty$).
\end{lemma}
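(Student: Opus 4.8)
The plan is to parametrize each of the three contours $R_3$, $L_3$, $L_2$, $R_2$ explicitly near the relevant critical point, substitute into the phase functions $\Theta_{-\tau,b}+K(s)$ (respectively $-\Theta_{-\tau,b}-K(s)$, $-\Theta_{\tau,-a}+K(t)$, $\Theta_{\tau,-a}-K(t)$), and show the real part of the exponent is bounded above by a negative quantity that grows like $-\sigma$ (uniformly along the contour and uniformly in $s,t$ in the stated range), exactly as was done for $L_1,R_1$ in Lemma \ref{lemmaLR1}. Concretely, using the $\xi_\pm$ change of variables introduced in \eqref{xi+-}, the phase $\Theta_{-\tau,b}(\lambda,s)$ is (up to the constant $K(s)$ which is cancelled by the conjugation by $D$) equal to $\frac{\xi_-^3}{3}-s\xi_-$, so on the contour $R_3$, centred at $P_L=-2^{1/3}\tau$ and parametrized as $\lambda = -2^{1/3}\tau + 2^{1/3}u(\tfrac12+\tfrac{\sqrt3}{2}i)$, i.e. $\xi_- = u(\tfrac12+\tfrac{\sqrt3}{2}i) - \sqrt2\,\tau$ (since $\xi_- = (\lambda-2^{1/3}\tau)/2^{1/3}$), one computes $\Re[\Theta(\xi_-,s)]$ and finds a leading term of order $-\tau^3$ or, after writing $s = K_1(\sigma+\tau^2) - (\text{positive})$ via the hypothesis $s<K_1(\sigma+\tau^2)$, a term proportional to $-2\tau(1-K_1)\sigma$ plus manifestly negative cubic-in-$u$ corrections.

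First I would fix the contour parametrizations so that the cubic term $-u^3/3$ (or $+u^3/3$ on the "opposite" rays, with the orientation making it negative) dominates for large $|u|$, guaranteeing integrability, and the constant term at $u=0$ produces the decaying exponential. Second, I would plug into $\Re[\Theta(\xi_-,s)]$ at $u=0$: there $\xi_-=-\sqrt2\,\tau$, giving $\frac{(-\sqrt2\tau)^3}{3}-s(-\sqrt2\tau) = -\frac{2\sqrt2}{3}\tau^3 + \sqrt2\,s\tau$; imposing $s<K_1(\sigma+\tau^2)$ bounds this by $-\frac{2\sqrt2}{3}\tau^3 + \sqrt2 K_1\tau\sigma + \sqrt2 K_1\tau^3$, and since $0<K_1<1$ and $\sigma\to+\infty$ the $\sigma$-linear term with a genuinely negative coefficient (after the cubic terms are absorbed or shown subdominant in the regime $\sigma\gg1$ at fixed $\tau$) gives the claimed bound $\le Ce^{-2\tau(1-K_1)\sigma}$ — the constant $2$ and the factor $(1-K_1)$ will come out of tracking the $2^{1/3}$ factors carefully. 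Third, I would repeat the same computation verbatim with sign flips for $-\Theta_{-\tau,b}-K(s)$ on $L_3$ (the contour being the mirror image through $P_L$), and then for $\mp(\Theta_{\tau,-a}\mp K(t))$ on $L_2$ and $R_2$, where one uses $\xi_+ = (\lambda+2^{1/3}\tau)/2^{1/3}$ and the constraint $t<K_1(\sigma+\tau^2)$; by the $\tau\leftrightarrow-\tau$, $a\leftrightarrow b$, $s\leftrightarrow t$ symmetry of the jump matrix \eqref{RHH} these are literally the same estimate, so I would simply note the symmetry rather than redo it. Finally, combining the pointwise exponent bound with the cubic decay gives $\|e^{\cdots}\|_{L^p}\le C e^{-2\tau(1-K_1)\sigma}$ for every $p\in[1,\infty]$, taking the $L^\infty$ bound from the $u=0$ value and the $L^p$ bound by the same Laplace-type integral estimate as in Lemma \ref{lemmaLR1}.

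The main obstacle is bookkeeping rather than conceptual: one must check that along the \emph{entire} chosen contour (not just at the critical point) the real part of the exponent stays below $-2\tau(1-K_1)\sigma$ up to the harmless integrable cubic term, which requires that the contour be steepest-descent-admissible for the shifted phase for \emph{all} $s$ (resp. $t$) in the allowed window $(-\infty, K_1(\sigma+\tau^2))$ simultaneously — in particular the contour must be chosen independently of $s,t$ yet remain in the region of negative real part, and one should verify the quadratic-in-$u$ term $-\frac{\sqrt\sigma}{2^{2/3}}u^2$-type contribution does not get overwhelmed when $s$ is itself large (close to $K_1(\sigma+\tau^2)$). This is handled by noting that increasing $s$ only makes $\Re[\Theta(\xi_-,s)] = \Re[\xi_-^3]/3 - s\,\Re[\xi_-]$ \emph{more} negative along the portion of the contour where $\Re[\xi_-]<0$ (which is where $\xi_- = -\sqrt2\tau + O(u)$ for bounded $u$), and the far tails are controlled purely by the cubic term uniformly in $s$; so the estimate is uniform over compact sets of $s,t$ and, with the stated one-sided bound, over the whole admissible range.
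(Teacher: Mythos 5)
Your opening strategy (parametrize each contour, take the real part of the exponent, get a constant term linear in $\sigma$ with negative coefficient plus an integrable cubic tail, then conclude for all $L^p$) is indeed the paper's strategy, but your execution breaks down at the central identification of the phase. The entry to be suppressed on $R_3$ is $e^{\Theta_{-\tau,b}+K(s)}$ with $\Theta_{-\tau,b}(\lambda)=\frac{\lambda^3}{6}+2^{-2/3}\tau\lambda^2-2^{-1/3}(b+\sigma)\lambda$; it is the \emph{other} phase $\Theta_{\tau,-b}$ (the one retained in the model problem on $L_2\cup R_2$) that equals $\frac{\xi_-^3}{3}-s\xi_-+K(s)$. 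Writing $\lambda=2^{1/3}(w-\tau)$ one finds instead $\Theta_{-\tau,b}=\frac{w^3}{3}+\tau w^2-(2\sigma+\tau^2-s)w$: it contains an explicit term proportional to $\sigma$ (coming from $-2^{2/3}\sigma\lambda$), and it is precisely this term, evaluated near the base point of $L_3\cup R_3$ at distance $O(\tau)$ from the origin, that produces the factor $e^{-2\tau(1-K_1)\sigma}$ under the hypothesis $s<K_1(\sigma+\tau^2)$. The Airy phase $\Theta(\xi_-,s)=\frac{\xi_-^3}{3}-s\xi_-$ that you substituted contains no $\sigma$ whatsoever, so no amount of bookkeeping along your route can yield a bound decaying in $\sigma$ at fixed $\tau$; your own computation shows this, since at $u=0$ you obtain $-\frac{2\sqrt2}{3}\tau^3+\sqrt2\,s\tau$ and, after imposing $s<K_1(\sigma+\tau^2)$, an upper bound containing the \emph{positive} term $+\sqrt2 K_1\tau\sigma$, which grows linearly in $\sigma$ and cannot be ``absorbed'' by cubic-in-$\tau$ terms when $\tau$ is fixed and $\sigma\to+\infty$. (Two smaller slips: at the base point one has $\xi_-=-2\tau$, not $-\sqrt2\,\tau$, and the same conflation recurs in your treatment of $L_2,R_2$, where you propose to use $\xi_+$ for $\Theta_{\tau,-a}$ although the $\xi_+$-reducible phase is $\Theta_{-\tau,a}$.)

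The paper's proof does what your first sentence promises, but with the correct phase: it parametrizes the contour by $\lambda=\textrm{(base point)}+u\bigl(\tfrac12\pm ci\bigr)$, sets $s=2\sigma+2\tau^2-\delta$, and computes $\Re\bigl[\Theta_{-\tau,b}+K(s)\bigr]=-\frac{u^3}{6}-2^{-4/3}\delta u-2\tau\sigma-2\tau^3+2\tau\delta$, whose $u$-independent part carries the $-2\tau\sigma$ that drives the decay; the $L^p$ ($p<\infty$) bounds then follow from splitting the Laplace-type integral and the $L^\infty$ bound from the value at $u=0$, and the remaining three entries are treated as ``completely analogous'', much as your symmetry remark intends. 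To repair your proposal you must redo your second step with $\Theta_{-\tau,b}$ on $L_3,R_3$ and $\Theta_{\tau,-a}$ on $L_2,R_2$ (not with the model-problem phases $\Theta(\xi_\mp,\cdot)$), and re-examine your uniformity-in-$s$ argument: the monotonicity claim ``increasing $s$ only makes the real part more negative'' was argued for the wrong phase, whereas for $\Theta_{-\tau,b}$ the $s$-dependence enters both through the linear-in-$w$ coefficient and through $K(s)$, and the uniform negativity along the whole contour for all $s<K_1(\sigma+\tau^2)$ has to be checked against the quadratic term $\tau w^2$ as well.
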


\begin{proof}
A parametrization of $R_3$ is $\lambda = \sqrt[3]{2}\tau + u\left[ \frac{1}{2} \pm \frac{2}{\sqrt{3}}i  \right]$, $u\geq 0$. This yields
\begin{gather}
\Re \left[ \Theta(-\tau,b) +K(s) \right] = -\frac{u^3}{6} -\frac{\delta u}{2^{\frac{4}{3}}} -2\tau \sigma-2{\tau}^{3}+2\tau \delta
\end{gather}
where we set $s = 2\sigma + 2\tau^2 - \delta$, $0< \delta < \sigma + \tau^2$, and this is valid for both branches of the curve.

Regarding the $L^p(R_3)$ norms, we have that $\left| e^{\Theta_{-\tau,b}+K(s)}\right| = e^{\Re[\Theta_{-\tau,b}+K(s)] }$; therefore,
\begin{gather}
\left\| e^{ \Theta(-\tau,b)+K(s)}\right\|^p_{L^p(R_3)} \leq   2Ce^{- 2p\tau(\sigma+ \tau^2-\delta)} \left[\int_0^1 e^{- 2^{-\frac{4}{3}} p \delta u }du + \int_1^\infty e^{-p\frac{u^3}{6}}du  \right] \nonumber \\
\leq Ce^{-2p\tau (1-K_1)\sigma}\\
\left\| e^{\Theta(-\tau,b)+K(s)}\right\|_{L^\infty(R_3)} = e^{-2\tau(\sigma+\tau^2-\delta)}\leq Ce^{-2\tau(1-K_1)\sigma}
\end{gather}
given that $s<K_1(\sigma+\tau^2)$ with $0<K_1<1$.

All the other cases are completely analogous.
\end{proof}

\subsection{Global parametrix. The model problem}
In this subsection we will use the Hasting-McLeod matrix (see \cite{Ptrans}, but in the normalization of \cite{MeM}) as parametrix for the RH problem related to $\hat \Gamma$.

Let us consider the following model problem: 
\begin{equation}
\left\{ 
\begin{array}{ll}
\Omega_+(\lambda) = \Omega_-(\lambda) J_R (\lambda) & \text{on} \ L_2\cup R_2 \\ 
\Omega_+(\lambda) = \Omega_-(\lambda) J_L (\lambda) & \text{on} \ L_3\cup R_3 \\
\Omega(\lambda) = I + \mathcal{O}\left( \lambda^{-1}\right)& \text{at} \ \infty
\end{array}
\right.
\end{equation}
with jumps (see \figurename \ \ref{RHmodel})
\begin{gather}
J_R:= \left[ \begin{array}{cccc}
1&0& 0 & e^{- \Theta(\xi_-,s)}\chi_{_{L_2}} \\
0&1&0&0 \\
 0 &0&1&0 \\
 e^{ \Theta(\xi_-,s)} \chi_{_{R_2}} &0&0& 1 
\end{array} \right] \\
J_L:= \left[ \begin{array}{cccc}
1&0&0&0 \\
0& 1& e^{\Theta(\xi_+,t)}\chi_{_{R_3}}&0 \\
0& e^{- \Theta(\xi_+,t)}\chi_{_{L_3}}& 1&0 \\
0&0&0&1 
\end{array} \right] 
\end{gather} 
and we recall $\xi_{\pm}:= \frac{\lambda \pm \sqrt[3]{2} \tau}{\sqrt[3]{2}}$ as defined in (\ref{xi+-}).

\begin{figure}
\centering
\includegraphics[width=.8\textwidth]{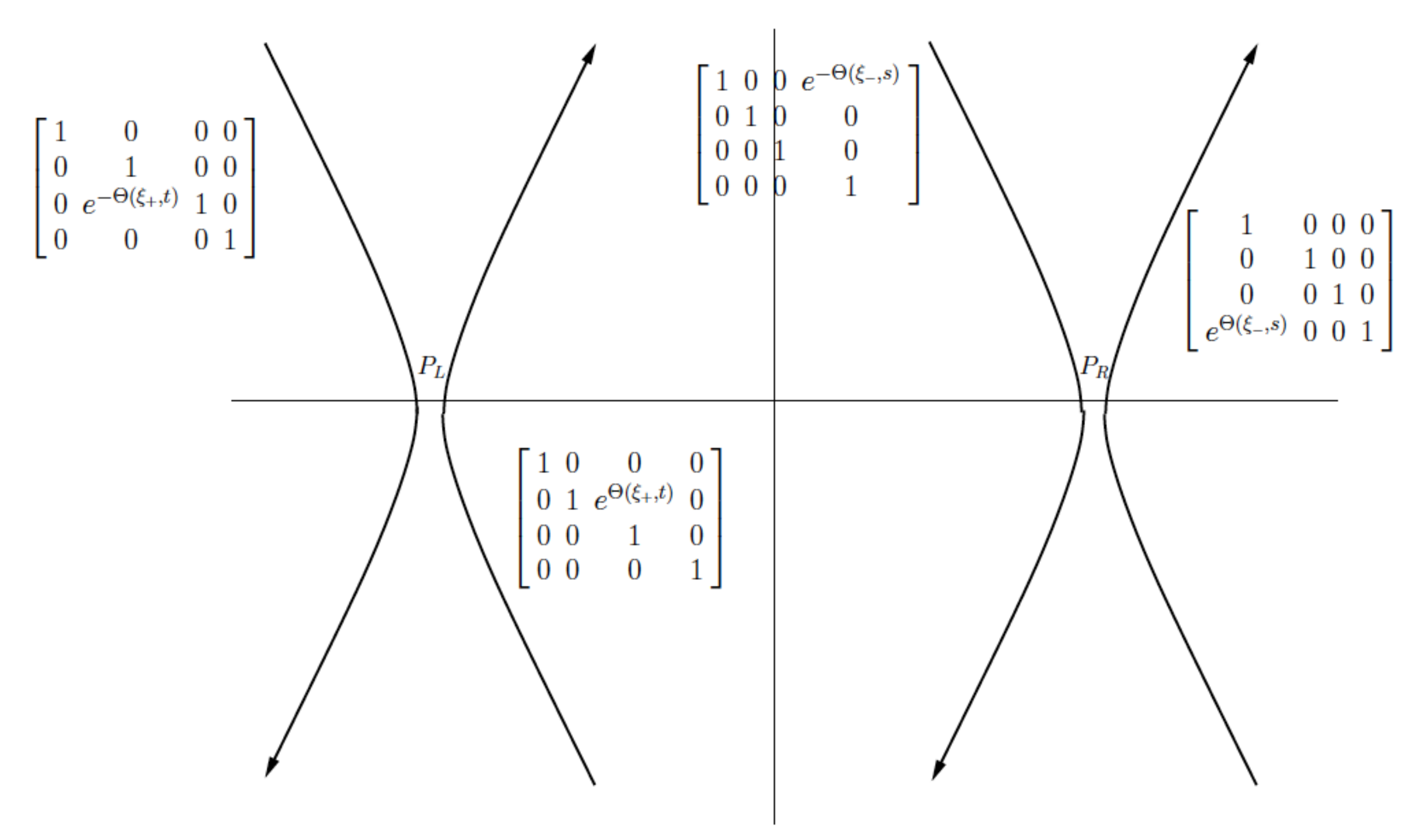}
\caption{The contour setting with the jump matrices in the model problem.}
\label{RHmodel}
\end{figure}

This model problem can be solved in exact form by considering two solutions of the Hasting-McLeod Painlev\'e II RH problem, namely 
\begin{equation}
\Phi_{HM}(s) \ \ \ \text{and} \ \ \ \tilde \Phi_{HM}(t):= \sigma_3\sigma_2 \Phi_{HM}(t) \label{HMLsolution}\sigma_2\sigma_3, 
\end{equation}
where $\sigma_2, \, \sigma_3$ are Pauli matrices and $\Phi_{HM}(u)$  is the solution to a $2\times 2$ RH problem with jump matrix
\begin{equation}
\left[ \begin{array}{cc}
1 & e^{\Theta(\lambda, u)}\chi_{_{\gamma_R}} \\
e^{-\Theta(\lambda,u)}\chi_{_{\gamma_L}}&1
\end{array}
\right] , \ \ \ \Theta(\lambda, u) = \frac{\lambda^3}{3} - u\lambda
\end{equation}
and behaviour at infinity normalized to the identity $2\times 2$ matrix; as usual, $\gamma_R$ is a contour which extends to infinity along the rays $\arg (\lambda) = \pm \frac{i\pi}{3}$ and $\gamma_L = -\gamma_R$ (for more details see \cite{MeM}).
The asymptotic behaviour of the functions (\ref{HMLsolution}) as $\xi \rightarrow \infty$ is
\begin{align}
\Phi(\xi_+,t ) &= I + \frac{1}{\xi_+}\left[ \begin{array}{cc} p(t)& q(t) \\ -q(t) & -p(t) \end{array}  \right] + \mathcal{O}\left( \frac{1}{\xi_+^2} \right) \\
\tilde \Phi(\xi_-, s)&=\sigma_3\sigma_2\left[  I + \frac{1}{\xi_-}\left[ \begin{array}{cc} p(s)& q(s) \\ -q(s) & -p(s) \end{array}  \right] + \mathcal{O}\left( \frac{1}{\xi_-^2} \right) \right] \sigma_2\sigma_3 \nonumber \\
&= I + \frac{1}{\xi_-}\left[ \begin{array}{cc} -p(s)& -q(s) \\ q(s) & p(s) \end{array}  \right] + \mathcal{O}\left( \frac{1}{\xi_-^2} \right)
\end{align}

The global parametrix, i.e. the exact solution of the model problem, is then easily verified to be given by
\begin{equation}
\Omega:= \left[ \begin{array}{cccc}
\tilde \Phi_{11}(\xi_-, s)&0&0&\tilde \Phi_{12}(\xi_-,s) \\
0& \Phi_{11}(\xi_+,t)& \Phi_{12}(\xi_+,t)&0 \\
0&\Phi_{21}(\xi_+,t)&\Phi_{22}(\xi_+,t)&0 \\
\tilde \Phi_{21}(\xi_-,s)&&0& \tilde \Phi_{22}(\xi_-,s)
\end{array} \right] .
\end{equation}

\subsection{Approximation and error term for the matrix $\hat \Gamma$}

The following relation holds
\begin{equation}
\hat \Gamma = \mathcal{E} \cdot \Omega
\end{equation}
where $\mathcal{E}$ is the ``error" matrix. The goal is to show that the RHP satisfied by the error matrix has jump equal to a small perturbation of the identity matrix $I + \mathcal{O}(\sigma^{-\infty})$, so that the Small Norm Theorem can be applied (see \cite[Appendix C]{MeM}).

\begin{lemma}\label{lemmaerror}
Given $s,t < K_1(\sigma + \tau^2)$ with $0<K_1<1$, the error matrix $\mathcal{E} = \hat \Gamma(\lambda) \Omega^{-1}(\lambda)$ solves a RH problem with jumps on the contours as indicated in \figurename \ \ref{curvesigma} and of the following orders
\begin{gather}
\left\{ 
\begin{array}{ll}
\mathcal{E}_+(\lambda) = \mathcal{E}_-(\lambda) J_\mathcal{E}(\lambda) & \text{on} \ \Sigma \\
\mathcal{E}(\lambda) = I + \mathcal{O}\left(\lambda^{-1}\right) & \text{as} \ \lambda \rightarrow \infty
\end{array} \right. \\
J_\mathcal{E} =  
\left[ \begin{array}{cccc}
1 &  \mathcal{O} (\sigma^{-\infty})\chi_{_{L_1}}  & \mathcal{O} (\sigma^{-\infty})\chi_{_{L_2}} & 0 \\
 \mathcal{O} (\sigma^{-\infty})\chi_{_{R_1}}  &1& 0 &\mathcal{O}(\sigma^{-\infty})\chi_{_{R_3}} \\
 -\mathcal{O}(\sigma^{-\infty}) \chi_{_{R_2}}& 0 &1&0 \\
0 & -\mathcal{O}(\sigma^{-\infty})\chi_{_{L_3}} &0& 1 
\end{array} \right] 
\end{gather}
and the $\mathcal{O}$-symbols are valid in any $L^p$ norms ($1\leq p \leq \infty$).
\end{lemma}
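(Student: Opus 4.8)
The plan is to read off the jump of $\mathcal{E}$ from those of $\hat\Gamma$ and of the global parametrix $\Omega$, and then estimate it contour by contour. First I would record the structural facts: since the two Hastings--McLeod (Painlev\'e~II) parametrices have unit determinant, $\det\Omega\equiv 1$, so $\Omega^{-1}$ is as regular as $\Omega$ and $\mathcal{E}=\hat\Gamma\,\Omega^{-1}$ is analytic on $\C\setminus\Sigma$; moreover $\hat\Gamma=I+\mathcal{O}(\lambda^{-1})$ and $\Omega=I+\mathcal{O}(\lambda^{-1})$ at infinity give $\mathcal{E}=I+\mathcal{O}(\lambda^{-1})$. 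On an oriented arc of $\Sigma$ the standard computation gives $\mathcal{E}_+=\mathcal{E}_-\,\Omega_-\bigl(\hat J\,J_\Omega^{-1}\bigr)\Omega_-^{-1}$, where $J_\Omega$ is the jump of the model problem, with the convention $J_\Omega=I$ on $L_1\cup R_1$ since those curves do not occur in the model problem. Thus the whole estimate reduces to two points: that $\hat J\,J_\Omega^{-1}=I+(\text{something small in every }L^p)$, and that conjugation by $\Omega_-$ preserves this smallness.

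On $L_1$ and $R_1$ one has $J_\Omega=I$ and $\Omega$ analytic, so $J_\mathcal{E}=\Omega\,\hat J\,\Omega^{-1}$, with $\hat J-I$ supported on a single off-diagonal entry that is exponentially small in every $L^p$ norm by Lemma~\ref{lemmaLR1}; since $\Omega$ is the interleaving of two Hastings--McLeod parametrices evaluated at the large arguments $\xi_\pm$, it tends to $I$ along these receding contours and is in particular uniformly bounded there, whence $J_\mathcal{E}=I+\mathcal{O}(\sigma^{-\infty})$. On $L_2,R_2$ (respectively $L_3,R_3$) the jump $\hat J$ is exactly $J_\Omega$ plus one further off-diagonal entry, of type $e^{\mp(\Theta_{\tau,-a}-K(t))}$ (respectively $e^{\mp(\Theta_{-\tau,b}+K(s))}$); because the entry of $\hat J$ that survives in the model problem is literally the corresponding entry of $J_\Omega$, and the two active off-diagonal positions are nilpotent with vanishing product, the product $\hat J\,J_\Omega^{-1}$ collapses to $I$ plus that single extra entry alone, which is $\mathcal{O}(\sigma^{-\infty})$ in every $L^p$ by Lemma~\ref{lemmaLR23}. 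On these four contours $\Omega_-$ is precisely the relevant $2\times2$ Hastings--McLeod parametrix restricted to its own jump contour in the variable $\xi_\pm$, uniformly bounded together with its inverse for $s,t$ in a compact set, so $J_\mathcal{E}=\Omega_-\bigl(I+\mathcal{O}(\sigma^{-\infty})\bigr)\Omega_-^{-1}=I+\mathcal{O}(\sigma^{-\infty})$ there as well. Collecting the cases yields the displayed form of $J_\mathcal{E}$, together with $\mathcal{E}=I+\mathcal{O}(\lambda^{-1})$ at infinity.

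I expect the genuinely delicate point to be the uniform control of the conjugating factor $\Omega_-$ and of $\Omega_-^{-1}$: it must be bounded in $L^\infty$ along every contour, uniformly in $\sigma$ and in $s,t$ ranging over compact sets, including along the curves $L_1,R_1$ which recede to $\pm\sqrt{\wt\sigma}$ as $\sigma\to+\infty$. On those one leans on the normalization $\Phi_{HM}\to I$ at infinity, while on $L_2,R_2,L_3,R_3$ one uses the known regularity of the Hastings--McLeod solution of Painlev\'e~II on compact parameter sets. The remaining bookkeeping --- that the ``extra'' entries of $\hat J$ really do factor cleanly out of $\hat J\,J_\Omega^{-1}$ because of the triangular/nilpotent block structure, and that the contours of Figure~\ref{curvesigma} and their intersections are compatible with those of the two Painlev\'e parametrices --- is routine but should be verified.
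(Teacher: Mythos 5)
Your proposal is correct and follows essentially the same route as the paper: the jump of $\mathcal{E}=\hat\Gamma\,\Omega^{-1}$ is the ``extra'' part of $\hat J$ not present in the model problem, conjugated by the uniformly bounded parametrix $\Omega$, and this extra part is $\mathcal{O}(\sigma^{-\infty})$ in every $L^p$ norm by Lemmas \ref{lemmaLR1} and \ref{lemmaLR23}. You merely make explicit what the paper leaves implicit (the factorization $\hat J J_\Omega^{-1}$ collapsing by nilpotency and the uniform boundedness of $\Omega_-$, $\Omega_-^{-1}$), so no further changes are needed.
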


\begin{proof}
First of all, we notice that, thanks to Lemma \ref{lemmaLR1} and \ref{lemmaLR23}, all the extra phases that were not included in the model problem $\Omega$ behave like $\mathcal{O}(\sigma^{-\infty})$ as $\sigma \rightarrow \infty$ in any $L^p$ norm. The jump of the error problem are  the remaining jumps appearing in the original $\hat \Gamma$-problem conjugated with the Hasting-McLeod solution $\Omega$, which is independent on $\sigma$:
\begin{gather}
J_{\mathcal{E}} = \Omega^{-1} \hat J \Omega = \nonumber \\
 \Omega^{-1} \left[ \begin{array}{cccc}
1& e^{-\Theta_{\wt \sigma}}\chi_{_{L_1}} &e^{- \Theta_{\tau, -a} + K(t)}\chi_{_{L_2}}& 0 \\
e^{\Theta_{\wt \sigma}}\chi_{_{R_1}}&1&0& e^{ \Theta_{-\tau,b}+K(s)}\chi_{_{R_3}} \\
-e^{\Theta_{\tau,-a}-K(t)}\chi_{_{R_2}}&0&1&0 \\
0&-e^{- \Theta_{-\tau,b}-K(s)}\chi_{_{L_3}}&0&1 
\end{array} \right] \Omega \nonumber \\
= \Omega^{-1} \left( I + \mathcal{O}(\sigma^{-\infty}) \right) \Omega = I  +\mathcal{O}(\sigma^{-\infty})
\end{gather}
since $\Omega$ and $\Omega^{-1}$ are uniformly bounded in $\sigma$.
\end{proof}

We recall that the Small Norm Theorem says that uniformly on closed sets not containing the contours of the jumps
\begin{equation}
\left\| \mathcal{E} (\lambda) - I\right\| \leq \frac{C}{\text{dist}(\lambda, \Sigma)} \left( \left\|J_{\mathcal{E}} -I \right\|_1 + \frac{\left\|J_{\mathcal{E}} -I \right\|^2_2}{1-\left\|J_{\mathcal{E}} -I \right\|_\infty} \right) 
\end{equation}
where $\Sigma$ is the collection of all contours. Thanks to Lemma \ref{lemmaerror}, we conclude
\begin{equation}
\left\| \mathcal{E} (\lambda) - I\right\| \leq \frac{C}{\text{dist}(\lambda, \Sigma)} e^{-K\sigma}
\end{equation}
for some positive constants $C$ and $K$. The error matrix $\mathcal{E}$ is then found as the solution to the integral equation
\begin{equation}
\mathcal{E}(\lambda) = I + \int_\Sigma \frac{\mathcal{E}_-(w) \left(J_\mathcal{E}(\lambda) - I \right) \, dw}{2\pi i (w-\lambda)}
\end{equation}
and can be obtained by iterations
\begin{equation}
\mathcal{E}^{(0)}(\lambda) = I, \ \ \ \mathcal{E}^{(k+1)}(\lambda) =  I + \int_\Sigma \frac{\mathcal{E}^{(k)}_-(w) \left(J_\mathcal{E}(\lambda) - I \right) \, dw}{2\pi i (w-\lambda)}
\end{equation}
and, thanks to Lemma \ref{lemmaerror} we have
\begin{equation}
\mathcal{E}(\lambda) = I + \frac{1}{\text{dist} (\lambda, \Sigma)}\mathcal{O} \left(\sigma ^{-\infty} \right).
\end{equation}

\subsection{Conclusion of the proof of Theorem \ref{THMsigma}}

We recall here a main theorem about Fredholm determinants of IIKS integrable kernels (see  \cite{Misomonodromic} and \cite{MeM})

\begin{thm} \label{thmtaufunctionresiduessigma}
The Fredholm determinant $\det (\operatorname{Id} - \hat \Pi \mathbb{H} \hat \Pi)$ of (\ref{explode0}) satisfies the following differential equations
\begin{gather}
\partial_{\rho} \ln \det (\operatorname{Id} - \hat \Pi \mathbb{H} \hat \Pi) = \omega_{JMU} (\partial_\rho) = \int_\Sigma \operatorname{Tr} \left( \Gamma_-^{-1}(\lambda) \Gamma'_-(\lambda) \partial_\rho \Xi(\lambda)\right) \frac{d\lambda}{2\pi i} \label{JMUtauH}
\end{gather}
More specifically,
\begin{align}
\partial_s \ln \det (\operatorname{Id}- \hat \Pi \mathbb{H}\hat \Pi) &= -  \operatorname{res}_{\lambda = \infty} \operatorname{Tr} \left(\Gamma^{-1}\Gamma'\partial_s T \right) = \frac{1}{\sqrt[3]{2}\lambda}\Gamma_{1; \, (4,4)} \\
\partial_t \ln \det (\operatorname{Id} - \hat \Pi\mathbb{H}\hat \Pi)&= -  \operatorname{res}_{\lambda = \infty} \operatorname{Tr} \left(\Gamma^{-1}\Gamma'\partial_t T \right)  = - \frac{1}{\sqrt[3]{2}\lambda}\Gamma_{1; \, (3,3)} 
\end{align}
where $\Gamma_1 := \lim_{\lambda \rightarrow \infty} \lambda (\Gamma(\lambda) - I)$. 
\end{thm}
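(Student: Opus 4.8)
Theorem~\ref{thmtaufunctionresiduessigma} is an application of the general correspondence relating the Fredholm determinant of an IIKS integrable operator to the isomonodromic (Jimbo--Miwa--Ueno) tau function of the Riemann--Hilbert problem naturally attached to it; the plan is to invoke that correspondence for the kernel $\mathbb{M}$ and then to unwind the resulting one-form for the deformation parameters $s$ and $t$.

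First I would use Proposition~\ref{conditionedprob} and the subsequent determinantal reductions to record that $\det(\1-\hat\Pi\mathbb{H}\hat\Pi)=\det(\1-\mathbb{M})$, and that $\mathbb{M}(\xi,\zeta)=\textbf{f}(\xi)^T\textbf{g}(\zeta)/(2\pi i(\xi-\zeta))$ is an IIKS kernel whose vectors $\textbf{f},\textbf{g}$ carry exponential factors $e^{\mp\Xi(\lambda)}$, where $\Xi$ is the diagonal matrix gathering the phases $\Theta_{\wt\s}$ and $\Theta(\xi_\pm,\cdot\,)$ appearing in the jump matrix \eqref{RHH}; the RH problem associated to it is precisely \eqref{RHH1} with jump matrix \eqref{mainRH} (reducing to the $4\times 4$ problem \eqref{RHH} in the one-interval case), whose solution I call $\Gamma$. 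The theorem recalled from \cite{Misomonodromic} and \cite{MeM} then asserts that, for any parameter $\rho$ entering the problem only through $\Xi$, $\partial_\rho\ln\det(\1-\mathbb{M})=\omega_{JMU}(\partial_\rho)$ with $\omega_{JMU}(\partial_\rho)=\int_\Sigma\Tr(\Gamma_-^{-1}\Gamma'_-\,\partial_\rho\Xi)\,d\lambda/(2\pi i)$; since $F_2(\wt\s)$ does not depend on $s$ or $t$, the same identity holds with $\det(\1-\hat\Pi\mathbb{H}\hat\Pi)$ on the left, which is the first equality in \eqref{JMUtauH}.

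It then remains to specialize $\rho$ to $s$ and to $t$. Only the phase $\Theta(\xi_-,s)=\xi_-^3/3-s\xi_-$ depends on $s$ and only $\Theta(\xi_+,t)=\xi_+^3/3-t\xi_+$ depends on $t$; in $\Xi=T$ these occupy the diagonal slots $(4,4)$ and $(3,3)$ respectively, matching the positions of the corresponding basis functions in $\textbf{f}$ and $\textbf{g}$. Hence $\partial_s T$ is diagonal with single nonzero entry $\pm\xi_-=\pm(2^{-1/3}\lambda-\tau)$ in the $(4,4)$ place, and $\partial_t T$ likewise with entry $\pm\xi_+=\pm(2^{-1/3}\lambda+\tau)$ in the $(3,3)$ place, the sign being the one imposed by the IIKS normalization. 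Since $\partial_\rho T$ is thus affine in $\lambda$ and entire, the integral defining $\omega_{JMU}(\partial_\rho)$ collapses — by the standard argument used in \cite{Misomonodromic} and \cite{MeM} — onto a large circle, yielding $\omega_{JMU}(\partial_\rho)=-\res_{\lambda=\infty}\Tr(\Gamma^{-1}\Gamma'\,\partial_\rho T)$. Inserting $\Gamma(\lambda)=I+\Gamma_1\lambda^{-1}+\mathcal{O}(\lambda^{-2})$, so that $\Gamma^{-1}\Gamma'=-\Gamma_1\lambda^{-2}+\mathcal{O}(\lambda^{-3})$, and multiplying by the affine diagonal $\partial_s T$, the trace reduces to $\pm 2^{-1/3}\Gamma_{1;\,(4,4)}\lambda^{-1}+\mathcal{O}(\lambda^{-2})$ — the $\tau$-term contributing only at order $\lambda^{-2}$, hence not to the residue — so that $-\res_{\lambda=\infty}$ reproduces the asserted formula for $\partial_s\ln\det$; the computation for $\partial_t\ln\det$ is identical, up to the overall sign and the replacement $(4,4)\mapsto(3,3)$ dictated by the opposite slot and sign of the $t$-phase.

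The genuinely delicate point is not analytic but one of bookkeeping: one must fix which exponential factor $e^{+\Xi}$ or $e^{-\Xi}$ is attached to $\textbf{f}$ and which to $\textbf{g}$, the orientations of the contours composing $\Sigma$, and the convention for $\res_{\lambda=\infty}$, so as to obtain the signs and the factor $2^{-1/3}$ exactly as in the statement. The cancellation of $F_2(\wt\s)$ and of all the $s,t$-independent phases — those on $L_1,R_1$ and in the $\Theta_{\tau,-a}$ and $\Theta_{-\tau,b}$ blocks — is immediate, and the analytic prerequisites (solvability of the $\Gamma$-problem and the validity of its expansion at infinity) are furnished by the small-norm estimates of the preceding subsections.
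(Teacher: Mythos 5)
Your proposal is correct and follows essentially the same route as the paper: identify $\det(\operatorname{Id}-\hat\Pi\mathbb{H}\hat\Pi)$ with the Fredholm determinant of the IIKS operator $\mathbb{M}$, invoke the isomonodromic $\tau$-function theorem of \cite{Misomonodromic}, \cite{MeM} to get $\omega_{JMU}$, collapse it to a residue at infinity, and read off the $(4,4)$ and $(3,3)$ entries of $\Gamma_1$ from the expansion $\Gamma^{-1}\Gamma'=-\Gamma_1\lambda^{-2}+\mathcal O(\lambda^{-3})$ against the affine $\partial_{s,t}T$. The only cosmetic difference is that the paper takes the traceless conjugation matrix $T$ (with the $\kappa/4$ shift) and kills the resulting scalar part via $\operatorname{Tr}\Gamma_1=0$, whereas you put the whole $s$- (resp.\ $t$-) dependent phase in the single slot $(4,4)$ (resp.\ $(3,3)$), which yields the same residue.
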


\begin{proof}
We notice that the original RHP for $\Gamma$ (see (\ref{RHH})) is equivalent to a RH problem with constant jumps up to a conjugation with the matrix
\begin{gather}
T = \text{diag} \left[ \frac{\kappa}{4}, -\Theta_{\wt \sigma}+\frac{\kappa}{4}, - \Theta_{\tau,-a}+\frac{\kappa}{4}, -\Theta_{\tau,-b}+\frac{\kappa}{4} \right] \label{conjT} \\
\kappa = \Theta_{\wt \sigma} + \Theta_{\tau,-a} + \Theta_{\tau,-b}
\end{gather}
Thus, the matrix $\Psi:= \Gamma e^T$ solves a RHP with constant jumps and it is (sectionally) a solution to a polynomial ODE. 

Applying the Theorem \cite[Theorem 2.1]{MeM} to the case at hand, we have the equality (\ref{JMUtauH}). Moreover, using the Jimbo-Miwa-Ueno residue formula, we can explicitly calculate
\begin{gather}
\partial_s \ln \det (\operatorname{Id}- \hat \Pi \mathbb{H}\hat \Pi) = -  \operatorname{res}_{\lambda = \infty} \operatorname{Tr} \left(\Gamma^{-1}\Gamma'\partial_s T \right) \\
\partial_t \ln \det (\operatorname{Id} - \hat \Pi\mathbb{H}\hat \Pi)= -  \operatorname{res}_{\lambda = \infty} \operatorname{Tr} \left(\Gamma^{-1}\Gamma'\partial_t T \right)   
\end{gather}
Taking into account the asymptotic behaviour at $\infty$ of the matrix $\Gamma$ we have
\begin{gather}
\text{Tr} \left[ \Gamma^{-1}\Gamma'\partial_sT \right] = \text{Tr} \left[ \left( -\frac{\Gamma_1}{\lambda^2}  + \mathcal{O}\left( \lambda^{-3}\right)\right) \left( \frac{\partial_s \kappa}{4}I - \partial_s \Theta_{\tau,-b} E_{4,4} \right) \right] 
=  - \frac{1}{\sqrt[3]{2}\lambda}\Gamma_{1; \, (4,4)} \\
\text{Tr} \left[ \Gamma^{-1}\Gamma'\partial_tT \right] = \text{Tr} \left[ \left( -\frac{\Gamma_1}{\lambda^2}  + \mathcal{O}\left( \lambda^{-3}\right)\right)  \left( \frac{\partial_t \kappa}{4}I - \partial_t \Theta_{\tau,-a} E_{3,3} \right) \right]  
= + \frac{1}{\sqrt[3]{2}\lambda}\Gamma_{1; \, (3,3)}
\end{gather}
since $\det \Gamma \equiv 1$ which implies $\text{Tr}\ \Gamma_1 =0$.
\end{proof}

We now use the exact formula in Theorem \ref{thmtaufunctionresiduessigma} to conclude the proof of Theorem \ref{THMsigma}; recall that 
\begin{equation}
\Gamma(\lambda) = e^{D} \mathcal{E}(\lambda) \Omega(\lambda) e^{-D}
\end{equation}
and thanks to Lemma \ref{lemmaerror} we have
\begin{gather}
\Gamma_1 = e^{D} \hat \Gamma_1 e^{-D} = \Omega_1 \left(I + \mathcal{O}(\sigma^{-\infty}) \right) \nonumber \\
= \sqrt[3]{2} \left[ \begin{array}{cccc}
-p(s)&0&0& -q(s) \\
0&p(t)& q(t)&0 \\
0& -q(t)& -p(t)&0 \\
q(s)&&0&p(s)
\end{array} \right] \left(I + \mathcal{O}(\sigma^{-\infty}) \right)
\end{gather}
which yields
\begin{align}
\Gamma_{1; \, (4,4)} &=  \Omega_{1; \, (4,4)} = \sqrt[3]{2} p(s) +  \mathcal{O}(\sigma^{-\infty})  \\
\Gamma_{1; \, (3,3)} &= \Omega_{1; \, (3,3)}  = -\sqrt[3]{2}p(t) + \mathcal{O}(\sigma^{-\infty}). 
\end{align}
Recall that $p(u)$ is the logarithmic derivative of the gap probability for the Airy process (i.e the Tracy-Widom distribution); collecting all the previous results, we have
\begin{gather}
\text{d}_{s,t} \ln \det \le(\operatorname{Id}  - \mathbb{H}\bigg|_{[-\sigma -\tau^2 +t, \sigma + \tau^2 -s ]}\ri)  \nonumber \\
=p(s) \text{ds}+ p(t) \text{dt} + \mathcal{O}\left( \sigma^{-\infty}\right) \text{ds} + \mathcal{O}\left( \sigma^{-\infty}\right) \text{dt} + \mathcal{O}\left( \sigma^{-\infty}\right) \text{ds dt} 
\end{gather}
uniformly in $s,t $ within the domain that guarantees the uniform validity of the estimates above as per Lemma \ref{lemmaerror}, namely, $s,t < K_1 (\sigma + \tau^2)$, $0<K_1<1$.

We now integrate from $(s_0, t_0)$ to $(s,t)$ with $s_0:= a+ \sigma + \tau^2$, $t_0 = -b+\sigma + \tau^2$ and we get
\begin{gather}
\ln \det  \le(\operatorname{Id}  - \mathbb{H}\bigg|_{[-\sigma -\tau^2 +t, \sigma + \tau^2 -s ]}\ri) \nonumber \\
=  \ln \det \le(\operatorname{Id}  - K_{\Ai}\bigg|_{[s,+\infty)}\ri) +  \ln \det \le(\operatorname{Id}  - K_{\Ai}\bigg|_{[t,+\infty}\ri) + \mathcal{O}(\sigma^{-1})  + C \label{asymptoticsigma}
\end{gather}
with $C = \ln \det  \le(\operatorname{Id}  - \mathbb{H}\bigg|_{[a,b]}\ri) $.

In conclusion,
\begin{gather}
\det \le(\operatorname{Id} - \mathbb K^{\rm tac}\bigg|_{[-\sigma -\tau^2 +t, \sigma + \tau^2 -s ]}\ri) \nonumber \\
= \frac{e^C\det \le(\operatorname{Id}  - K_{\Ai}\bigg|_{[s,+\infty)}\ri)\det \le(\operatorname{Id}  - K_{\Ai}\bigg|_{[t,+\infty}\ri) \left( 1 + \mathcal{O}(\sigma^{-1})\right) }{\det \le(\operatorname{Id}  - K_{\Ai}\bigg|_{[\wt \s,\infty)}\ri)}
\end{gather}
On the other hand, the Fredholm determinant of the Airy kernel appearing in the denominator tends to unity as $\sigma \rightarrow \infty$, thus we only need to prove that the constant $C$ is zero. Indeed this is the case






\begin{lemma}\label{Cequal0}
The constant of integration $C$ in (\ref{asymptoticsigma}) is zero.
\end{lemma}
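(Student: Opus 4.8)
By its construction in \eqref{asymptoticsigma}, $C$ is independent of $s$ and $t$, so it may be computed by evaluating \eqref{asymptoticsigma} at any convenient pair lying in the admissible region $s,t<K_1(\s+\tau^2)$. I would take $s=t=c(\s+\tau^2)$ for a fixed $0<c<K_1$ and then send $\s\to+\infty$.

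For this choice the spatial interval is the symmetric interval $I_\s:=\bigl[-(1-c)(\s+\tau^2),(1-c)(\s+\tau^2)\bigr]$, which lies strictly inside the band $(-\s-\tau^2,\s+\tau^2)$ separating the two soft edges, with its endpoints at distance $c(\s+\tau^2)\to+\infty$ from those edges. By Proposition \ref{conditionedprob} the left-hand side of \eqref{asymptoticsigma} at this point equals $\ln F_2(\wt\s)+\ln\det\bigl(\1-\mathbb K^{\rm tac}\big|_{I_\s}\bigr)$, while the right-hand side equals $2\ln F_2\!\bigl(c(\s+\tau^2)\bigr)+C+\mathcal O(\s^{-1})$, so that
\[
C=\ln F_2(\wt\s)+\ln\det\bigl(\1-\mathbb K^{\rm tac}\big|_{I_\s}\bigr)-2\ln F_2\!\bigl(c(\s+\tau^2)\bigr)+\mathcal O(\s^{-1}).
\]
Since $\wt\s\to+\infty$ and $c(\s+\tau^2)\to+\infty$, both Tracy--Widom factors tend to $1$; it therefore remains only to prove $\det\bigl(\1-\mathbb K^{\rm tac}\big|_{I_\s}\bigr)\to1$.

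I would obtain this by a direct estimate of the tacnode kernel on $I_\s\times I_\s$. For $x,y\in I_\s$ one has $\s-x,\ \s-y\ge c'\s$ and $x+\s,\ y+\s\ge c'\s$ for some $c'>0$ and all large $\s$, so every Airy factor appearing in $K^{(\tau,-\tau)}_{\Ai}(\s-x,\s-y)$ and in the functions $\mathcal A^{\tau}_{x-\s}$, $\mathcal A^{-\tau}_{y-\s}$ is evaluated at an argument bounded below by a positive multiple of $\s$ (in $\mathcal A^{\tau}_{x-\s}$ one uses in addition $\sqrt[3]{2}\,w\ge\sqrt[3]{2}\,\wt\s=2\s$ on the range $w\ge\wt\s$). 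Because $\Ai^{(\tau)}(z)=\e^{\tau z+\frac23\tau^3}\Ai(z)=\mathcal O\bigl(\e^{-\frac13 z^{3/2}}\bigr)$ as $z\to+\infty$ and the resolvent $\bigl(\1-K_{\Ai}\big|_{[\wt\s,\infty)}\bigr)^{-1}$ is uniformly bounded (in fact it converges to $\1$ as $\wt\s\to+\infty$), one gets $\bigl\|\mathbb K^{\rm tac}\big|_{I_\s}\bigr\|_1\le C\,\e^{-\kappa\s^{3/2}}$ for some $\kappa>0$, the super-exponential smallness of the kernel beating the $\mathcal O(\s)$ length of $I_\s$. Hence $\det\bigl(\1-\mathbb K^{\rm tac}\big|_{I_\s}\bigr)\to1$ and $C\to0$; equivalently $\e^{C}=1+\mathcal O(\s^{-1})$, which is precisely what is absorbed in the factor $(1+\mathcal O(\s^{-1}))$ of Theorem \ref{THMsigma}. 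One could instead argue purely at the level of the Riemann--Hilbert problem: from $\hat\Gamma=\mathcal E\,\Omega$ with $\mathcal E=I+\mathcal O(\s^{-\infty})$ of Lemma \ref{lemmaerror} and the variational formula for the isomonodromic tau function (\cite{Misomonodromic}), the determinant $\det\bigl(\1-\hat\Pi\mathbb H\hat\Pi\bigr)$ agrees up to $1+\mathcal O(\s^{-\infty})$ with the tau function of the model problem $\Omega$, which is a direct sum of two copies of the Hastings--McLeod problem in the normalization of \cite{MeM} and so equals exactly $F_2(s)F_2(t)$, with no spurious constant.

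The only genuinely non-formal step is this identification of the integration constant, which requires an \emph{independent} evaluation of the Fredholm determinant at one point; the super-exponential decay of $\mathbb K^{\rm tac}$ deep inside the separating band (or, equivalently, the triviality of the error factor $\mathcal E$ together with the known normalization of the Hastings--McLeod tau function) is the natural input, and everything else is bookkeeping with the estimates already at hand.
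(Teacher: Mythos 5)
Your plan is sound and reaches the right conclusion, but it pins the constant at a different point than the paper does. The paper's proof works directly at the base point of the integration: it keeps the original interval $[a,b]$ fixed, writes out the four blocks of the kernel of $\hat\Pi\,\mathbb{H}\,\hat\Pi$ (namely $\pi K_{\Ai}\pi$, $\tilde\Pi K_{\Ai}^{(\tau,-\tau)}\tilde\Pi$ and the two cross terms $\mathfrak A_{\pm\tau}$), and shows by pointwise Airy estimates that the dominating operator has trace going to $0$ as $\sigma\to+\infty$, whence $C=\ln\det(\1-\hat\Pi\mathbb{H}\hat\Pi)\to0$ directly. You instead exploit that $C$ is the same for all admissible $(s,t)$, evaluate \eqref{asymptoticsigma} at $s=t=c(\sigma+\tau^2)$, convert the left-hand side via Proposition \ref{conditionedprob} into $F_2(\wt\s)\det\bigl(\1-\mathbb{K}^{\rm tac}\big|_{I_\sigma}\bigr)$, and show the tacnode determinant on the large symmetric interval $I_\sigma$ tends to $1$ by super-exponential decay of the kernel deep inside the band. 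Both arguments rest on the same ingredients (decay of $\Ai^{(\pm\tau)}$ at large positive argument for fixed $\tau$, uniform boundedness of the resolvent $(\1-K_{\Ai}|_{[\wt\s,\infty)})^{-1}$), and both prove only that $C=o(1)$ with a quantitative rate, which is all the theorem needs. What your route buys is that you never have to estimate the conditioned block operator $\mathbb{H}$ itself; what it costs is that you must control trace norms over an interval of length $\mathcal O(\sigma)$ (which your super-exponential bounds do handle) and you lean on the uniformity of the $\mathcal O(\sigma^{-1})$ error in \eqref{asymptoticsigma} at the chosen point, which the paper's Lemma \ref{lemmaerror} indeed supplies for $s,t<K_1(\sigma+\tau^2)$. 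The paper's choice of a fixed interval is marginally simpler bookkeeping, but your evaluation point is equally legitimate.

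One caution: the parenthetical alternative at the end of your proposal --- that the isomonodromic tau function of the model problem ``equals exactly $F_2(s)F_2(t)$, with no spurious constant'' --- is circular as stated, since the variational formula only identifies logarithmic differentials and the multiplicative normalization is precisely the constant the lemma is about. You correctly flag in your closing paragraph that an independent evaluation of the Fredholm determinant at one point is unavoidable; keep the main argument and drop, or substantially justify, that aside.
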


\begin{proof}

We recall the definition of the integral operator $\hat \Pi \mathbb{H} \hat \Pi$ acting on $\mathcal{H}_1\oplus \mathcal{H}_2 = L^2([\wt \sigma, \infty)) \oplus L^2([\tilde a,\tilde b])$, with kernel 
\begin{equation}
\hat \Pi \mathbb{H} \hat \Pi = \left[ \begin{array}{c|c}
\pi K_{\Ai} \pi & -\sqrt[6]{2} \pi \mathfrak{A}^T_{-\tau}\tilde \Pi \\ \hline
-\sqrt[6]{2} \tilde \Pi \mathfrak{A}_\tau \pi & \sqrt[3]{2}\tilde \Pi K^{(\tau,-\tau)}_{\Ai} \tilde \Pi
\end{array}
\right]
\end{equation}
where $\hat \Pi := \pi \oplus \tilde \Pi$, $\pi$ is the projector on $[\wt \sigma, +\infty)$, $\Pi$ is the projector on $[\tilde a,\tilde b]$ and
\begin{gather}
K_{\Ai}(x,y):= \int_{0}^\infty \text{Ai}(x+u)\text{Ai}(y+u) \, du \\
K_{\Ai}^{(\tau,-\tau)}(\sigma-x, \sigma-y) :=  e^{\tau(y-x)} \times \nonumber \\
\times \int_0^\infty du\,  \text{Ai}(\sigma-x+\tau^2+\sqrt[3]{2}u)\text{Ai}(\sigma-y+\tau^2+\sqrt[3]{2}u)  \\
\mathfrak{A}_{\tau} (x,y):= \,  \text{Ai}^{( \tau)}(x-\sigma + \sqrt[3]{2}y) - \int_0^\infty \text{Ai}^{( \tau)}(\sigma-x+\sqrt[3]{2}v) \text{Ai}(v+y)\, dv \nonumber \\
= \, 2^{1/6} e^{\tau \left(x-\sigma + \sqrt[3]{2}y\right) + \frac{2}{3}\tau^3} \text{Ai}(x-\sigma + \sqrt[3]{2}y+\tau^2) + \nonumber \\
  -2^{1/6}\int_0^\infty  dv \, e^{\tau \left( \sigma-x+\sqrt[3]{2}v\right)+\frac{2}{3}\tau^3 } \text{Ai}(\sigma-x+\sqrt[3]{2}v+\tau^2) \text{Ai}(v+y)\\
\mathfrak{A}^T_{-\tau}(x,y) :=  
2^{1/6} e^{-\tau \left(y-\sigma + \sqrt[3]{2}x\right) - \frac{2}{3}\tau^3} \text{Ai}(y-\sigma + \sqrt[3]{2}x+\tau^2) + \nonumber \\
  -2^{1/6}\int_0^\infty  dv \, e^{-\tau \left( \sigma-y+\sqrt[3]{2}v\right) - \frac{2}{3}\tau^3 } \text{Ai}(\sigma-y+\sqrt[3]{2}v+\tau^2) \text{Ai}(v+x).
\end{gather}

We would like to perform some uniform pointwise estimates on the entries of the kernel in order to prove that as $\sigma\rightarrow +\infty$ the trace of the operator $\hat \Pi \mathbb{H} \hat \Pi $ tends to zero.

Indeed, 
\begin{gather}
|\pi K_{\Ai}(u,v) \pi| \leq \frac{C_1}{\sqrt{\sigma}} e^{-\frac{2}{3}u^{3/2} - \frac{2}{3}v^{3/2}}\\
|\sqrt[3]{2}\Pi K^{(\tau,-\tau)}_{\Ai}(x,y)\tilde \Pi| \leq C_2 e^{-\sigma^{3/2}} \\
|\sqrt[6]{2}\tilde \Pi \mathfrak{A}_{\tau}(x,v) \pi| \leq C_3 e^{-\tau^2 \sqrt{\sigma}} e^{\tau\left( \sqrt[3]{2}v-\sigma \right)-\frac{2}{3}\left( \sqrt[3]{2}v-\sigma +\tilde a \right)^{3/2}} \\
|\sqrt[6]{2}\pi \mathfrak{A}_{-\tau}^T(u,y)\tilde \Pi| \leq C_4 e^{-\tau^2 \sqrt{ \sigma}}  e^{-\tau \left(\sqrt[3]{2}u-\sigma \right) -\frac{2}{3}\left( \sqrt[3]{2}u-\sigma +\tilde a\right)^{3/2}} 
\end{gather}
for some positive constants $C_j$ ($j=1,\ldots,4$), where we used the convention that $x,y$ are the variables running in $[\tilde a,\tilde b]$ and $u,v$ are the variables running in $[\tilde \sigma, \infty)$. Such estimates follow from simple arguments on the asymptotic behaviour of the Airy function when its argument is very large.

Collecting all the estimates, we get
\begin{gather}
\left[ \begin{array}{c|c}
\pi K_{\Ai} \pi & -\sqrt[6]{2} \pi \mathfrak{A}^T_{-\tau}\tilde \Pi \\ \hline
- \sqrt[6]{2} \tilde \Pi \mathfrak{A}_\tau \pi &\tilde \Pi K^{(\tau,-\tau)}_{\Ai} \tilde \Pi
\end{array}
\right] \leq C_\sigma \left[ \begin{array}{c|c} f(u) f(v) & f(u) \\  \hline f(v)  & 1  \end{array}\right]
\end{gather}
with $ C_{\sigma} = \frac{\max \{C_j, \  j=1,\ldots, 4 \}}{\sqrt{\sigma}}$ and $f(z) = e^{\tau \left(\sqrt[3]{2}u-\sigma \right) -\frac{2}{3}\left( \sqrt[3]{2}u-\sigma -2^{-1/3}\sigma +2^{-1/3}a\right)^{3/2}}$. 
On the right hand side we have a new operator $\mathcal{L}$ acting on the same Hilbert space $L^2([\wt \sigma, \infty)) \oplus L^2([\tilde a,\tilde b])$ with trace
\begin{equation}
\operatorname{Tr} \mathcal{L} =    \left\| f \right\|_{L^2(\wt \sigma, \infty)}^2 +(\tilde b-\tilde a) \leq C(b-a) \label{traceL}
\end{equation}
for some positive constant $C$, since $\left\| f \right\|_{L^2(\wt \sigma, \infty)}^2 \rightarrow 0$ as $\sigma \rightarrow + \infty$. 

Concluding, keeping $[a,b]$ fixed,
\begin{gather}
|\ln \det (\operatorname{Id}  - \hat \Pi \mathbb{H}\hat \Pi)| =  \sum_{n=1}^\infty \frac{\text{Tr} \, (\hat \Pi \mathbb{H}\hat \Pi^n)}{n} \nonumber \\
 \leq  \sum_{n=1}^\infty \frac{C_\sigma^n (b-a)^n}{n} \leq \frac{C_\sigma (b-a)}{1-C_\sigma (b-a)} \rightarrow 0
\end{gather}
as $\sigma \rightarrow + \infty$. This implies that the constant of integration $C$ must be zero.
\end{proof}

\section{Proof of Theorem \ref{THMtau}}\label{Asymtau}

We deal now with the case $\tau\rightarrow \pm \infty$, i.e. we are moving away from the tacnode point along the boundary curves of the domain so that there is one of the gaps that divaricates as we proceed. From now on, we will only focus on the case $\tau \rightarrow + \infty$. The case $\tau\rightarrow -\infty$ is analogous. 

The RH problem we are considering is the same as for the proof of Theorem \ref{THMsigma} (\ref{RHH1})-(\ref{RHH}). We conjugates the jumps with the constant diagonal matrix $D$ (see definition (\ref{Dmatrix})) and we have the same jump matrices as in (\ref{RHPconj1})-(\ref{RHPconjfin}).

The position of the curves is depicted in \figurename \ \ref{curvetau}:
\begin{itemize}
\item $L_2$ and $R_2$ are centred around the critical point $P_R :=  2^{\frac{1}{3}}\tau$ 
\item $L_3$ and $R_3$ are centred around the critical point $P_L :=- 2^{\frac{1}{3}}\tau$
\item $L_1$ passes through the critical point $P_{\sigma,L} := - \sqrt{\wt \sigma}$ and $R_1$ passes through the critcal point $P_{\sigma,R}:=\sqrt{\wt \sigma}$.
\end{itemize}
The points $P_{R/L} = \pm 2^{\frac{1}{3}}\tau$ are thought as very far from the origin, in the limit as $\tau \gg 1$.

We need to perform certain ``contour deformations''  and "jump splitting" in the RHP (\ref{RHH1})-(\ref{RHH}). 
To explain these manipulation consider a general RHP  with a jump on a certain contour $\gamma_0$ and with jump matrix $J(\lambda)$ 
\begin{equation}
\Gamma_+(\lambda) = \Gamma_-(\lambda) J(\lambda)\ ,\ \ \lambda \in \gamma_0.
\end{equation}
The ``contour deformation'' procedure stands for the following; suppose  $ \gamma_1$ is another contour  such that
\begin{itemize}
\item $\gamma_0\cup  \gamma_1^{-1}$  is the positively oriented boundary of  a domain $D_{\gamma_0, \gamma_1}$, where $\gamma_1^{-1}$ stands for the contour traversed in the opposite orientation,
\item $J(\lambda)$ and $J^{-1}(\lambda)$ are both analytic in  $D_{\gamma_0, \gamma_1}$ and  (in case the domain extends to infinity) $J(\lambda) \to I + \mathcal O(\lambda^{-1})$ as $|\lambda| \to\infty, \lambda \in D_{\gamma_0, \gamma_1}$.
\end{itemize}
We define $\wt \Gamma(\lambda) = \Gamma(\lambda)$ for $\lambda \in \C \setminus D_{\gamma_0,\gamma_1}$ and $\wt \Gamma(\lambda) = \Gamma(\lambda) J(\lambda)^{-1} $ for $\lambda \in  D_{\gamma_0,\gamma_1}$. This new matrix then has jump on $ \gamma_1$ with jump matrix $J(\lambda)$ ($\lambda \in \gamma_1$) and no jump (i.e. the identity jump matrix) on $\gamma_0$. While technically this is a new Riemann Hilbert problem, we shall refer to it with simply as the ``deformation'' of the original one, without introducing a new symbol.

The ``jump splitting'' procedure stands for a similar manipulation: suppose that the jump matrix relative to the contour $\gamma$  is factorizable into two (or more) matrices $J(\lambda) = J_0(\lambda) J_1(\lambda)$.   Let $\wt \gamma$, $D_{\gamma,\wt \gamma}$ be exactly as in the description above. Then define $\wt \Gamma(\lambda) = \Gamma(\lambda)$ for $\lambda \in \C \setminus D_{\gamma,\wt \gamma}$ and 
$\wt \Gamma(\lambda) = \Gamma(\lambda) J_+(\lambda)^{-1} $ for $\lambda \in  D_{\gamma,\wt \gamma}$. Then $\wt \Gamma$ has jumps 
$$
\wt \Gamma_+(\lambda) = \wt \Gamma_-(\lambda) J_0(\lambda),  \ \ \lambda\in \gamma_0 ,\ \ 
\wt \Gamma_+(\lambda) = \wt \Gamma_-(\lambda) J_1(\lambda),  \ \ \lambda\in \gamma_1 .
$$
Also in this case, while this is technically a different RHP, we shall refer to it with the same symbol $\Gamma$. We will also refer to the inverse operation as ``jump merging".

\begin{figure}
\centering
\includegraphics[width=.8\textwidth]{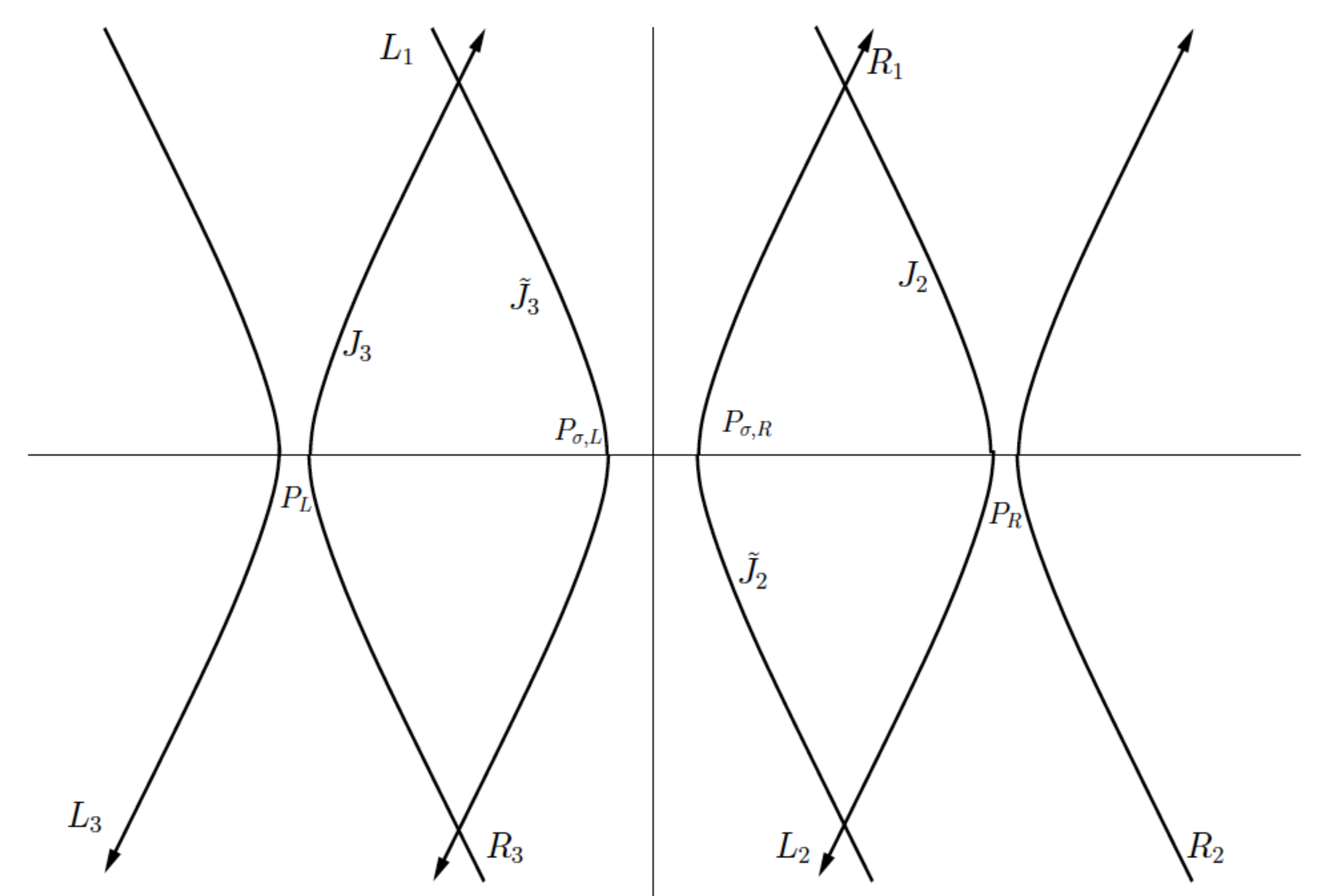}
\caption{The contour setting in the asymptotic limit as $\tau \rightarrow +\infty$.}
\label{curvetau}
\end{figure}

With this terminology in mind, we deform $R_3$ on the left next to its critical point $-\sqrt[3]{2}\tau$ leads to a new jump matrix on $R_3$, due to conjugation with the curve $L_1$ (similarly for $L_2$)
\begin{gather}
J_3 := L_1 R_3 L_1^{-1} =  R_1 L_2 R_1^{-1} =: J_2 \nonumber \\
= \left[ \begin{array}{cccc}
1&0&e^{-\Theta(\tau,-a)+K(t)}&e^{-\Theta(\xi_-,s)} \\
0&1&e^{\Theta(\xi_+,t)}&e^{\Theta(-\tau,b)+K(s)} \\
0&0&1&0 \\
0&0&0&1 
\end{array} \right]
\end{gather}

Again as before, the proof is based on estimating the phases in the jump matrices which are not critical and solving the RH problem by approximation with an exact solution to a model problem.

\subsection{Estimates of the phases}

First of all we notice that a similar version of Lemma \ref{lemmaLR1} does not apply here, since the phases on the contours $L_1$ and $R_1$ do not depend on $\tau$. On the other hand, we can partially restate Lemma \ref{lemmaLR23} applied to the case at hand when $\tau\rightarrow \infty$.
\begin{lemma}\label{lemmaLR23tau}
Given $0<K_1<1$ fixed and $s<K_1(\sigma+\tau^2)$, then the function $e^{ \Theta(-\tau,b)+K(s)}$ tends to zero exponentially fast in any $L^p(R_3)$ norm ($1\leq p\leq \infty$) as $\tau \rightarrow + \infty$:
\begin{equation}
\left\| e^{\Theta(-\tau,b)+K(s)}\right\|_{L^p(R_3)} \leq Ce^{-2(1-K_1)\tau^3}
\end{equation}
Similarly, the functions $e^{- \Theta(-\tau,b)-K(s)}$, $e^{\Theta_{\tau,-a}-K(t)}$ and $e^{-\Theta_{\tau,-a}+K(t)}$ are exponentially small in any $L^p(L_3)$, $L^p(R_2)$ and $L^p(L_2)$ norms, respectively ($1\leq p\leq \infty$). 
\end{lemma}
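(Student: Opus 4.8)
The statement is the $\tau\to+\infty$ counterpart of Lemma \ref{lemmaLR23}, and the plan is to run exactly the same argument with $\tau$ now playing the role of the large parameter. Everything reduces to estimating the real parts of the four ``extra'' phases that are not retained in the model problem, namely $\Theta_{-\tau,b}(\lambda)+K(s)$ on $R_3$, its negative on $L_3$, and $\Theta_{\tau,-a}(\lambda)-K(t)$ on $R_2$ together with its negative on $L_2$, along the deformed contours of \figurename\ \ref{curvetau}. Because the symmetries $s\leftrightarrow t$, $a\leftrightarrow -b$ and orientation reversal interchange $R_2\leftrightarrow L_2$ and $R_3\leftrightarrow L_3$, it is enough to treat $\Re[\Theta_{-\tau,b}(\lambda)+K(s)]$ on $R_3$; the remaining three cases are verbatim repetitions.

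For that case I would first parametrize $R_3$ near its critical point $P_L=-\sqrt[3]{2}\tau$ by $\lambda=P_L+u\,\omega$, $u\geq0$, with a direction $\omega$ of the type $\tfrac12\pm\tfrac{2}{\sqrt{3}}i$ (as in Lemma \ref{lemmaLR23}) for which $\Re(\omega^2)<0$ and $\Re(\omega^3)<0$. Substituting $b=\sigma+\tau^2-s$ and $K(s)=\tfrac{\tau^3}{3}-s\tau$ into the explicit formulas $(\ref{xi+-})$--$(\ref{RHPconjfin})$ and using the identity $\Theta_{-\tau,b}(\lambda)+K(s)=\Theta\!\big(\xi_+,\,2\sigma+2\tau^2-s\big)+2\tau(\sigma+\tau^2-s)$, with $\xi_+=2^{-1/3}\lambda+\tau$, the computation reduces to the standard cubic phase $\Theta(\xi,w)=\tfrac{\xi^3}{3}-w\xi$ evaluated at the large positive argument $w=2\sigma+2\tau^2-s$ along the appropriate steepest-descent ray. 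Expanding along the parametrization splits $\Re[\Theta_{-\tau,b}(\lambda)+K(s)]$ into a term independent of $u$ plus a profile $-c_2\sqrt{\tau}\,u^2-c_3u^3+\cdots$ (the quadratic piece carrying a factor $\tau$ inherited from the monomial $\tfrac{\tau}{2^{2/3}}\lambda^2$); splitting the $u$-integral over $[0,1]$ and $[1,\infty)$, exactly as in Lemma \ref{lemmaLR23}, shows this profile contributes only a bounded constant to every $L^p$-norm, $1\leq p\leq\infty$, so the whole estimate collapses to the $u$-independent leading term.

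The one genuinely non-routine step is then to show that this leading term is $\leq -2(1-K_1)\tau^3$: one must verify that the negative $O(\tau^3)$ produced by the steepest-descent value $-c\,w^{3/2}=-c\,(2\sigma+2\tau^2-s)^{3/2}$ of the cubic phase beats the positive shift $2\tau(\sigma+\tau^2-s)$, and it is precisely here that the hypothesis $s<K_1(\sigma+\tau^2)$ enters — together with the sharper localization of $s$ and $t$ imposed in Theorem \ref{THMtau}, which pins $K_1$ close enough to $1$ for the two $O(\tau^3)$ contributions to be comparable. Once this sign is secured the bounds $\big\|e^{\Theta_{-\tau,b}+K(s)}\big\|_{L^p(R_3)}\leq C\,e^{-2(1-K_1)\tau^3}$ and the $L^\infty$ version are immediate, the identical bookkeeping with $t,\xi_-$ in place of $s,\xi_+$ settles $L_2$ and $R_2$, and the $L_3$-estimate for $e^{-\Theta_{-\tau,b}-K(s)}$ is its mirror image. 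I expect this sign accounting to be the main obstacle: in contrast with Lemma \ref{lemmaLR23}, where sending $\sigma\to\infty$ at fixed $\tau$ suppressed the extra phases uniformly and for free, here every competing quantity already scales like $\tau^3$, so the suppression is entirely a matter of pinning down the correct constant.
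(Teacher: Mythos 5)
Your overall skeleton (parametrize the contour near its critical point, take the real part of the phase, split the $u$-integral over $[0,1]$ and $[1,\infty)$, reduce everything to a $u$-independent constant) is exactly what the paper does: its proof simply reruns Lemma \ref{lemmaLR23} with the same parametrization, the constant now being of order $\tau^3$. But the step you yourself single out as the ``genuinely non-routine'' one is precisely where your proposal goes astray, and it is left unresolved. The contour $R_3$ is centred at $P_L=-\sqrt[3]{2}\tau$, i.e.\ at $\xi_+=0$, which is the critical point of the \emph{retained} phase $\Theta(\xi_+,t)$, not a saddle of the discarded phase. Your identity $\Theta_{-\tau,b}+K(s)=\Theta(\xi_+,w)+2\tau(\sigma+\tau^2-s)$ with $w=2\sigma+2\tau^2-s$ is correct, but along the actual contour the cubic is evaluated at $\xi_+=2^{-1/3}u\omega$ near $0$: the $u$-independent term is exactly $2\tau(\sigma+\tau^2-s)$, and the profile in $u$ is $-c_1 w\,u-c_3u^3$ (there is no $-c_2\sqrt{\tau}\,u^2$ term; that quadratic belongs to an expansion around the saddle $\xi_+=\sqrt{w}$, through which the contour does not pass). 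Hence there is no ``$-c\,w^{3/2}$ versus $+2\tau(\sigma+\tau^2-s)$'' competition to adjudicate, and even if you rerouted the contour through that saddle the inequality you need would fail: for $\sigma,s$ fixed and $\tau\to+\infty$ one gets $-\tfrac23w^{3/2}+2\tau(\sigma+\tau^2-s)\approx\bigl(2-\tfrac{2^{5/2}}{3}\bigr)\tau^3>0$, and pushing $K_1$ close to $1$ does not repair this. So the decisive negativity of the leading term is neither correctly identified nor proved in your proposal.

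What actually makes the constant negative in the paper's proof is the localization of $s$ near $2(\sigma+\tau^2)$: the proof sets $s=2\sigma+2\tau^2-\delta$ with $\delta$ confined to a window (this is the role of the second set of conditions in Theorem \ref{THMtau}), so the constant term equals $2\tau(\sigma+\tau^2-s)=-2\tau(\sigma+\tau^2-\delta)\leq-2(1-K_1)\tau^3$ directly, with no saddle-point input at all; the hypothesis ``$s<K_1(\sigma+\tau^2)$'' quoted in the statement must be read as this restriction on $\delta$ (under the literal reading, the exponent at the centre of $R_3$ is $+2\tau(\sigma+\tau^2-s)>0$ and no contour estimate could give the claimed bound, which is presumably why your sign accounting looked like an obstacle). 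Once you replace your steepest-descent bookkeeping by the constant evaluated at $\xi_+=0$ together with the $\delta$-window, the remainder of your argument (the $[0,1]$/$[1,\infty)$ splitting, the $L^\infty$ bound, and the treatment of the three analogous phases on $L_3$, $R_2$, $L_2$ by symmetry) coincides with the paper's proof.
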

\begin{proof}
Using the same parametrization as in Lemma \ref{lemmaLR23}, we have
\begin{gather}
\left\| e^{ \Theta(-\tau,b)+K(t)}\right\|^p_{L^p(R_3)} \leq   2Ce^{- 2p\tau(\sigma+ \tau^2-\delta)} \left[\int_0^1 e^{- 2^{-\frac{4}{3}}\delta pu}du + \int_1^\infty e^{-p\frac{u^3}{6}}du  \right] \nonumber \\
\leq Ce^{-2p(1-K_1)\tau^3}\\
\left\| e^{\Theta(\tau,b)-K(t)}\right\|_{L^\infty(R_3)} = e^{-2\tau(\sigma+\tau^2-\delta)}\leq Ce^{-2(1-K_1)\tau^3}
\end{gather}
where we set $s=2\sigma +2\tau^2 - \delta$, $0<\delta <  \sigma+ \tau^2$.
The proof for the other phases on the contours $L_3$, $R_2$ and $L_2$ is analogous.
\end{proof}

Before estimating the entries of the jump matrices on $J_2$ and $J_3$, we factor the jumps in the following way. We split the jump $J_2$ into two jumps (and two curves): with abuse of notation we call the first one $J_2$ and we merge the second jump with the jump on $R_1$. Thus, the new jumps are the following (see \figurename \ \ref{curvetau})
\begin{gather}
J_2 = \left[ \begin{array}{cccc}
1&0&e^{-\Theta(\tau,-a)+K(t)}&e^{-\Theta(\xi_-,s)} \\
0&1&e^{\Theta(\xi_+,t)}&0\\
0&0&1&0 \\
0&0&0&1 
\end{array} \right]
 \\
\tilde J_2 =    \left[ \begin{array}{cccc}
1&0&0&0 \\
e^{\Theta_{\wt \sigma}}&1&0 &-e^{\Theta(-\tau,b)+K(s)} \\
0&0&1&0 \\
0&0&0&1 
\end{array} \right]
\end{gather}

Analogously, we split the jump $J_3$ into two jumps: we call the first one again $J_3$ and we merge the second one with the jump on $L_1$. The new configuration of jump matrices is illustrated in \figurename \ \ref{curvetau}. 
\begin{gather}
J_3  = \left[ \begin{array}{cccc}
1&0&0&e^{-\Theta(\xi_-,s)} \\
0&1&e^{\Theta(\xi_+,t)}&e^{\Theta(-\tau,b)+K(s)} \\
0&0&1&0 \\
0&0&0&1 
\end{array} \right]
 \\
\tilde J_3 = \left[ \begin{array}{cccc}
1&e^{-\Theta_{\wt \sigma}}&e^{-\Theta(\tau,-a)+K(t)}&0\\
0&1&0&0 \\
0&0&1&0 \\
0&0&0&1 
\end{array} \right]
\end{gather}

\begin{lemma}\label{lemmaJ23}
Let $\kappa := \frac{8}{3} - \frac{p}{6}$. Given $0< K_2 <1$ fixed and $t = 4\tau^2 -\delta$, $0< \delta \leq K_2\kappa \tau^2$, then the $(1,3)$ and $(2,3)$ entries of the jump matrix $J_2$ are exponentially suppressed as $\tau \rightarrow + \infty$ in $L^p$ norms with $p=1,2, +\infty$. 

Given $0<K_3<1$ fixed and $s= \tau^2 + 2\sigma - \delta$, $0<\delta \leq K_3 \left(2\sigma + \frac{2}{3}\tau^2 \right)$, the $(2,4)$ entry of $\tilde J_2$ is exponentially suppressed in any $L^p$ norm ($1\leq p \leq \infty$).

Similarly, the same results hold true for the $(1,4)$ and $(2,4)$ entries of $J_3$ and for the $(1,3)$ entry of $\tilde J_3$.
\end{lemma}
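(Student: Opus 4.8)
The approach is the one already used in Lemmas~\ref{lemmaLR1}, \ref{lemmaLR23} and \ref{lemmaLR23tau}: on each arc of the deformed and merged contours of \figurename\ \ref{curvetau} I would introduce an explicit affine parametrization $\lambda = \lambda_0 + u\,\omega$ centred at the relevant critical point, substitute it into the appropriate cubic phase, take real parts, and read off the sign of the dominant term as $\tau\to+\infty$ (and, where $\sigma$ enters, as $\sigma\to\infty$). All the contour deformations and jump splittings performed above are legitimate because the phases are polynomials, hence entire, so only the behaviour at infinity is constrained, and that is already respected by the choice of sectors in \figurename\ \ref{curvetau}.

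\textbf{The $t$-dependent entries of $J_2$.} On the arc of $J_2$ around $P_R = 2^{1/3}\tau$ I would write $\xi_+ = 2\tau + 2^{-1/3}u\,\omega$, so that $\xi_+$ sits near the critical value $\sqrt{t}=\sqrt{4\tau^2-\delta}$ of $\Theta(\,\cdot\,,t)$. A direct computation gives $\Re\,\Theta(\xi_+,t) = -\tfrac{16}{3}\tau^3 + 2\tau\delta + (\text{decaying terms in }u)$; the pointwise value already forces $\delta<\tfrac83\tau^2$, and integrating $e^{p\,\Re\,\Theta(\xi_+,t)}$ along the arc yields the sharper threshold recorded in $\kappa = \tfrac83-\tfrac p6$, so that $0<\delta\le K_2\kappa\tau^2$ with $K_2<1$ gives $\|e^{\Theta(\xi_+,t)}\|_{L^p(J_2)}\le C e^{-c\tau^3}$ for $p=1,2,\infty$. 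The $(1,3)$ entry is easier: on the same arc the $\delta$-contributions to $\Re(\Theta_{\tau,-a}-K(t))$ cancel, leaving a term positive of order $\tau^3$, so $e^{-\Theta_{\tau,-a}+K(t)}$ is even more strongly suppressed and needs no constraint on $\delta$ beyond the one already in force.

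\textbf{The $s$-dependent entry of $\tilde J_2$, and the symmetric entries.} The entry $(2,4)$ of $\tilde J_2$ lives on the arc merged with $R_1$ through $P_{\sigma,R}=\sqrt{\wt\sigma}=2^{1/3}\sqrt\sigma$; parametrizing $\lambda = 2^{1/3}\sqrt\sigma + u(\tfrac12 + \tfrac{\sqrt3}{2}i)$ and substituting $s = \tau^2 + 2\sigma - \delta$, the real part of $\Theta_{-\tau,b}+K(s)$ is negative of order $\sigma^{3/2}+\tau^3$ precisely when $0<\delta\le K_3(2\sigma+\tfrac23\tau^2)$, $K_3<1$; one also checks the portion of this arc near $P_R$, where the behaviour of this phase changes character. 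The entries $(1,4)$, $(2,4)$ of $J_3$ and the entry $(1,3)$ of $\tilde J_3$ are the images of the corresponding entries of $J_2$, $\tilde J_2$ under the reflection $\lambda\mapsto-\lambda$ with $\tau$ fixed and $s\leftrightarrow t$, which interchanges the arcs near $\pm 2^{1/3}\tau$ and the roles of the $s$- and $t$-windows; hence their estimates follow verbatim. In particular $e^{-\Theta(\xi_-,s)}$ on the $J_3$-arc near $-2^{1/3}\tau$ is treated exactly like $e^{\Theta(\xi_+,t)}$ on the $J_2$-arc near $2^{1/3}\tau$. Interpolating between $p=1,2,\infty$ then gives the claim for all $1\le p\le\infty$.

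\textbf{Main obstacle.} The individual computations are elementary but numerous; the delicate point is the calibration of the admissible windows for $\delta$. The bounds $\delta\le K_2\kappa\tau^2$ and $\delta\le K_3(2\sigma+\tfrac23\tau^2)$ are exactly what is needed to keep the leading coefficients in the real parts strictly negative with room to spare, and I expect the care to be concentrated in (i) tracking the $p$-dependence that enters when the pointwise estimate at the critical point is upgraded to an $L^p$ estimate along the arc --- this is the source of the $-p/6$ in $\kappa$ --- and (ii) verifying that the single choice of contour directions in \figurename\ \ref{curvetau} places every non-model entry in a decaying sector simultaneously.
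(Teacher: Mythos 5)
For the entries of $J_2$ and $\tilde J_2$ your plan coincides with the paper's proof: one parametrizes the finite arcs of \figurename\ \ref{curvetau}, substitutes $t=4\tau^2-\delta$, $s=\tau^2+2\sigma-\delta$, takes real parts, and bounds the $L^p$ norms; the critical value $-\tfrac{16}{3}\tau^3+2\tau\delta$ for $e^{\Theta(\xi_+,t)}$, the fact that the arc terminates where it meets $R_1$ (so the positive cubic growth in $u$ never takes over), and the bookkeeping that produces $\kappa=\tfrac83-\tfrac p6$ are exactly the ingredients of the paper's computation. The only cosmetic difference is the $(1,3)$ entry of $J_2$, which the paper dispatches by observing it is the same exponential already estimated on $L_2$ in Lemma \ref{lemmaLR23tau}, while you verify it directly on the arc; your "the $\delta$-contributions cancel" is not literally what happens (at $\lambda=2^{1/3}\tau$ one finds $\Re[\Theta_{\tau,-a}-K(t)]=6\tau^3-2\sigma\tau-2\tau\delta$), but the conclusion is the same under the standing restriction on $\delta$.

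The genuine gap is the claim that the $(1,4)$, $(2,4)$ entries of $J_3$ and the $(1,3)$ entry of $\tilde J_3$ "follow verbatim" from those of $J_2$, $\tilde J_2$ via the reflection $\lambda\mapsto-\lambda$ together with $s\leftrightarrow t$. That symmetry does map $\Theta(\xi_+,\cdot)$ into $-\Theta(\xi_-,\cdot)$, but it does not map the hypotheses into each other: the two windows are completely asymmetric, $t=4\tau^2-\delta$ (so that $\sqrt t\approx 2\tau=\xi_+(P_R)$, which is what makes the $J_2$ exponent hugely negative at the arc), whereas $s=\tau^2+2\sigma-\delta$ (so that $\sqrt s\approx\tau\neq 2\tau=|\xi_-(P_L)|$). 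Concretely, your statement that $e^{-\Theta(\xi_-,s)}$ on the $J_3$-arc near $-2^{1/3}\tau$ is "treated exactly like" $e^{\Theta(\xi_+,t)}$ on the $J_2$-arc near $2^{1/3}\tau$ fails numerically: at $\lambda=-2^{1/3}\tau$ one has $\xi_-=-2\tau$ and $\Re\bigl[-\Theta(\xi_-,s)\bigr]=\tfrac83\tau^3-2s\tau=\tfrac23\tau^3-4\sigma\tau+2\tau\delta>0$, so this exponential is exponentially \emph{large} in a neighbourhood of $P_L$, in sharp contrast with the exponent $-\tfrac{16}{3}\tau^3+2\tau\delta$ of the $J_2$ entry at $P_R$. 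The smallness of the $s$-dependent entries is tied to the actual placement of $J_3$, $\tilde J_3$ after the deformation and splitting around $L_1$ (near the origin region the same exponent becomes $\approx-\tfrac23\tau^3-2\tau\sigma+\tau\delta$, which is where the window $\delta\le K_3(2\sigma+\tfrac23\tau^2)$ enters), and this is precisely how the paper estimates the companion entry of $\tilde J_2$: a separate parametrization with constant term $-\tfrac23\tau^3-2\tau\sigma+\tau\delta$ and its own $L^p$ bound. So the $J_3$, $\tilde J_3$ estimates require their own computation with the $s$-expansion on the correct arcs (as the paper's "analogous" step tacitly does), and cannot be obtained as literal mirror images of the $J_2$, $\tilde J_2$ estimates with $s\leftrightarrow t$.
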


\begin{proof}
The first row on $J_2$ is the same as the one on $L_2$ and the entry $e^{-\Theta(\tau,-a)+K(t)}$ is exponentially suppressed in any $L^p$ norm, thanks to Lemma \ref{lemmaLR23tau}. 

Regarding the remaining term on the second row, the real part of the argument in the exponent is
\begin{gather}
\Re\left[ \Theta(\xi_+,t) \right] = \frac{u^3}{6} - \frac{\tau}{2^{2/3}} {u}^{2} - \frac{\delta}{2 \sqrt[3]{2}}u - \frac{16}{3}{\tau}^{3}+2\tau \delta
\end{gather}
where we set $t = 4\tau^2 - \delta$, $ \delta>0$.
\begin{oss}
A parametrization for the curve $J_2$ is $\lambda =  \sqrt[3]{2}\tau + u \left[ \frac{1}{2} \pm \frac{2}{\sqrt{3}}i  \right]$, $u\in [0, \sqrt[3]{2}\tau]$. When $u=\sqrt[3]{2}\tau$, the curve $J_2$ hits the curve $R_1$ and for $u > \sqrt[3]{2}\tau$ the contour $L_2$ appears. 
\end{oss}

Provided  $\delta < \kappa \tau^2$ ($\kappa:=\frac{8}{3}-\frac{p}{6}$), it is straightforward to compute the $L^p$ norms ($1 \leq p < 16$)
\begin{gather}
\left\| e^{ \Theta(\xi_+,t)}\right\|^p_{L^p(J_2)} =   2e^{-2p\tau\left( \frac{8}{3}\tau^2 -\delta\right)}\int_0^{\sqrt[3]{2}\tau} e^{p\left(\frac{u^3}{6} - \frac{\tau}{2^{2/3}} {u}^{2} - \frac{\delta}{2 \sqrt[3]{2}}u\right)}du \nonumber \\
\leq C e^{-2p\tau\left[ \left(\frac{8}{3}-\frac{p}{6}\right)\tau^2 -\delta\right]}\left[ \int_1^{\infty} e^{- \frac{p \tau}{2^{2/3}} {u}^{2}}du + \int_0^{1} e^{- \frac{p\delta}{2 \sqrt[3]{2}}u}du \right] \leq C e^{-2p \kappa (1-K_2)\tau^3} \\
\left\| e^{ \Theta(\xi_+,t)}\right\|_{L^\infty(J_2)} = e^{-\frac{16}{3}\tau^3 + 2\tau\delta} \leq C e^{- 2 \left[ \kappa (1-K_2) + 16\right] \tau^3}
\end{gather}
for some suitable $0<K_2<1$.

The phase on $\tilde J_2$ behaves like
\begin{gather}
\Re\left[ \Theta(-\tau,b) +K(s) \right] = - \frac{u^3}{6} - \frac{\tau}{2 \,2^{2/3}}{u}^{2}-\frac{\delta}{2 \, 2^{1/3}} u - \frac{2}{3}{\tau}^{3}-2\tau \sigma+\tau\delta
\end{gather}
where we set $s= \tau^2 +2\sigma-\delta$, $\delta>0$. Thus, provided $\delta < 2\sigma + \frac{2}{3}\tau^2$, the $L^p$ norms are
\begin{gather}
\left\| e^{ \Theta(-\tau,b)+K(s)}\right\|^p_{L^p(\tilde J_2)} =   2 e^{-p\tau\left( \frac{2}{3}\tau^2 +2\sigma -\delta\right)}\int_0^{\sqrt[3]{2}\tau} e^{-p\left( \frac{u^3}{6} + \frac{\tau}{2 \,2^{2/3}}{u}^{2} + \frac{\delta}{2 \, 2^{1/3}} u \right)}du \nonumber \\
\leq C e^{-p\tau\left( \frac{2}{3}\tau^2 +2\sigma -\delta\right)} \left[ \int_0^{1} e^{-\frac{p\delta}{2 \, 2^{1/3}} u}du + \int_1^{\infty} e^{-p\frac{u^3}{6}}du\right] \nonumber \\
\leq C e^{-p\left( 1-K_3\right)\tau^3}  \\
\left\| e^{\Theta(-\tau,b)+K(s)}\right\|_{L^\infty(\tilde J_2)} = e^{-\frac{2}{3}\tau^3 -2\tau\sigma + \tau\delta} \leq e^{-C(1-K_3)\tau^3}
\end{gather}
for some suitable $0<K_3<1$.

The arguments for $J_3$ and $\tilde J_3$ are analogous. 
\end{proof}

\subsection{Global parametrix. The model problem}
We will now define a new ``model" RH problem which will eventually approximate the solution to our original problem $\hat \Gamma$.

We define the following RH problem:
\begin{equation}
\left\{ \begin{array}{ll}
\Omega_+(\lambda) = \Omega_-(\lambda) J_{\Ai}(\lambda) & \text{on} \ L_1\cup R_1 \\
\Omega_+(\lambda) = \Omega_-(\lambda) J_{R}(\lambda) & \text{on} \ L_2\cup R_2 \\
\Omega_+(\lambda) = \Omega_-(\lambda) J_{L}(\lambda) & \text{on} \ L_3\cup R_3 \\
\Omega(\lambda) = I + \mathcal{O} \left(\lambda^{-1}\right) & \text{as} \ \lambda \rightarrow \infty
\end{array}
\right.
\end{equation}
with jumps
\begin{gather}
J_{\Ai} :=  \left[ \begin{array}{cccc}
1&e^{-\Theta_{\wt \sigma}}\chi_{_{L_1}}&0 & 0\\
e^{\Theta_{\wt \sigma}}\chi_{_{R_1}}&1&0&0 \\
0 &0&1&0 \\
0&0&0&1 
\end{array} \right]  \\
J_R:= \left[ \begin{array}{cccc}
 1&0& 0& e^{- \Theta(\xi_-,s)}\chi_{_{L_2}} \\
0&1&0&0 \\
0&0&1&0 \\
e^{ \Theta(\xi_-,s)} \chi_{_{R_2}} &0&0& 1 
\end{array} \right] \\
J_L:= \left[ \begin{array}{cccc}
1&0&0&0 \\
0&1& e^{\Theta(\xi_+,t)}\chi_{_{R_3}}& 0 \\
0& e^{- \Theta(\xi_+,t)}\chi_{_{L_3}}& 1&0 \\
0&0&0&1 
\end{array} \right] 
\end{gather}

Let's denote by $\Psi_{a,b}$ the $4\times 4$ solution to the Airy RHP related to the submatrix formed by the $a$-th row and column and by the $b$-th row and column. In particular, we call $\Psi_{1,2}$ the matrix solution to the Hasting-McLeod Airy RHP for the minor $(1,2)$, related to the jump $J_\Ai$, with asymptotic solution
\begin{equation}
\Psi_{1,2}(\wt \sigma) =I_{4\times 4}  + \frac{1}{\lambda}\left[
 \begin{array}{cc|cc} 
 -p(\wt \sigma)& -q(\wt \sigma) & 0 & 0 \\
q(\wt \sigma) & p(\wt \sigma) & 0 & 0 \\ \hline
  0 &0&0&0 \\
  0&0&0& 0
 \end{array}  \right] + \mathcal{O}\left( \frac{1}{\lambda^2} \right)
\end{equation}
We consider now the matrix $\Xi:= \Omega \cdot \Psi_{1,2}^{-1}(\wt \sigma)$. This matrix doesn't have jumps on $L_1$ and $R_1$ by construction, but still has jumps on $L_2$, $R_2$ and $L_3$, $R_3$: 
\begin{gather}
\tilde J_L := \Psi_{1,2}J_L \Psi_{1,2}^{-1} \ \ \ \text{and} \ \ \ \tilde J_R := \Psi_{1,2}J_R \Psi_{1,2}^{-1}
\end{gather}
On the other hand, as $\tau \rightarrow + \infty$ the critical points $\pm \sqrt[3]{2}\tau$ as well as the curves $L_2$, $R_2$, $L_3$, $R_3$ go to infinity, while the matrix $\Psi_{1,2}$ is asymptotically equal to the identity matrix.

We are left with 
\begin{gather}
\Xi =  \mathcal{E}_{1} \cdot \Psi_{2,3}(t) \cdot \Psi_{1,4}(s)
\end{gather}
where $\Psi_{2,3}$ and $\Psi_{1,4}$ where defined in (\ref{HMLsolution}) and $\mathcal{E}_1$ is the error matrix.

Following the previous remark, it is easy to show that the error matrix $\mathcal{E}_1$ is a sufficiently small perturbation of the identity and therefore we can apply the Small Norm Theorem and approximate the global parametrix $\Omega$ by simply the product of the matrices $\Psi_{a,b}$ ($(a,b) = (1,2), (2,3), (1,4)$)
\begin{equation}
\Omega =\Xi \cdot \Psi_{1,2}(\wt \s) \sim \Psi_{2,3}(t) \cdot \Psi_{1,4}(s) \cdot \Psi_{1,2}(\wt \sigma).
\end{equation}

\subsection{Approximation and error term for the matrix $\hat \Gamma$}
The relation between our original RH problem $\hat \Gamma$ and the global parametrix $\Omega$ is the following
\begin{equation}
\hat \Gamma = \mathcal{E}_2 \cdot \Omega := \mathcal{E}_2 \cdot \Psi_{2,3}(t) \cdot \Psi_{1,4}(s) \cdot \Psi_{1,2}(\wt \sigma)
\end{equation}
where $\mathcal{E}_2$ is again an error matrix, to which we will apply Small Norm Theorem once again.

\begin{lemma}\label{lemmaerrortau}
In the estimates on $s,t$ stated in Lemmas \ref{lemmaLR23tau} and \ref{lemmaJ23}, the error matrix $\mathcal{E} = \hat \Gamma(\lambda) \Omega^{-1}(\lambda)$ solves a RH problem with jumps on the contours as indicated in \figurename \ \ref{curvetau} and of the following orders
\begin{gather}
\left\{ 
\begin{array}{ll}
\mathcal{E}_+(\lambda) = \mathcal{E}_-(\lambda) J_\mathcal{E}(\lambda) & \text{on} \ \Sigma \\
\mathcal{E}(\lambda) = I + \mathcal{O}\left(\lambda^{-1}\right) & \text{as} \ \lambda \rightarrow \infty
\end{array} \right. \\
J_\mathcal{E} =  \\
\left[ \begin{array}{cccc}
1 & 0  & \mathcal{O} (\tau^{-\infty})\chi_{_{L_2}} + \mathcal{O}(\tau^{-\infty})\chi_{_{\tilde J_3}} & \mathcal{O}(\tau^{-\infty})\chi_{_{J_3}} \\
0 &1&  \mathcal{O}(\tau^{-\infty})\chi_{_{J_2}} &\mathcal{O}(\tau^{-\infty})\chi_{_{R_3}} + \mathcal{O}(\tau^{-\infty})\chi_{_{\tilde J_2}} \\
 \mathcal{O}(\tau^{-\infty}) \chi_{_{R_2}} & 0 &1&0 \\
 0 & \mathcal{O}(\tau^{-\infty})\chi_{_{L_3}}  &0& 1 
\end{array} \right]  \nonumber
\end{gather}
where $\Sigma$ is the collection of all contours and the $\mathcal{O}$-symbols are valid for $L^1$, $L^2$ and $L^\infty$ norms.
\end{lemma}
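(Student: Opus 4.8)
\emph{Proof strategy.} The plan is to run the argument used for Lemma \ref{lemmaerror}: realize the error matrix as the conjugation, by the global parametrix $\Omega$, of the jumps of $\hat\Gamma$ that are \emph{not} built into the model problem, and then invoke the phase estimates to see that each such jump is a negligible perturbation of the identity. First I would note that, since $\Omega$ is analytic and invertible off $L_1\cup R_1\cup L_2\cup R_2\cup L_3\cup R_3$ and solves the model RHP exactly there, the matrix $\mathcal{E}=\hat\Gamma\,\Omega^{-1}$ has jumps only on $\Sigma$ and satisfies $\mathcal{E}(\lambda)=I+\mathcal{O}(\lambda^{-1})$ as $\lambda\to\infty$ (the subleading terms of $\hat\Gamma$ and $\Omega$ being compatible because $\det\hat\Gamma\equiv1$). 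On a contour carrying the $\hat\Gamma$-jump $\hat J$ and the model jump $J_{\mathrm{mod}}$ (with $J_{\mathrm{mod}}=I$ on the auxiliary arcs $\tilde J_2,\tilde J_3$, which are not contours of $\Omega$), the usual computation $\mathcal{E}_+=\hat\Gamma_-\hat J\,J_{\mathrm{mod}}^{-1}\Omega_-^{-1}=\mathcal{E}_-\,\Omega_-\hat J\,J_{\mathrm{mod}}^{-1}\Omega_-^{-1}$ gives $J_{\mathcal{E}}=\Omega_-\,\hat J\,J_{\mathrm{mod}}^{-1}\,\Omega_-^{-1}$.

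Next I would go through the contours and check that $\hat J\,J_{\mathrm{mod}}^{-1}-I$ (respectively $\hat J-I$ on $\tilde J_2,\tilde J_3$) consists only of entries already shown to be exponentially small. The dominant entries $e^{\mp\Theta_{\wt\sigma}}$ on $L_1\cup R_1$ are reproduced by $J_{\Ai}$ (through the factor $\Psi_{1,2}(\wt\sigma)$ of $\Omega$), the ``critical'' entries $e^{\mp\Theta(\xi_-,s)}$ on $L_2,R_2$ by $J_R$, and $e^{\pm\Theta(\xi_+,t)}$ on $L_3,R_3$ by $J_L$; after these cancellations what remains on $L_2,R_2,L_3,R_3$ are only the entries $e^{\mp(\Theta_{\tau,-a}-K(t))}$ and $e^{\pm(\Theta_{-\tau,b}+K(s))}$, and on the deformed arcs $J_2,\tilde J_2,J_3,\tilde J_3$ only the entries $e^{\Theta(\xi_+,t)}$, $e^{-\Theta(\xi_-,s)}$, $e^{\Theta(-\tau,b)+K(s)}$ and $e^{\mp(\Theta_{\tau,-a}-K(t))}$. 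By Lemma \ref{lemmaLR23tau} and Lemma \ref{lemmaJ23} every one of these is $\mathcal{O}(\tau^{-\infty})$ in $L^1$, $L^2$ and $L^\infty$ in the admissible ranges $s,t<K_1(\sigma+\tau^2)$, $t=4\tau^2-\delta$ with $0<\delta<\tfrac{7}{3}K_2\tau^2$, and $s=\tau^2+2\sigma-\delta$ with $0<\delta<K_3\bigl(2\sigma+\tfrac{2}{3}\tau^2\bigr)$; note that on $J_2$ and $J_3$ the ``misplaced'' phases $e^{\Theta(\xi_+,t)}$ and $e^{-\Theta(\xi_-,s)}$ are small because there $\xi_\mp=(\lambda\mp\sqrt[3]{2}\tau)/\sqrt[3]{2}$ is of order $\tau$, so their real parts are large and negative.

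To pass from this to $J_{\mathcal{E}}=I+\mathcal{O}(\tau^{-\infty})$ I must then show $\|\Omega_-^{\pm1}\|_{L^\infty(\Sigma)}\leq C$ uniformly in $\tau$. Writing $\Omega\sim\Psi_{2,3}(t)\,\Psi_{1,4}(s)\,\Psi_{1,2}(\wt\sigma)$ as in the preceding subsection, the factor $\Psi_{1,2}(\wt\sigma)$ is independent of $\tau$, while $\Psi_{2,3}(t)$ and $\Psi_{1,4}(s)$ are Hastings--McLeod parametrices in the shifted variables $\xi_+,\xi_-$ whose parameters satisfy $s,t\to+\infty$ throughout the admissible ranges; in that regime the Hastings--McLeod solution of Painlev\'e II tends to the identity and the associated $2\times2$ matrix is bounded uniformly along its contour. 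This gives the required uniform bound, and the $\mathcal{O}(\tau^{-\infty})$ estimates then transfer from $\hat J\,J_{\mathrm{mod}}^{-1}-I$ to $J_{\mathcal{E}}-I$, which is exactly the claimed $J_\mathcal{E}$.

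The step I expect to be the main obstacle is precisely this last uniform control of $\Omega$: the critical points $\pm\sqrt[3]{2}\tau$ and the contours $L_2,R_2,L_3,R_3$ escape to infinity while the Painlev\'e parameters grow like $\tau^2$, so one has to argue that the Hastings--McLeod parametrices stay uniformly bounded over this $\tau$-dependent geometry (using their large-positive-argument asymptotics), and in addition that the deformations, splittings and mergers do not introduce spurious jumps at the points where $\tilde J_2$ and $\tilde J_3$ abut $R_1$ and $L_1$, i.e.\ that the corresponding matrix factorizations are continuous across those intersection points. Once these boundedness and compatibility facts are in hand, the Small Norm Theorem (as recalled in Section \ref{Asymsigma}) applies verbatim and yields the lemma.
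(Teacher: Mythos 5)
Your proposal is correct and follows essentially the same route as the paper: the error jump is the conjugation by the global parametrix $\Omega$ of the phases not built into the model problem, these phases are $\mathcal{O}(\tau^{-\infty})$ by Lemmas \ref{lemmaLR23tau} and \ref{lemmaJ23}, and the smallness survives conjugation because $\Omega$ is under control uniformly in $\tau$ (the paper phrases this as $\Omega$ growing slower than the exponential bound $C_1e^{-C_2\tau^3}$, while you argue uniform boundedness via the Hastings--McLeod parametrices in the shifted variables $\xi_\pm$ — either suffices). Your extra bookkeeping on the deformed arcs $J_2,\tilde J_2,J_3,\tilde J_3$ and on compatibility at the contour intersections is a more explicit rendering of the same argument, not a different one.
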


\begin{proof}
Due to Lemmas \ref{lemmaLR23tau} and \ref{lemmaJ23}, we know from the estimates above that all the extra phases that appear in the original RH problem for $\hat \Gamma$ are bounded by an expontential function of the form $C_1e^{-C_2\tau^3}$. 
The jumps of the error problem are the remaining jumps appearing in the $\hat \Gamma$-problem conjugated with the global parametrix $\Omega$:
\begin{gather}
J_{\mathcal{E}} = \Omega^{-1} \left( I + \mathcal{O}(\tau^{-\infty}) \right) \Omega = I  +\mathcal{O}(\tau^{-\infty})
\end{gather}
The last equality follows from the fact that the solution $\Omega$ depends on $\tau$ with a growth that is smaller than the bound $C_1e^{-C_2\tau^3}$ that we have for the phases.
\end{proof}

Thus, the Small Norm Theorem can be applied
\begin{equation}
\left\| \mathcal{E} (\lambda) - I\right\| \leq \frac{C}{\text{dist}(\lambda, \Sigma)} \left( \left\|J_{\mathcal{E}} -I \right\|_1 + \frac{\left\|J_{\mathcal{E}} -I \right\|^2_2}{1-\left\|J_{\mathcal{E}} -I \right\|_\infty} \right) \leq \frac{C}{\text{dist}(\lambda, \Sigma)} e^{-K\tau}
\end{equation}
where $\Sigma$ is the collection of all contours, for some positive constants $C$ and $K$. The error matrix $\mathcal{E}$ is then found as the solution to an integral equation
and, thanks to Lemma \ref{lemmaerrortau} we have
\begin{equation}
\mathcal{E}(\lambda) = I + \frac{1}{\text{dist} (\lambda, \Sigma)}\mathcal{O} \left(\tau ^{-\infty} \right).
\end{equation}

We need the first coefficient $\hat \Gamma_1 = \hat \Gamma_1(s,t,\wt \sigma)$ of $\hat \Gamma (\lambda)$ at $\lambda = \infty$ and how it compares to the corresponding coefficient $\Omega_1$ of $\Omega(\lambda)$; the error analysis above shows that
\begin{equation}
\hat \Gamma_1 = \Omega_1 + \mathcal{O} \left( \tau^{-\infty}\right). \label{GammaOmega1}
\end{equation}

\subsection{Conclusion of the proof of Theorem \ref{THMtau}}

\begin{thm}
The Fredholm determinant $\det (\operatorname{Id} - \hat \Pi \mathbb{H} \hat \Pi)$ is  equal to the Jimbo-Miwa-Ueno isomonodromic $\tau$-function of the RH problem (\ref{RHH}). For any parameter $\rho$ on which the integral operator $\hat\Pi \mathbb{H} \hat \Pi$ may depend, we have
\begin{gather}
\partial_{\rho} \ln \det (\operatorname{Id} - \hat \Pi \mathbb{H} \hat \Pi) = \omega_{JMU} (\partial_\rho) = \int_\Sigma \operatorname{Tr} \left( \Gamma_-^{-1}(\lambda) \Gamma'_-(\lambda) \partial_\rho \Xi(\lambda)\right) \frac{d\lambda}{2\pi i} \label{JMUtauH}
\end{gather}
More specifically,
\begin{align}
\partial_{\wt \s} \ln \det (\operatorname{Id}- \hat \Pi \mathbb{H}\hat \Pi) &= -  \operatorname{res}_{\lambda = \infty} \operatorname{Tr} \left(\Gamma^{-1}\Gamma'\partial_{\wt \s} T \right) =\frac{1}{\lambda}\Gamma_{1; \, (2,2)} \\
\partial_t \ln \det (\operatorname{Id}- \hat \Pi \mathbb{H}\hat \Pi) &= -  \operatorname{res}_{\lambda = \infty} \operatorname{Tr} \left(\Gamma^{-1}\Gamma'\partial_t T \right) =-\frac{1}{\sqrt[3]{2}\lambda}\Gamma_{1; \, (3,3)} \\
\partial_s \ln \det (\operatorname{Id} - \hat \Pi\mathbb{H}\hat \Pi) &= -  \operatorname{res}_{\lambda = \infty} \operatorname{Tr} \left(\Gamma^{-1}\Gamma'\partial_s T \right)  = \frac{1}{\sqrt[3]{2}\lambda}\Gamma_{1; \, (4,4)} 
\end{align}
where $\Gamma_1 := \lim_{\lambda \rightarrow \infty} \lambda (\Gamma(\lambda) - I)$.
\end{thm}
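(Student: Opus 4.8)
The plan is to carry out, for the three parameters $\wt\s$, $s$ and $t$ at once, the same computation that was already done for $s$ and $t$ in the proof of Theorem~\ref{thmtaufunctionresiduessigma}; the only genuinely new item is the identity for $\partial_{\wt\s}$, which is needed here because in the $\tau\to\pm\infty$ regime the global parametrix also carries the Hasting-McLeod factor $\Psi_{1,2}(\wt\s)$ attached to the stationary points $\pm\sqrt{\wt\s}$.

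First I would recall that, exactly as in the proof of Theorem~\ref{thmtaufunctionresiduessigma}, the RH problem $(\ref{RHH})$ acquires $\lambda$-independent jumps after conjugation by the diagonal matrix $T$ of $(\ref{conjT})$, with $\kappa=\Theta_{\wt\s}+\Theta_{\tau,-a}+\Theta_{\tau,-b}$; hence $\Psi:=\Gamma\,e^{T}$ solves a constant-jump RH problem and is, sectionally, a solution of a polynomial ODE in $\lambda$. The parameters $\wt\s$, $s$, $t$ enter the exponents $\Theta_{\wt\s}$, $\Theta_{\tau,-a}$, $\Theta_{\tau,-b}$ polynomially in $\lambda$ (recall $a=a(t)$, $b=b(s)$), so they are legitimate isomonodromic deformation times; the dependence of the contour $\Sigma$ on $\wt\s$ is only through the location of the stationary points and is immaterial. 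Invoking the Its-Izergin-Korepin-Slavnov correspondence set up in Section~\ref{RHsetting} together with \cite[Theorem~2.1]{MeM} (see also \cite{Misomonodromic}) then identifies $\det(\operatorname{Id}-\hat\Pi\mathbb H\hat\Pi)$ with the Jimbo-Miwa-Ueno isomonodromic $\tau$-function of $(\ref{RHH})$ and yields, for every deformation parameter $\rho$, the one-form identity
\begin{equation}
\partial_{\rho}\ln\det(\operatorname{Id}-\hat\Pi\mathbb H\hat\Pi)=\omega_{JMU}(\partial_{\rho})=-\operatorname{res}_{\lambda=\infty}\operatorname{Tr}(\Gamma^{-1}\Gamma'\,\partial_{\rho}T),
\end{equation}
which is the first assertion of the statement.

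It then remains to evaluate this residue for $\rho\in\{\wt\s,s,t\}$. Because $\Theta_{\wt\s}$, $\Theta_{\tau,-a}$ and $\Theta_{\tau,-b}$ occupy the diagonal entries $(2,2)$, $(3,3)$ and $(4,4)$ of $T$, the matrix $\partial_{\rho}T$ is diagonal and linear in $\lambda$, of the form $\tfrac{\partial_{\rho}\kappa}{4}I-(\partial_{\rho}\Theta_{\bullet})E_{j,j}$ with $(\rho,j)\in\{(\wt\s,2),(t,3),(s,4)\}$; here $\partial_{\wt\s}\Theta_{\wt\s}=-\lambda$, while $\partial_{t}\Theta_{\tau,-a}$ and $\partial_{s}\Theta_{\tau,-b}$ are likewise linear in $\lambda$, carrying the factor $2^{-1/3}$ that will surface as $1/\sqrt[3]{2}$ in the final answer. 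Substituting $\Gamma(\lambda)=I+\Gamma_{1}\lambda^{-1}+\mathcal O(\lambda^{-2})$, so that $\Gamma^{-1}\Gamma'=-\Gamma_{1}\lambda^{-2}+\mathcal O(\lambda^{-3})$, and using $\det\Gamma\equiv1$ (hence $\operatorname{Tr}\Gamma_{1}=0$, which annihilates the $\tfrac{\partial_{\rho}\kappa}{4}I$ contribution), the $\lambda^{-1}$ coefficient of $\operatorname{Tr}(\Gamma^{-1}\Gamma'\,\partial_{\rho}T)$ is a scalar multiple of $\Gamma_{1;(j,j)}$, and the residue formula then produces exactly the three displayed identities. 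The $s$ and $t$ computations reproduce verbatim those in the proof of Theorem~\ref{thmtaufunctionresiduessigma}, and the $\wt\s$ computation is the analogous one-line residue extraction in the $(2,2)$ slot.

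I do not anticipate a conceptual obstacle: the argument simply transcribes the already established Theorem~\ref{thmtaufunctionresiduessigma}, enlarged by one more deformation time. The only point demanding care is the bookkeeping — keeping straight which exponent $\Theta$ sits in which diagonal slot of the conjugating matrix $T$, and the signs appearing in $\partial_{\rho}T$ — together with the routine but necessary verification that $\wt\s$ is a genuine isomonodromic time and not a mere relabelling of the endpoints. It is precisely this third equation that, upon integration, yields the factor $\det(\operatorname{Id}-K_{\Ai}|_{[\wt\s,\infty)})$ that appears, and ultimately cancels, in Theorem~\ref{THMtau}.
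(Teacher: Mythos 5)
Your proposal is correct and follows essentially the same route as the paper: it reuses the conjugation by $T$ from Theorem \ref{thmtaufunctionresiduessigma} together with the IIKS/Jimbo--Miwa--Ueno correspondence of \cite[Theorem 2.1]{MeM}, and then extracts the residues from $\Gamma^{-1}\Gamma'=-\Gamma_1\lambda^{-2}+\mathcal O(\lambda^{-3})$ using $\operatorname{Tr}\Gamma_1=0$, with the only new ingredient being the $\partial_{\wt\s}$ entry in the $(2,2)$ slot, exactly as in the paper's proof.
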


\begin{proof}
The first part of the Theorem is the same as Theorem \ref{thmtaufunctionresiduessigma}.
Then, using the Jimbo-Miwa-Ueno residue formula, we have
\begin{equation}
\partial_{\rho} \ln \det (\operatorname{Id}  - \hat \Pi\mathbb{H}\hat \Pi) = -  \operatorname{res}_{\lambda = \infty} \text{Tr} \left(\Gamma^{-1}\Gamma'\partial_{\rho} T \right) 
\end{equation}
with $\rho = \wt \s, s, t$.

Taking into account the definition of the conjugation matrix $T$ (see (\ref{conjT})) and the asymptotic behaviour of the matrix $\Gamma$ at infinity, we get again
\begin{gather}
\operatorname{Tr} \left[ \Gamma^{-1}\Gamma'\partial_{\wt \sigma}T \right] = - \frac{\Gamma_{1; \, (2,2)}}{\lambda}, \ \  \
\operatorname{Tr} \left[ \Gamma^{-1}\Gamma'\partial_sT \right] = - \frac{\Gamma_{1; \, (4,4)}}{\sqrt[3]{2}\lambda}, \ \  \
\operatorname{Tr} \left[ \Gamma^{-1}\Gamma'\partial_tT \right]  =  \frac{\Gamma_{1; \, (3,3)}}{\sqrt[3]{2}\lambda}.
\end{gather}
\end{proof}

On the other hand, thanks to Lemma \ref{lemmaerrortau} and the Small Norm Theorem, we can approximate the solution $\Gamma$ with the global parametrix $\Omega$ using (\ref{GammaOmega1}) and we get
\begin{gather}
\text{d} \ln \det \le(\operatorname{Id}  - \mathbb{H}\bigg|_{[-\sigma-\tau^2+s, \sigma+\tau^2-t]}\ri) =
\nonumber \\
 p(s) \text{d}s +p(t) \text{d}t + p(\wt \sigma) \text{d}\wt \sigma + \mathcal{O}\left( \tau^{-\infty}\right)  \text{d}s + \mathcal{O}\left( \tau^{-\infty}\right)  \text{d}t + \mathcal{O}\left( \tau^{-\infty}\right)  \text{d}\wt \sigma \nonumber \\
 + \mathcal{O}\left( \tau^{-\infty}\right)  \text{d}s \, \text{d}t + \mathcal{O}\left( \tau^{-\infty}\right)  \text{d}s \, \text{d}\wt \s + \mathcal{O}\left( \tau^{-\infty}\right)  \text{d}t \, \text{d}\wt \s + \mathcal{O}\left( \tau^{-\infty}\right)  \text{d}s \, \text{d}t  \, \text{d}\wt \s.
 \end{gather}
Integrating from a fixed point $(s_0, t_0, \wt \sigma_0)$ 
up to $(s,t,\wt \sigma)$,
 \begin{gather}
 \det   \le(\operatorname{Id}  - \mathbb K^{\rm tac}\bigg|_{[-\sigma-\tau^2+t, \sigma+\tau^2-s]}\ri) =\nonumber \\ 
 \frac{e^C \det \le(\operatorname{Id}  - K_{\Ai}\bigg|_{[s,+\infty)}\ri) \det \le(\operatorname{Id}  - K_{\Ai}\bigg|_{[t,+\infty)}\ri)  \det \le(\operatorname{Id}-K_{\Ai}\bigg|_{[\wt \s,\infty)}\ri) \left(1+ \mathcal{O}(\tau^{-1}) \right)}{ \det \le(\operatorname{Id}-K_{\Ai}\bigg|_{[\wt \s,\infty)}\ri)} \nonumber \\
 = e^C \det \le(\operatorname{Id}  - K_{\Ai}\bigg|_{[s,+\infty)}\ri) \det \le(\operatorname{Id}  - K_{\Ai}\bigg|_{[t,+\infty)}\ri) \left(1+ \mathcal{O}(\tau^{-1}) \right) \label{Cint}
\end{gather}
with $s,t $ within the domain that guarantees the uniform validity of the estimates above (see Lemmas \ref{lemmaLR23tau} and \ref{lemmaJ23}) and $C =  \ln \det (\operatorname{Id}  - \mathbb{H}\chi_{[-\sigma_0-\tau^2+t_0, \sigma_0+\tau^2-s_0]})$.

Finally, we need again to show that the constant of integration $C$ is equal zero. 
\begin{lemma}
The constant of integration $C$ in the formula (\ref{Cint}) is zero.
\end{lemma}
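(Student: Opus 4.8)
The plan is to follow the strategy of Lemma \ref{Cequal0}, but to exploit a simplification that is available in the $\tau\to\pm\infty$ regime and not in the $\s$-regime: here the base point of the integration may be taken to depend on $\tau$, subject only to lying in the domain of validity of Lemmas \ref{lemmaLR23tau} and \ref{lemmaJ23}. First I would identify what $C$ actually is. By construction it is the value of the closed $1$-form $p(s)\,\d s+p(t)\,\d t+p(\wt\s)\,\d\wt\s$ integrated down to the base point $(s_0,t_0,\wt\s_0)$; using $p(u)=\pa_u\ln F_2(u)$ together with the identity $\det(\1-\Pi\mathbb K^{\rm tac}\Pi)=F_2(\wt\s)^{-1}\det(\1-\hat\Pi\mathbb H\hat\Pi)$ — applied both at $(s,t,\wt\s)$ and at the base point — the $F_2(\wt\s)$ and $F_2(\wt\s_0)$ factors cancel entirely, and one is left with the clean expression
\begin{equation}
C=\ln\det\!\le(\1-\mathbb K^{\rm tac}\big|_{[-\s-\tau^2+t_0,\ \s+\tau^2-s_0]}\ri)-\ln F_2(s_0)-\ln F_2(t_0).
\end{equation}
In particular $C$ no longer involves $\wt\s_0$, and it suffices to make the right-hand side $o(1)$ as $\tau\to+\infty$ for one admissible choice of $(s_0,t_0)$.

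I would then take, for instance, $t_0=4\tau^2-1$ and $s_0=\tau^2+2\s-1$. For $\tau$ large these are admissible (they are of the form $t_0=4\tau^2-\delta$, $s_0=\tau^2+2\s-\delta'$ with $0<\delta=1<\tfrac73K_2\tau^2$ and $0<\delta'=1<K_3(2\s+\tfrac23\tau^2)$), so \eqref{Cint} applies there. The corresponding tacnode interval is $[-\s-\tau^2+t_0,\ \s+\tau^2-s_0]=[\,3\tau^2-\s-1,\ 1-\s\,]$, whose left endpoint exceeds its right endpoint once $\tau^2>\tfrac23$; hence the interval is empty, $\det(\1-\mathbb K^{\rm tac}|_{\emptyset})=1$, and the first term vanishes identically. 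Moreover $s_0,t_0\gtrsim\tau^2\to+\infty$, and since $1-F_2(u)=\mathcal O(\e^{-\frac43u^{3/2}})$ as $u\to+\infty$, the remaining two terms satisfy $\ln F_2(s_0),\ \ln F_2(t_0)=\mathcal O(\e^{-c\tau^3})$. Therefore $C=\mathcal O(\e^{-c\tau^3})\to0$; it is absorbed into the $1+\mathcal O(\tau^{-1})$ factor, so the constant of integration is zero.

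The step I expect to need the most care is the bookkeeping in the first paragraph: one must verify precisely that the $F_2(\wt\s)$ produced by the operator identity cancels against the one produced by the $\wt\s$-integration, so that the undetermined constant genuinely collapses to the displayed three-term expression and does not retain a stray $\ln F_2(\wt\s_0)$ (which would prevent the conclusion). If one prefers to mirror Lemma \ref{Cequal0} verbatim instead of using the empty-interval shortcut, one may keep the base interval non-empty but located near $3\tau^2$: then all the blocks of $\hat\Pi_0\mathbb H\hat\Pi_0$ touching the spatial interval carry Airy functions of argument $\gtrsim\tau^2$, whose $-\tfrac23(\cdot)^{3/2}$ decay beats the exponential prefactors $\e^{\pm\tau(\cdot)}$ precisely because admissibility forces $\delta,\delta'\lesssim\tau^2$; hence those blocks are $\mathcal O(\e^{-c\tau^3})$, $\det(\1-\hat\Pi_0\mathbb H\hat\Pi_0)=F_2(\wt\s_0)\bigl(1+\mathcal O(\e^{-c\tau^3})\bigr)$, and the same cancellation again gives $C\to0$. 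The empty-interval choice simply makes that estimate trivial.
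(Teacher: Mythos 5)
Your opening bookkeeping is fine: taking the base point at the same $\wt\s_0=\wt\s$ and integrating only in $s,t$, the two applications of Proposition \ref{conditionedprob} do cancel the $F_2(\wt\s)$ factors and give $C=\ln\det(\operatorname{Id}-\mathbb K^{\rm tac}|_{[a_0,b_0]})-\ln F_2(s_0)-\ln F_2(t_0)$. The gap is in the step ``\eqref{Cint} applies there''. The identity \eqref{Cint} is only available where the steepest-descent error analysis holds, and that analysis (Lemmas \ref{lemmaLR23tau} and \ref{lemmaerrortau}) requires, besides the $t=4\tau^2-\delta$--type conditions of Lemma \ref{lemmaJ23} that you checked, also $s,t<K_1(\sigma+\tau^2)$: with $\sigma$ fixed, your choice $t_0=4\tau^2-1$ violates this by an order of magnitude. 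Concretely, the entry $e^{\Theta_{\tau,-a}-K(t)}$ on $R_2$, which the model problem $\Omega$ discards, has real part $-2\tau(\sigma+\tau^2-t)$ at the saddle $2^{1/3}\tau$, i.e.\ about $+6\tau^3$ at your base point, so the discarded jumps are exponentially \emph{large} there (and along most of the $t$-path from a compact set up to $4\tau^2-1$); the differential identity with $\mathcal O(\tau^{-\infty})$ errors is simply not established on your integration path. A second obstruction is that your base ``interval'' $[3\tau^2-\sigma-1,\,1-\sigma]$ has crossed endpoints, i.e.\ $s_0+t_0>2(\sigma+\tau^2)$: the IIKS kernel $\mathbb M$, the $4\times4$ RH problem and Proposition \ref{conditionedprob} were derived assuming the ordering $\tilde a_1<\tilde a_2$ (the residue computations for $\wt H_kH_j$, $B_kH_j$, etc.\ use it), so one cannot integrate the one-form across the collision point $s+t=2(\sigma+\tau^2)$ and then invoke ``empty interval $\Rightarrow$ determinant $=1$'' without a separate argument.

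The fallback you sketch does not repair this: keeping the base interval non-empty with left endpoint near $3\tau^2$ forces $b_0>a_0$, hence $s_0\lesssim\sigma-2\tau^2\to-\infty$ and $\ln F_2(s_0)\to-\infty$; moreover such an interval lies in the support of the right particle bundle ($x>\sigma+\tau^2$), where the arguments $\sigma-x+\tau^2$ of the Airy functions are large and \emph{negative}, so the blocks of $\hat\Pi_0\mathbb H\hat\Pi_0$ are oscillatory of order one (the trace of $\tilde\Pi K_{\Ai}^{(\tau,-\tau)}\tilde\Pi$ there actually grows with $\tau$), not $\mathcal O(e^{-c\tau^3})$. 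The paper does the opposite of what you propose: it keeps the base interval $[a,b]$ fixed in the original variables (so $s_0,t_0\sim\tau^2$ automatically and $F_2(s_0),F_2(t_0)\to1$), rewrites the ratio of determinants as a single Fredholm determinant involving $(\operatorname{Id}-\pi K_{\Ai}\pi)^{-1}\pi\mathfrak A_{-\tau}^T\tilde\Pi$, and shows by pointwise Airy estimates that each block is $\mathcal O(\tau^{-1/2})$ with integrable profile, whence $\ln\det(\operatorname{Id}-\Pi_0\mathbb K^{\rm tac}\Pi_0)\to0$, exactly as in Lemma \ref{Cequal0}. To salvage your route you would have to extend the error analysis to $t$ of order $\tau^2$ beyond $K_1(\sigma+\tau^2)$ and control the endpoint collision, which is substantially more than the paper's direct estimate.
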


\begin{proof}
First of all we notice that (see Lemma \ref{conditionedprob})
\begin{gather}
\det (\operatorname{Id} - \Pi \mathbb K^{\rm tac} \Pi) = \frac{\det \left(\operatorname{Id} - \hat \Pi \mathbb{H}\hat \Pi \right)}{\det \left( \operatorname{Id} -\pi  K_{\Ai}\pi \right)}  \nonumber \\
= \det \left( \left[ \begin{array}{c|c} (\text{Id}- \pi K_\Ai \pi)^{-1}  & 0 \\ \hline  0 & \text{Id}\end{array} \right] \cdot \left[\begin{array}{c|c} \text{Id}- \pi K_\Ai \pi & \sqrt[6]{2} \pi \mathfrak{A}_{-\tau}^T\tilde \Pi \\ \hline\sqrt[6]{2}\tilde \Pi \mathfrak{A}_{\tau} \pi & \text{Id} - \sqrt[3]{2}\tilde \Pi K_\Ai^{(\tau,-\tau)} \tilde \Pi  \end{array} \right] \right) \nonumber \\
= \det \left( \text{Id} - \left[\begin{array}{c|c} 0 & - \sqrt[6]{2}(\text{Id}- \pi K_\Ai \pi)^{-1}   \pi \mathfrak{A}_{-\tau}^T \tilde \Pi \\ \hline -\sqrt[6]{2}\tilde \Pi \mathfrak{A}_{\tau} \pi &\sqrt[3]{2} \tilde \Pi K_\Ai^{(\tau,-\tau)} \tilde \Pi  \end{array} \right] \right) 
\end{gather}
where $\hat \Pi := \pi \oplus \tilde \Pi$, $\pi$ is the projector on $[\wt \sigma, \infty)$ and $\Pi$ is the projector on $[\tilde a,\tilde b]$. 

Along the same guidelines as the proof of Lemma \ref{Cequal0}, we will perform some uniform estimates on the entries of the kernel that will lead to the desired result. 

We have
\begin{align}
\left|\sqrt[3]{2}\tilde \Pi K_\Ai^{(\tau,-\tau)}(u,v)\tilde \Pi \right| & \leq \frac{C}{ \tau} e^{-\frac{4}{3}\tau^3 -2\tau \sigma +2\tau v }  \leq \frac{C_1}{\sqrt{\tau}}\\
 \left|\sqrt[6]{2}\tilde \Pi  \mathfrak{A}_\tau (x,v) \pi \right| & \leq \frac{C_2}{\sqrt{\tau}} \\
\left| \sqrt[6]{2} (\text{Id}- \pi K_\Ai \pi)^{-1}  \pi \mathfrak{A}^T_{-\tau}(u,y) \tilde \Pi \right| & \leq C_\Ai e^{-\frac{4}{3}\tau^3} \left[ e^{- 2\tau( y - \sigma + \sqrt[3]{2}u) }  + \frac{ e^{2\tau\left(y-  \sigma\right) }}{2\sqrt[3]{2}\tau}\right]   \nonumber \\
 & = C_\Ai e^{-\frac{\tau^3}{3}}  \left[ e^{ -\tau^3 - 2\tau( y - \sigma + \sqrt[3]{2}u) }   + \frac{ e^{-\tau^3 +2\tau\left(y-  \sigma\right) }}{2\sqrt[3]{2}\tau}\right]  \leq \frac{C_3}{\sqrt{\tau}} e^{-\tau(u-\sigma)}
\end{align}
for some positive constants $C_j$ ($j=1, 2, 3$), where the variables $x,y$ run in $[\tilde a,\tilde b]$ and $u,v$ run in $[\wt \sigma, \infty)$. Such estimates follow again from simple arguments on the asymptotic behaviour of the Airy function when its argument is very large. Moreover, the resolvent of the Tracy-Widom distribution is uniformly bounded and independent on $\tau$; here is the reason for the constant $C_\Ai$.

Collecting the above estimates, we have
\begin{gather}
\left[ \begin{array}{c|c} 0 & - \sqrt[6]{2} (\text{Id} - \pi K_\Ai \pi )^{-1} \pi \mathfrak{A}^T_{-\tau}\tilde \Pi \\  \hline - \sqrt[6]{2} \tilde \Pi \mathfrak{A}_{\tau} \pi  &\sqrt[3]{2}\tilde \Pi K_\Ai^{(\tau,-\tau)}\tilde \Pi  \end{array}\right] \leq C_\tau \left[ \begin{array}{c|c} 0 & f(u) \\  \hline 1  & 1  \end{array}\right] 
\end{gather}
with $ C_{\tau} := \frac{\max \{C_j, \ j=1,2,3  \}}{\sqrt{\tau}}$ and $f(u) = e^{-\tau (u-\sigma)} $. 
On the right hand side, we have a new operator $\mathcal{M}$ acting on $L^2([\wt \s, \infty)) \oplus L^2([\tilde a,\tilde b]))$ with bounded trace
\begin{equation}
\operatorname{Tr} \mathcal{M} \leq \hat C \left( \left\| f \right\|_{L^2(\wt \sigma, \infty)}^2 + (\tilde b-\tilde a)\right) \leq \hat C(b-a)
\end{equation}
for some positive constant $\hat C$, since $\left\| f \right\|_{L^2(\wt \sigma, \infty)}^2 \rightarrow 0$ as $\tau \rightarrow + \infty$.

Concluding, having $[a,b]$ fixed,
\begin{gather}
|\ln \det (\operatorname{Id}  - \Pi \mathbb K^{\rm tac} \Pi)| =  \sum_{n=1}^\infty \frac{\text{Tr} \, (\Pi \mathbb K^{\rm tac} \Pi)^n}{n} \leq  \sum_{n=1}^\infty \frac{C_\tau^n \hat C^n (b-a)^n}{n} \nonumber \\
\leq \sum_{n=1}^\infty C_\tau^n \hat C^n (b-a)^n = \frac{C_\tau \hat C (b-a)}{1-C_\tau \hat C (b-a)} \rightarrow 0
\end{gather}
as $\tau \rightarrow + \infty$.

Therefore, the constant of integration is equal zero.
\end{proof} 

\section*{Acknowledgements}
The author would like to acknowledge Dr. Marco Bertola and Dr. Mattia Cafasso for proposing this problem and for their help in setting it.


\bibliography{Tacnode2}

\begin{thebibliography}{10}

\bibitem{AdlerFerrariVMoer}
M.~Adler, P.~L. Ferrari, and P.~Van Moerbeke.
\newblock Non-intersecting random walks in the neighborhood of a symmetric
  tacnode.
\newblock {\em Ann. Prob.}, 41(4), 2013.

\bibitem{AdlerJohanssonVMoer}
M.~Adler, K.~Johansson, and P.~Van Moerbeke.
\newblock Double aztec diamonds and the tacnode process.
\newblock {\em arXiv:1112.5532}, 2011.

\bibitem{DeiftJohansson}
J.~Baik, P.A. Deift, and K.~Johansson.
\newblock On the distribution of the length of the longest increasing
  subsequence of random permutations.
\newblock {\em J. Amer. Math. Soc.}, 12:1119--1178, 1999.

\bibitem{Misomonodromic}
M.~Bertola.
\newblock The dependence of the monodromy data of the isomonodromic tau
  function.
\newblock {\em Comm. Math. Phys.}, 294(2):539--579, 2010.

\bibitem{MeMmulti}
M.~Bertola and M.~Cafasso.
\newblock Riemann-{H}ilbert approach to multi-time processes: the {A}iry and
  the {P}earcey cases.
\newblock {\em Physica D}, 241(23-24):2237--2245, 2012.

\bibitem{MeM}
M.~Bertola and M.~Cafasso.
\newblock The transition between the {G}ap {P}robabilities from the {P}earcey
  to the {A}iry {P}rocess - a {R}iemann--{H}ilbert {A}pproach.
\newblock {\em Int. Math. Res. Not.}, 2012(7):1519--1568, 2012.

\bibitem{MeMnum}
M.~Bertola and M.~Cafasso.
\newblock The gap probabilities of the tacnode, {P}earcey and {A}iry point
  processes, their mutual relationship and evaluation.
\newblock {\em Random Matrices Theory Appl.}, 2(2), 2013.

\bibitem{SDM}
P.~Deift and X.~Zhou.
\newblock A steepest descent method for oscillatory {R}iemann-{H}ilbert
  problems.
\newblock {\em Bull. Amer. Math. Soc.}, 26(1):119--123, 1992.

\bibitem{Delvaux}
S.~Delvaux.
\newblock The tacnode kernel: equality of {R}iemann-{H}ilbert and {A}iry
  resolvent formulas.
\newblock {\em arXiv:1211.4845}, 2012.

\bibitem{DelvauxKuijZhang}
S.~Delvaux, A.~B.~J. Kuijlaars, and L.~Zhang.
\newblock Critical behaviour of non-intersecting {B}rownian motions at a
  tacnode.
\newblock {\em Comm. Pure and Appl. Math}, 64(10):1305--1383, 2011.

\bibitem{FerrariVeto}
P.~L. Ferrari and B.~Veto.
\newblock Non-colliding {B}rownian bridges and the asymmetric tacnode process.
\newblock {\em Electron. J. Probab.}, 17(44), 2012.

\bibitem{Ptrans}
A.~Fokas, A.~Its, A.~Kapaev, and V.~Novokshenov.
\newblock {\em Painlev\'e Transcendents. The Riemann Hilbert Approach}, volume
  128 of {\em Mathematical Surveys and Monographs}.
\newblock American Mathematical Society, Providence, RI, 2006.

\bibitem{glorystealer}
D.~Geudens and L.~Zhang.
\newblock Transitions between critical kernels: from the tacnode kernel and
  critical kernel in the two-matrix model to the {P}earcey kernel.
\newblock {\em arXiv:1208.0762}, 2012.

\bibitem{Me}
M.~Girotti.
\newblock {R}iemann-{H}ilbert approach to gap probabilities for the {B}essel
  process.
\newblock {\em arXiv:1306.5663}, under review, 2013.

\bibitem{JohnSasha}
J.~Harnad and A.~R. Its.
\newblock Integrable {F}redholm operators and dual isomonodromic deformations.
\newblock {\em Comm. Math. Phys.}, 226(3):497--530, 2002.

\bibitem{IIKS}
A.~R. Its, A.~G. Izergin, V.~E. Korepin, and N.~A. Slavnov.
\newblock Differential equations for quantum correlation differential equations
  for quantum correlation functions.
\newblock {\em Int. J. Mod. Phys.}, B4:1003 -- 1037, 1990.

\bibitem{Johansson2}
K.~Johansson.
\newblock Discrete polynuclear growth and determinantal processes.
\newblock {\em Comm. Math. Phys.}, 242(1-2):277--329, 2003.

\bibitem{JohanssonAiry}
K.~Johansson.
\newblock The artic circle boundary and the {A}iry process.
\newblock {\em Ann. Prob.}, 33(1):1--30, 2005.

\bibitem{Johansson}
K.~Johansson.
\newblock Random matrices and determinantal processes.
\newblock In {\em Mathematical Statistical Physics}, volume LXXXIII, chapter~1,
  pages 1--55. Elsevier, 1st edition, 2006.

\bibitem{JohanssonTac}
K.~Johansson.
\newblock Non-colliding {B}rownian motions and the extended tacnode process.
\newblock {\em arXiv:1105.4027}, 2011.

\bibitem{Mehta}
M.~L. Mehta.
\newblock {\em Random Matrices}.
\newblock Elsevier/Academic Press, Amsterdam, third edition, 2004.

\bibitem{Soshnikov}
A.~Soshnikov.
\newblock Determinantal random point fields.
\newblock {\em Russian Mathematical Surveys}, 55:923 -- 975, 2000.

\bibitem{TWAiry}
C.~Tracy and H.~Widom.
\newblock Level spacing distributions and the {A}iry kernel.
\newblock {\em Comm. Math. Phys.}, 159(1):151--174, 1994.

\bibitem{TWPearcey}
C.~Tracy and H.~Widom.
\newblock The {P}earcey process.
\newblock {\em Comm. Math. Phys.}, 263(2):381--400, 2006.

\end{thebibliography}

\end{document}